\pdfoutput=1
\RequirePackage{ifpdf}
\ifpdf 
\documentclass[pdftex]{sigma}
\else
\documentclass{sigma}
\fi

\numberwithin{equation}{section}

\newtheorem{Theorem}{Theorem}[section]
\newtheorem{Lemma}[Theorem]{Lemma}
\newtheorem{Proposition}[Theorem]{Proposition}
 { \theoremstyle{definition}
\newtheorem{Definition}[Theorem]{Definition}
\newtheorem{Example}[Theorem]{Example}
\newtheorem{Remark}[Theorem]{Remark} }

\newcommand{\ialpha}{i_{\alpha}}
\newcommand{\jbeta}{j_{\beta}}
\newcommand{\kgamma}{k_{\gamma}}
\newcommand{\jalpha}{j_{\alpha}}
\newcommand{\kbeta}{k_{\beta}}
\newcommand{\pgamma}{p_{\gamma}}
\newcommand{\malpha}{m_{\alpha}}
\newcommand{\mbeta}{m_{\beta}}
\newcommand{\mgamma}{m_{\gamma}}
\newcommand{\lbeta}{l_{\beta}}
\newcommand{\lalpha}{l_{\alpha}}
\newcommand{\palpha}{p_{\alpha}}
\newcommand{\kalpha}{k_{\alpha}}
\newcommand{\pbeta}{p_{\beta}}
\newcommand{\ngamma}{n_{\gamma}}
\newcommand{\igamma}{i_{\gamma}}
\newcommand{\jgamma}{j_{\gamma}}
\newcommand{\lgamma}{l_{\gamma}}
\newcommand{\nalpha}{n_{\alpha}}
\newcommand{\hgamma}{h_{\gamma}}
\newcommand{\halpha}{h_{\alpha}}
\newcommand{\idelta}{i_{\delta}}
\newcommand{\jdelta}{j_{\delta}}
\newcommand{\kepsilon}{k_{\epsilon}}
\newcommand{\rhoepsilon}{\rho_{\epsilon}}
\newcommand{\rhogamma}{\rho_{\gamma}}
\newcommand{\rhodelta}{\rho_{\delta}}
\newcommand{\ldelta}{l_{\delta}}
\newcommand{\pepsilon}{p_{\epsilon}}
\newcommand{\lepsilon}{l_{\epsilon}}
\newcommand{\ibeta}{i_{\beta}}
\newcommand{\iepsilon}{i_{\epsilon}}
\newcommand{\hepsilon}{h_{\epsilon}}
\newcommand{\hdelta}{h_{\delta}}
\newcommand{\pdelta}{p_{\delta}}

\begin{document}
\allowdisplaybreaks

\newcommand{\arXivNumber}{1903.10573}

\renewcommand{\PaperNumber}{069}

\FirstPageHeading

\ShortArticleName{Separability and Symmetry Operators for Painlev\'e Metrics}

\ArticleName{Separability and Symmetry Operators for Painlev\'e\\ Metrics and their Conformal Deformations}

\Author{Thierry DAUD\'E~$^\dag$, Niky KAMRAN~$^\ddag$ and Francois NICOLEAU~$^\S$}

\AuthorNameForHeading{T.~Daud\'e, N.~Kamran and F.~Nicoleau}

\Address{$^\dag$~D\'epartement de Math\'ematiques, UMR CNRS 8088, Universit\'e de Cergy-Pontoise,\\
\hphantom{$^\dag$}~95302 Cergy-Pontoise, France}
\EmailD{\href{mailto:thierry.daude@u-cergy.fr}{thierry.daude@u-cergy.fr}}

\Address{$^\ddag$~Department of Mathematics and Statistics, McGill University, \\
\hphantom{$^\ddag$}~Montreal, QC, H3A 2K6, Canada}
\EmailD{\href{mailto:niky.kamran@mcgill.ca}{niky.kamran@mcgill.ca}}

\Address{$^\S$~Laboratoire de Math\'ematiques Jean Leray, UMR CNRS 6629,\\
\hphantom{$^\S$}~2 Rue de la Houssini\`ere BP 92208, F-44322 Nantes Cedex 03, France}
\EmailD{\href{mailto:francois.nicoleau@univ-nantes.fr}{francois.nicoleau@univ-nantes.fr}}

\ArticleDates{Received March 27, 2019, in final form September 05, 2019; Published online September 16, 2019}

\Abstract{Painlev\'e metrics are a class of Riemannian metrics which generalize the well-known separable metrics of St\"ackel to the case in which the additive separation of variables for the Hamilton--Jacobi equation is achieved in terms of groups of independent variables rather than the complete orthogonal separation into ordinary differential equations which characterizes the St\"ackel case. Painlev\'e metrics in dimension $n$ thus admit in general only $r<n$ linearly independent Poisson-commuting quadratic first integrals of the geodesic flow, where $r$ denotes the number of groups of variables. Our goal in this paper is to carry out for Painlev\'e metrics the generalization of the analysis, which has been extensively performed in the St\"ackel case, of the relation between separation of variables for the Hamilton--Jacobi and Helmholtz equations, and of the connections between quadratic first integrals of the geodesic flow and symmetry operators for the Laplace--Beltrami operator. We thus obtain the generalization for Painlev\'e metrics of the Robertson separability conditions for the Helmholtz equation which are familiar from the St\"ackel case, and a formulation thereof in terms of the vanishing of the off-block diagonal components of the Ricci tensor, which generalizes the one obtained by Eisenhart for St\"ackel metrics. We also show that when the generalized Robertson conditions are satisfied, there exist $r<n$ linearly independent second-order differential operators which commute with the Laplace--Beltrami operator and which are mutually commuting. These operators admit the block-separable solutions of the Helmholtz equation as formal eigenfunctions, with the separation constants as eigenvalues. Finally, we study conformal deformations which are compatible with the separation into blocks of variables of the Helmholtz equation for Painlev\'e metrics, leading to solutions which are $R$-separable in blocks. The paper concludes with a set of open questions and perspectives.}

\Keywords{Painlev\'e metrics; Killing tensors; Helmholtz equation; $R$-separability; symmetry operators; Robertson conditions}

\Classification{53B21; 70H20; 81Q80}

\section{Introduction and statement of results}
Painlev\'e metrics \cite{Pain1897, Per1990} are a class of Riemannian metrics that provide a broad generalization of the well-known separable metrics of St\"ackel \cite{Eis1934, Sta1897} to the case in which the Hamilton--Jacobi equation for the geodesic flow can be additively separated into partial differential equations depending on groups of independent variables rather ordinary differential equations resulting from a complete orthogonal separation. In particular, while St\"ackel metrics in dimension $n$ admit $n$ linearly independent Poisson-commuting quadratic first integrals of the geodesic flow, Painlev\'e metrics in dimension $n$ will admit only $r<n$ such integrals in general, where $r$ denotes the number of groups of variables.

Our goal in this paper is to carry out the extension to Painlev\'e metrics of the well-known results \cite{Ben2016, BCR1-2002,BCR2-2002, Eis1934, KaMi1980, KaMi1981, KaMi1984} which relate in the St\"ackel case the additive separation of variables for the Hamilton--Jacobi equation to the multiplicative separation of variables for the Helmholtz equation, and the existence of quadratic first integrals of the geodesic flow to that of symmetry operators for the Laplace--Beltrami operator. We shall thus obtain the generalization to Painlev\'e metrics of the Robertson separability conditions~\cite{Rob1928} for the Helmholtz equation for St\"ackel metrics, and a formulation of these generalized Robertson conditions in terms of the vanishing of the off-block diagonal components of the Ricci tensor, thereby extending the classical result proved by Eisenhart \cite{Eis1934} for St\"ackel metrics. We shall also show that when the generalized Robertson conditions are satisfied, there exist $r<n$ linearly independent second-order differential operators which commute between themselves and with the Laplace--Beltrami operator. These operators will be shown to admit the block-separable solutions of the Helmholtz equation as formal eigenfunctions, with the separation constants arising from the separation into groups variables as eigenvalues. Finally, we shall also discuss conformal deformations of Painlev\'e metrics satisfying a further generalization of the Robertson conditions, which is compatible with the separation into blocks of variables of the Helmholtz equation, leading to solutions which are $R$-separable in blocks.

Before describing our results in further detail, we should remark that independently of the interest of Painlev\'e metrics from the point of view of separation of variables, a key motivation for our study lies in the goal of constructing geometric models of manifolds with boundary endowed with Painlev\'e metrics, with the goal of investigating the anisotropic Calder\'on problem in this class of geometries. Recall that the anisotropic Calder\'on problem consists in recovering the metric of a Riemannian manifold with boundary from the knowledge of the Dirichlet-to-Neumann map defined by the Laplace--Beltrami operator. The anisotropic Calder\'on problem is at the center of a great amount of current research activity. We refer to \cite{DKN2016,DKN2019a, GT2013, KKL2001, KS2014, LU2001, LeU1989, Sa2013, Uhl2009, Uhl2014} and the references therein for surveys of recent results on this problem. We have recently investigated the anisotropic Calder\'on problem at fixed energy for geometric models consisting of classes of St\"ackel manifolds with boundary, where the separation of variables for the Helmholtz equation allows the decomposition of the Dirichlet-to-Neumann map onto a~basis of joint eigenfunctions of the symmetry operators resulting from the complete separation of variables, enabling us to obtain a series of uniqueness and non-uniqueness results for the Calder\'on problem, with no a-priori assumptions of analyticity, or on the existence of isometries \cite{DKN2015, DKN2016, DKN2017, DKN2018a, DKN2019b, Gob2016}. In the case of Painlev\'e metrics, the separation of the Helmholtz into groups of variables and the concomitant families of commuting symmetry operators admitted by these metrics will serve as an effective starting point for the investigation of the anisotropic Calder\'on problem in this more general setting.

In order to put the results of the present paper in context, we first briefly recall some well-known definitions and results pertaining to St\"ackel metrics and their separability properties. Throughout the paper, we shall assume for simplicity that the manifolds, metrics and maps being considered are smooth, although many of the results that we quote or obtain can be shown to hold under weaker differentiability properties. Recall \cite{BCR1-2002, Eis1934, KaMi1980, Sta1897} that a {\emph {St\"ackel metric}} on an $n$-dimensional manifold $M$ is a Riemannian metric $g$ for which there exist local coordinates $\big(x^{1},\dots,x^{n}\big)$ in which the metric has the expression
\begin{gather}\label{Stackelform}
{\rm d}s^{2}=g_{ij}{\rm d}x^{i}{\rm d}x^{j}=\frac{\det S}{s^{11}}\big({\rm d}x^{1}\big)^{2}+\cdots+\frac{\det S}{s^{n1}}\big({\rm d}x^{n}\big)^{2},
\end{gather}
where $S$ is a {\emph{St\"ackel matrix}}, that is a non-singular $n\times n$ matrix $S=(s_{ij})$ of the form
\begin{gather}\label{StackelMatrix}
S=\left(
\begin{matrix}
s_{11}\big(x^{1}\big)&\dots&s_{1n}\big(x^{1}\big) \\
\vdots& & \vdots \\
s_{n1}\big(x^{n}\big)&\dots&s_{nn}\big(x^{n}\big)
\end{matrix}
\right),
\end{gather}
and $s^{ij}$ denotes the cofactor of the component $s_{ij}$ of the matrix~$S$. St\"ackel matrices thus have the property that for each $1\leq i \leq n$, their $i$-th row depends only on the $i$-th local coordinate~$x^{i}$, and that the cofactor $s^{ij}$ is independent of the $i$-th local coordinate $x^{i}$. Furthermore, the diagonal components of the St\"ackel metric~(\ref{Stackelform}) are given by the inverses of the entries of the first row of the inverse St\"ackel matrix $A=S^{-1}$.

The importance of St\"ackel metrics stems from the fact that they constitute the most general class of Riemannian metrics admitting orthogonal coordinates for which the Hamilton--Jacobi equation
\begin{gather}\label{HJ}
g^{ij}\partial_{i}W\partial_{j}W=E,
\end{gather}
for the geodesic flow of $(M,g)$, where $E$ denotes a positive real constant, admits a complete integral obtained by additive separation of variables into ordinary differential equations. It is useful at this stage to recall that a complete integral of (\ref{HJ}) is defined as a parametrized family of solutions
\begin{gather}\label{complint}
W=W\big(x^{1},\dots,x^{n};a_{1},\dots,a_{n}\big),\qquad a_{1}:=E,
\end{gather}
defined over the domain $U\subset M$ of the local coordinates $\big(x^{1},\dots,x^{n}\big)$ and depending smoothly on $n$ parameters $(a_{1},\dots,a_{n})$ defined on an open subset $A\subset {\mathbb{R}}^{n}$, such that the rank condition
\begin{gather*}
\det \left(\frac{\partial^{2}W}{\partial x^{i}\partial a_{j}}\right) \neq 0,
\end{gather*}
is satisfied throughout the open set $U\times A$.

It is easily verified that the Hamilton--Jacobi equation (\ref{HJ}) will admit an additively separable complete integral $W\big(x^{1},\dots, x^{n};a_{1},\dots a_{n}\big)$ of the form
\[
W=W_{1}\big(x^{1};a_{1},\dots, a_{n}\big)+\cdots+W_{n}\big(x^{n};a_{1},\dots, a_{n}\big),
\]
if and only if the summands $W_{\alpha}$ satisfy the set of separated ordinary differential equations given by
\[
\left(\frac{{\rm d}W_{i}}{{\rm d}x^{i}}\right)^{2}=\sum_{j=1}^{n}s_{ij}\big(x^{i}\big)a_{j}.
\]
The $n$ parameters $(a_{1},\dots , a_{n})$ appearing in the expression of the additively separable complete integral~(\ref{complint}) thus correspond to the separation constants arising from the complete separation of variables of the Hamilton--Jacobi equation into ordinary differential equations.
One of the key consequences of this complete separation of variables property is that the geodesic flow of an $n$-dimensional St\"ackel metric admits a linearly independent set of $n-1$ mutually Poisson-commuting quadratic first integrals, given by
\begin{gather}\label{FirstInt}
K_{(l)} = K_{(l)}^{ij} p_i p_j = \sum_{j=1}^{n}a_{lj} p_{j}^{2}, \qquad 2\leq l \leq n
\end{gather}
with
\[ \big\{K_{(l)},K_{(m)}\big\}=0,\qquad \text{for} \quad 1\leq l,m \leq n,
\]
where $A=(a_{ij})$ denotes as above the inverse of the St\"ackel matrix $S$ given by (\ref{StackelMatrix}). Note that with the notations of (\ref{FirstInt}), we have $K_{(1)}=H$, where
\[
H=g^{ij}p_{i}p_{j},
\]
denotes the Hamiltonian for the geodesic flow.

 A question closely related to the additive separation of the Hamilton--Jacobi equation is that of the complete multiplicative separation of the Helmholtz equation
 \begin{gather}\label{Helm}
-\Delta_{g} u =\lambda u,
\end{gather}
where
\[
\Delta_{g} = \frac{1}{\sqrt{|g|}}\partial_{i}\Big(\sqrt{|g|}g^{ij}\partial_{j}\Big),\qquad |g|=\det(g_{ij}),
\]
denotes the Laplace--Beltrami operator on~$(M,g)$ and $\lambda$ denotes a non vanishing real constant, into ordinary differential equations. By complete multiplicative separation, we mean, follo\-wing~\cite{BCR1-2002}, the existence of a parametrized family of solutions $u$ defined on a domain $U\subset M$ with local coordinates $\big(x^{1},\dots, x^{n}\big)$ of the form
\[
u\big(x^{1},\dots, x^{n};a_{1},\dots, a_{2n}\big)=\prod_{i=1}^{n}u_{i}\big(x^{1},\dots, x^{n};a_{1},\dots, a_{2n}\big),
\]
depending smoothly on $2n$ parameters $(a_{1},\dots, a_{2n})$ defined on open subset $A\subset {\mathbb R}^{2n}$, satisfying the rank condition
\begin{gather*}
\det \left(
\begin{matrix}
\dfrac{\partial v_{1}}{\partial a_{1}}&\dots&\dfrac{\partial v_{1}}{\partial a_{2n}} \\
\vdots& & \vdots \\
\dfrac{\partial v_{n}}{\partial a_{1}}&\dots&\dfrac {\partial v_{n}}{\partial a_{2n}}\vspace{1mm}\\
\dfrac{\partial w_{1}}{\partial a_{1}}&\dots&\dfrac{\partial w_{1}}{\partial a_{2n}} \\
\vdots& & \vdots \\
\dfrac{\partial w_{n}}{\partial a_{1}}&\dots&\dfrac {\partial w_{n}}{\partial a_{2n}}
\end{matrix}
\right) \neq 0,
\end{gather*}
at every point of $U\times A$, where
\[
v_{i}=\frac{u_{i}'}{u_{i}} \qquad w_{i}=\frac{u_{i}''}{u_{i}}.
\]
This separation requires that additional conditions, known as the {\emph{Robertson conditions}}, and given by
\[
\partial_{i}\Gamma_{j}=0,\qquad 1\leq i\neq j\leq n,
\]
where
\[
\Gamma_{i}=-\partial_{i}\left[\log \frac{(\det S)^{\frac{n}{2}-1}s^{i1}}{\big(s^{11}\cdots s^{n1}\big)^{\frac{1}{2}}}\right],
\]
be satisfied. We refer to \cite{BCR1-2002,BCR2-2002,Eis1934,KaMi1980,KaMi1981,Rob1928} for detailed proofs of the fact that the Robertson conditions are necessary and sufficient for complete multiplicative separation of the Helmholtz equation. The Robertson conditions were interpreted by Eisenhart \cite{Eis1934} in terms of the vanishing of the off-diagonal components of the Ricci tensor of the underlying St\"ackel metric, that is
\begin{gather}\label{classicalRobertson}
R_{ij}=0\qquad \text{for}\qquad 1\leq i\neq j\leq n.
\end{gather}
When the Robertson conditions are satisfied, the Poisson-commuting quadratic first integrals of the geodesic flow give rise to $n-1$ linearly independent second-order differential operators which commute with the Laplace--Beltrami operator and also commute pairwise. Rewriting the quadratic first integrals $K_{(l)}$ defined by (\ref{FirstInt}) in the form
\[
K_{(l)}=K_{(l)}^{ij}p_{i}p_{j},
\]
these commuting operators, denoted by $\Delta_{K_{(l)}}$, are of the form
\[
\Delta_{K_{(l)}}=\nabla_{i}\big(K_{(l)}^{ij}\nabla_{j}\big),\qquad 2\leq l \leq n,
\]
where $\nabla_i$ denotes the Levi-Civita connection on $(M,g)$. These operators, which are often referred to as {\emph{symmetry operators}}, admit the separable solutions of the Helmholtz equation as (formal) eigenfunctions. We will not give any further details on symmetry operators at this stage, nor shall we say anything about the proofs of the results we have just recalled since we shall shortly state and prove generalizations of these to the case of Painlev\'e metrics, which admit all St\"ackel metrics as a special case.

We conclude these preliminaries by remarking that the above setting may be expanded significantly by considering conformal deformations of St\"ackel metrics which are compatible with the complete separation of the Helmholtz equation into ordinary differential equations, thus giving rise to the more general notion of {\emph{$R$-separability}} for the Helmholtz equation. Again, we shall not give any additional details on these topics at this stage since conformal deformations and $R$-separability will be studied in the remainder of this paper in the more general setting of Painlev\'e metrics. We refer to \cite{BCR1-2002, BCR2-2002, BCR2005, CR2006,CR2007, KaMi1980, KaMi1981, KaMi1984} for lucid accounts of the key results on the separability and $R$-separability properties of St\"ackel metrics, their connection to Killing tensors, quadratic first integrals of the geodesic flow and symmetry operators for the Laplace--Beltrami operator. We also refer to \cite{Ben2016, Mi1988} for recent surveys on separability on Riemannian manifolds and to \cite{Car1977} for a penetrating analysis of the relations between quadratic first integrals of the geodesic flow, symmetry operators and conserved currents, in the general setting of Riemannian or pseudo-Riemannian manifolds.

With these preliminaries at hand, we are now ready to introduce the class of Painlev\'e met\-rics~\cite{Pain1897, Per1990}. As stated above, Painlev\'e metrics arise as a natural generalization of St\"ackel metrics to the case in which one no longer seeks complete separation of the Hamilton--Jacobi equation into ordinary differential equations, but rather separation into partial differential equations involving groups of variables. The separable coordinates admitted by Painlev\'e metrics are thus generally {\emph {not orthogonal}}, although they are orthogonal with respect to groups of variables. Let us recall that our goal in this paper is to carry out for Painlev\'e metrics the analogue of the separability and $R$-separability analyses of the Helmholtz equation which has been extensively worked out for St\"ackel metrics in \cite{BCR1-2002,BCR2-2002,CR2006,KaMi1980,KaMi1981, KaMi1984}, and to show that the separability into groups of variables gives rise to vector spaces of mutually commuting symmetry operators for the Laplace--Beltrami operator, the dimension of which is determined by the number of groups of variables. In particular, we will generalize to the case of Painlev\'e metrics the Robertson conditions and the characterization thereof in terms of the Ricci tensor. We now proceed to define the class of Painlev\'e metrics along lines similar to the ones used above for St\"ackel metrics.

Let $(M,g)$ be an $n$-dimensional Riemannian manifold and let ${\bf x}=\big(x^{1},\dots , x^{n}\big)$ denote a set of local coordinates on $M$. We shall consider partitions of ${\bf x}$ into $r\geq 2$ groups of local coordinates,
\[
{\bf x}=\big({{\bf x}^{1}},\dots,{{\bf x}^{r}}\big),
\]
where
\[
{\bf {x}}^{\alpha}=\big(x^{i_{\alpha}}\big),\qquad 1_{\alpha}\leq i_{\alpha} \leq l_{\alpha},\qquad \text{and}\qquad \sum_{\alpha=1}^{r}||l_{\alpha}||=n,
\]
where $||l_{\alpha}||$ denotes the number of local coordinates in the group of label $\alpha$. Latin indices $1\leq i,j,\ldots \leq n$ will be used to label the local coordinates on $M$, greek indices $\alpha,\beta,\dots$ to label the $r$ groups of local coordinates, and hybrid indices $i_{\alpha}$, $1_{\alpha} \leq i_{\alpha}\leq l_{\alpha}$ to denote the local coordinates within the group $\bf{x^{\alpha}}$. Unless there is an ambiguity in the notation being used, in which case we will write out the summation signs explicitly, we shall apply the summation convention with the above range of indices.
A {\em generalized St\"ackel matrix} is a non-singular $r\times r$ matrix-valued function $S$ on $M$ of the form
\begin{gather}\label{genStackelmat}
S=\left(
\begin{matrix}
s_{11}\big({{\bf x}^{1}}\big)&\dots&s_{1r}\big({{\bf x}^{1}}\big) \\
\vdots& & \vdots \\
s_{r1}\big({{\bf x}^{r}}\big)&\dots&s_{rr}\big({\bf x}^{r}\big)
\end{matrix}
\right),
\end{gather}
where for each $1\leq \alpha \leq r$,
\[
{\bf x}^{\alpha}=\big(x^{i_{\alpha}}\big),\qquad 1_{\alpha}\leq i_{\alpha} \leq l_{\alpha}.
\]
Let $s^{\alpha \beta}$ denote the cofactor of the component $s_{\alpha \beta}$ of $S$. We note that the cofactor $s^{\beta \gamma}$ will not depend on the group of variables ${\bf x}^{\beta}=\big(x^{i_{\beta}}\big)$, $1_{\beta}\leq i_{\beta} \leq l_{\beta}$. Moreover we shall assume that
\begin{gather} \label{StackelPositiveness}
\frac{\det S}{s^{\alpha 1}} > 0, \qquad \forall\, 1\leq \alpha \leq r ,
\end{gather}
in order to work with {\emph{Riemannian}} Painlev\'e metrics.

\begin{Definition} \label{Painleve}
Let $S$ be a generalized St\"ackel matrix satisfying (\ref{StackelPositiveness}). A Painlev\'e metric is a~Riemannian metric~$g$ for which there exist local coordinates ${\bf x}=\big({\bf x}^{1},\dots,{\bf x}^{r}\big)$ such that
\begin{gather}\label{Painleveform}
{\rm d}s^{2}=g_{ij}{\rm d}x^{i}{\rm d}x^{j}=\frac{\det S}{s^{11}} G_1 +\cdots+\frac{\det S}{s^{r1}} G_r,
\end{gather}
where each of the quadratic differential forms
\begin{gather}\label{blockmetric}
G_{\alpha}=G_{\alpha}({{\bf x}^{\alpha}})=\sum_{i_{\alpha}=1_{\alpha}}^{l_{\alpha}}\sum_{j_{\alpha}=1_{\alpha}}^{l_{\alpha}}G_{\alpha}\big({{\bf x}^{\alpha}}\big)_{i_{\alpha}j_{\alpha}}{\rm d}x^{i_{\alpha}}{\rm d}x^{j_{\alpha}},\qquad 1\leq \alpha \leq r,
\end{gather}
is positive-definite in its arguments and depends only on the group of variables ${\bf x}^{\alpha}$.
\end{Definition}
We may thus write the metric (\ref{Painleveform}) in block-diagonal form as
\[
{\rm d}s^{2}=\sum_{\alpha =1}^{r}\sum_{\ialpha = 1_{\alpha}}^{\lalpha}\sum_{\jalpha = 1_{\alpha}}^{\lalpha}(g_{\alpha})_{\ialpha \jalpha}{\rm d}x^{\ialpha}{\rm d}x^{\jalpha}=\sum_{\alpha =1}^{r} \frac{\det S}{s^{\alpha 1}} \sum_{\ialpha = 1_{\alpha}}^{\lalpha}\sum_{\jalpha = 1_{\alpha}}^{\lalpha}(G_{\alpha})_{i_{\alpha}j_{\alpha}}{\rm d}x^{i_{\alpha}}{\rm d}x^{j_{\alpha}}.
\]
It is important to note that even though Painlev\'e metrics (\ref{Painleveform}) are block-diagonal, and each quadratic differential form (\ref{blockmetric}) defines a Riemannian metric on the submanifolds defined by the level sets ${\bf x}^{\beta}={\bf{c}}^{\beta}$, $\beta\neq \alpha$, {\emph {a Painlev\'e metric is generally not a direct sum of Riemannian metrics, nor a warped product, except for special non-generic cases}}. We also note that Painlev\'e metrics of semi-Riemannian (and in particular Lorentzian) signature can readily be defined by modifying the requirement that each of the quadratic differential forms $G_{\alpha}$ given by (\ref{blockmetric}) be positive-definite to one in which $G_{{\alpha}}$ is assumed to be of signature $(p_{\alpha},q_{\alpha})$ with $p_{\alpha}+q_{\alpha}=l_{\alpha}$. Finally we also remark that the Painlev\'e form (\ref{Painleveform}) is obviously invariant under smooth and invertible changes of coordinates of the form ${\tilde{\bf x}}^{\alpha}={\bf{f}}^{\alpha}({\bf x}^{\alpha})$, where $1\leq \alpha \leq r$.

Let us call \emph{block orthogonal coordinates} a system of coordinates $({\bf x}^{\alpha})$ such that the metric $g$ has the form
\begin{gather*} 
 g=\sum_{\alpha =1}^{r} c_\alpha G_\alpha = \sum_{\alpha =1}^{r} c_\alpha \sum_{\ialpha = 1_{\alpha}}^{\lalpha}\sum_{\jalpha = 1_{\alpha}}^{\lalpha}(G_{\alpha})_{i_{\alpha}j_{\alpha}}{\rm d}x^{i_{\alpha}}{\rm d}x^{j_{\alpha}},
\end{gather*}
where $c_\alpha$ are non-vanishing scalar functions on $M$ and the metrics $G_\alpha$ are given by (\ref{blockmetric}).
In analogy with the St\"ackel case, we have

\begin{Proposition} \label{HJSeparability-Painleve}A metric $g$ is of the Painlev\'e form \eqref{Painleveform} if and only if there exist block orthogonal coordinates such that the Hamilton--Jacobi equation
\[
g^{ij}\partial_{i}W\partial_{j}W=E,
\]
admits a parametrized family of solutions which is sum-separable into groups of variables, of the form
\begin{gather}\label{Painsepform}
W=W_{1}\big({\bf {x}}^{1};{{a}}_{1},\dots ,{{a}}_{r}\big)+\cdots+W_{r}\big({\bf {x}}^{r};{{{a}}_{1},\dots, {a}}_{r}\big),
\end{gather}
depending smoothly on $r$ arbitrary real constants $({{a}}_{1}:=E,a_{2},\dots ,{{a}}_{r})$ defined on an open subset $A\subset \mathbb{R}^{r}$, and satisfying the rank condition
\begin{gather}\label{rankPainHJ}
\det \left(\sum_{\ialpha =1_{\alpha}}^{\lalpha}\sum_{\jalpha =1_{\alpha}}^{\lalpha}\big(G^{\alpha}\big)^{\ialpha \jalpha}\left(\frac{\partial W_{\alpha}}{\partial x^{\ialpha}}\right)\left( \frac{\partial^{2}W_{\alpha}}{\partial a_{\gamma}\partial x^{\jalpha}}\right)\right)\neq 0,
\end{gather}
where
\[
G^{\alpha} = (G_{\alpha})^{-1 }.
\]
\end{Proposition}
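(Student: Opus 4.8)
The plan is to prove both directions of the equivalence by direct computation, mirroring the classical St\"ackel argument but carried out at the level of blocks of variables rather than individual coordinates.

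First I would establish the forward direction. Assume $g$ has the Painlev\'e form \eqref{Painleveform}. The key observation is that the inverse metric is block-diagonal with blocks $g^{\ialpha\jalpha} = \frac{s^{\alpha 1}}{\det S}(G^{\alpha})^{\ialpha\jalpha}$, so the Hamilton--Jacobi equation \eqref{HJ} reads
\[
\sum_{\alpha=1}^{r}\frac{s^{\alpha 1}}{\det S}\sum_{\ialpha,\jalpha}(G^{\alpha})^{\ialpha\jalpha}\partial_{\ialpha}W\,\partial_{\jalpha}W = E.
\]
Substituting the ansatz \eqref{Painsepform}, in which $W_{\alpha}$ depends only on the block ${\bf x}^{\alpha}$, and multiplying through by $\det S$, I would recognize that the separated equation holds provided each block gradient satisfies
\[
\sum_{\ialpha,\jalpha}(G^{\alpha})^{\ialpha\jalpha}\,\partial_{\ialpha}W_{\alpha}\,\partial_{\jalpha}W_{\alpha} = \sum_{\beta=1}^{r}s_{\alpha\beta}({\bf x}^{\alpha})\,a_{\beta},
\]
using the cofactor identity $\sum_{\alpha}s^{\alpha 1}s_{\alpha\beta} = \delta_{1\beta}\det S$ together with $a_{1}=E$. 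Since the left-hand side depends only on ${\bf x}^{\alpha}$ and the right-hand side does too (the $\alpha$-th row of $S$ depends only on ${\bf x}^{\alpha}$), this is a consistent system of $r$ block-eikonal equations whose solvability furnishes the desired family $W$. The rank condition \eqref{rankPainHJ} is then exactly the statement that this parametrized family is a genuine complete integral, i.e.\ that the map from the separation constants to the block momenta is nondegenerate; I would verify it by differentiating the block-eikonal equations with respect to the $a_{\gamma}$ and reading off the Jacobian.

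For the converse, I would assume the existence of block orthogonal coordinates in which the Hamilton--Jacobi equation admits a sum-separable complete integral \eqref{Painsepform} satisfying \eqref{rankPainHJ}, and recover the generalized St\"ackel structure. Writing $g = \sum_{\alpha}c_{\alpha}G_{\alpha}$ with unknown scalars $c_{\alpha}$, substitution of the separated solution gives $\sum_{\alpha}c_{\alpha}^{-1}f_{\alpha}({\bf x}^{\alpha}) = E$, where $f_{\alpha} := \sum_{\ialpha,\jalpha}(G^{\alpha})^{\ialpha\jalpha}\partial_{\ialpha}W_{\alpha}\partial_{\jalpha}W_{\alpha}$ depends only on block $\alpha$ and on the parameters. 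The rank condition guarantees that, as the parameters vary, the functions $f_{\alpha}$ span a space large enough to force $c_{\alpha}^{-1}$ to be, up to the common factor $(\det S)^{-1}$, a linear combination of functions each depending on a single block; inverting this linear-algebraic relation produces a matrix $S=(s_{\alpha\beta}({\bf x}^{\alpha}))$ whose $\alpha$-th row depends only on ${\bf x}^{\alpha}$, which is precisely a generalized St\"ackel matrix, and yields $c_{\alpha} = \det S / s^{\alpha 1}$.

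The main obstacle, as in the St\"ackel case, is the converse direction, specifically extracting the separation of the scalar factors $c_{\alpha}$ into the row-by-row block dependence of a generalized St\"ackel matrix. The delicate point is that each $f_{\alpha}$ depends not only on ${\bf x}^{\alpha}$ but also on all $r$ parameters $(a_{1},\dots,a_{r})$; the rank condition \eqref{rankPainHJ} is what allows one to treat these $r$ parameter-dependencies as independent and thereby disentangle the identity $\sum_{\alpha}c_{\alpha}^{-1}f_{\alpha} = E$ into a linear system that is solvable for the $s_{\alpha\beta}$. I would handle this by differentiating the separation identity with respect to the parameters $a_{\gamma}$ to obtain $r$ relations $\sum_{\alpha}c_{\alpha}^{-1}\partial_{a_{\gamma}}f_{\alpha} = \delta_{1\gamma}$, and then invoking \eqref{rankPainHJ} to guarantee the invertibility of the coefficient matrix $(\partial_{a_{\gamma}}f_{\alpha})$; solving for $c_{\alpha}^{-1}$ and reading off the resulting block dependence gives the generalized St\"ackel matrix. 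The remaining verifications, that the metric is Riemannian via \eqref{StackelPositiveness} and that the cofactor $s^{\alpha 1}$ has the stated block independence, are then routine.
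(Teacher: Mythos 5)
Your proposal is correct and follows essentially the same route as the paper's proof: one direction uses the cofactor identity $\sum_{\alpha}s^{\alpha 1}s_{\alpha\beta}=\delta_{1\beta}\det S$ to reduce the Hamilton--Jacobi equation to block-eikonal equations whose parameter-derivatives reproduce the non-singular Stäckel matrix (hence the rank condition), and the other differentiates the separated identity with respect to the $a_{\gamma}$ and inverts the matrix $(\partial_{a_{\gamma}}f_{\alpha})$, whose rows depend only on single blocks, to identify $c^{\alpha}=s^{\alpha 1}/\det S$. The only differences are cosmetic: you present the two directions in the opposite order, and the paper additionally cites standard first-order PDE theory for the local solvability of the block-eikonal equations, which you take for granted.
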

This Proposition will be proved in Section \ref{PainHJSection} as well as other (intrinsic) characterizations of Painlev\'e metrics.

In further analogy with the St\"ackel case, we now recall that Painlev\'e metrics admit $r$ linearly independent quadratic first integrals of the geodesic flow which are Poisson commuting. Indeed, the summands $W_{\alpha}$ appearing in (\ref{Painsepform}) satisfy the following set of first-order PDEs \cite{Per1990}
\begin{gather}
{\mathcal{F}}_{1}\left({\bf {x}}^{1},\frac{\partial W_{1}}{\partial {\bf x}^{1}}\right) =a_{1}s_{11}\big({\bf {x}}^{1}\big)+\cdots +a_{r}s_{r1}\big({\bf {x}}^{1}\big),\nonumber\\ \quad \vdots \nonumber\\ {\mathcal{F}}_{r}\left({\bf {x}}^{r},\frac{\partial W_{r}}{\partial {\bf x}^{r}}\right) =a_{1}s_{r1}\big({\bf {x}}^{r}\big)+\cdots +a_{r}s_{rr}\big({\bf {x}}^{r}\big),\label{HJsep}
\end{gather}
where
\begin{gather}\label{F}
{\mathcal{F}}_{\alpha}=\big(G^{\alpha}\big)^{i_{\alpha}j_{\alpha}}\big({\bf {x}}^{\alpha}\big)p_{i_{\alpha}}p_{j_{\alpha}},\qquad 1\leq \alpha \leq r,
\end{gather}
and where the $(a_{\alpha})$ are arbitrary real separation constants.
It follows now directly from the separated equations (\ref{HJsep}) and from the fact that the generalized St\"ackel matrix $S$ is non-singular that
one obtains $r$ linearly independent Poisson-commuting quadratic first integrals~$K_{(\alpha)}$ of the geodesic flow by solving for the $r$ separation constants $(a_{\alpha})$ from the separated equations (\ref{HJsep}). These are explicitly given by (see~\cite{Per1990})
\[
K_{(\alpha)}=\sum_{\beta =1}^{r}\big(S^{-1}\big)_{\alpha \beta}\mathcal{F}_{\beta},\qquad 1\leq \beta \leq r,
\]
or equivalently
\[
K_{(\alpha)}=K_{(\alpha)}^{ij}p_{i}p_{j},
\]
where for each $1\leq \alpha \leq r$, $\big(K_{(\alpha)}^{ij}\big)$ is a symmetric block-diagonal tensor defined by
\begin{gather}\label{Killingtensor}
K_{(\alpha)}^{i_{\beta}j_{\beta}}=\frac{s^{\beta \alpha}}{\det S}\big(G^{\beta}\big)^{i_{\beta}j_{\beta}},\qquad K_{(\alpha)}^{i_{\beta}j_{\gamma}}=0\qquad \text{for all}\qquad 1 \leq \beta \neq \gamma \leq r.
\end{gather}
These quadratic first integrals satisfy
\begin{gather}\label{PoissonCommute}
\big\{K_{(\alpha)}, K_{(\beta)}\big\}=0, \qquad 1\leq \alpha,\beta \leq r, \qquad \text{where}\qquad K_{(1)}=H.
\end{gather}
The condition $\{K_{(\alpha)},H\}=0$ is equivalent to $\big(K_{(\alpha)}^{ij}\big)$ being a symmetric {\emph{Killing tensor}}, that is
\[
\nabla_{i}K_{(\alpha) jk}+\nabla_{j}K_{(\alpha) ki}+\nabla_{k}K_{(\alpha) ij}=0.
\]
The commutation relations (\ref{PoissonCommute}) are thus equivalent to the vanishing of the Schouten brackets of the pairs of Killing tensors $(K_{(\alpha) ij}),(K_{(\beta) ij})$.

There exist a few classical examples of Painlev\'e metrics in the litterature. They include for instance the {\emph{di Pirro metrics}} \cite{diPirro1896, Per1990}, for which the Hamiltonian of the geodesic flow is of the form
\begin{gather}\label{diPirro}
H=g^{ij}p_{i}p_{j}=\big(c_{12}\big(x^{1},x^{2}\big)+c_{3}\big(x^{3}\big)\big)^{-1}\big[a_{1}\big(x^{1},x^{2}\big)p_{1}^{2}+a_{2}\big(x^{1},x^{2}\big)p_{2}^{2} +a_{3}\big(x^{3}\big)p_{3}^{2}\big].
\end{gather}
It may indeed be verified directly that the function
\[
K=\big(c_{12}\big(x^{1},x^{2}\big)+c_{3}\big(x^{3}\big)\big)^{-1}\big[c_{3}\big(x^{3}\big)(a_{1}\big(x^{1},x^{2}\big)p_{1}^{2} +a_{2}\big(x^{1},x^{2}\big)p_{2}^{2})-c_{12}\big(x^{1},x^{2}\big)a_{3}\big(x^{3}\big)p_{3}^{2}\big].
\]
Poisson-commutes with $H$, and thus defines a Killing tensor, which together with the metric tensor generates a maximal linearly independent set of Killing tensors for generic choices of the metric functions $c_{12}$, $a_{1}$, $a_{2}$, $a_{3}$, $c_{3}$ in~(\ref{diPirro}). Painlev\'e metrics also appear in the context of geodesically equivalent metrics as metrics admitting projective symmetries, see \cite{Top1999, Top2002}, and also as instances of 4-dimensional Lorentzian metrics admitting a Killing tensor \cite{HaMa1978} (see Section~\ref{Perspectives} for further remarks on the latter point). Finally, we mention the recent paper by Chanu and Rastelli~\cite{CR2019} that provides a classification of Painlev\'e metrics with vanishing Riemann tensor in dimension~$3$,  i.e., in~$\mathbb{E}_3$. We will give some examples of Painlev\'e metrics in all dimensions satisfying the generalized Robertson conditions (see below) as well as a catalogue of such metrics in dimensions $2$, $3$, $4$ in Section \ref{2}.

As we stated above, our main goal in this paper is to investigate for the class of Painlev\'e metrics the closely related question of product separability for the Helmholtz equation (\ref{Helm}), and the relationship between quadratic first integrals of the geodesic flow and symmetry operators for the Laplace--Beltrami equation. The Laplace--Beltrami operator for a Painlev\'e metric $g$ given by~(\ref{Painleveform}) can be expressed in terms of the generalized St\"ackel matrix $S$ and the Laplace--Beltrami operators for the $r$ Riemannian metrics $G_{\beta}$, $1\leq \beta \leq r$, defined by~(\ref{blockmetric}), corresponding to the blocks of variables ${\bf x}^{\beta}$, $1\leq \beta \leq r$. We have
\begin{gather}
\Delta_{g} u =\sum_{\beta=1}^{r}\Bigg[\left(\frac{s^{\beta 1}}{\det S}\right)\Bigg\{\Delta_{G_{\beta}}u\nonumber\\
\hphantom{\Delta_{g} u =}{}+\sum_{i_{\beta}=1_{\beta}}^{l_{\beta}}\sum_{{j}_{\beta}=1_{\beta}}^{l_{\beta}}\big(G^{\beta}\big)^{i_{\beta} j_{\beta}}\left[\partial_{i_{\beta}}\left(\log \frac{(\det S)^{\frac{n}{2}-1}s^{\beta 1}}{\big(s^{11}\big)^{\frac{l_1}{2}} \cdots \big(s^{r1}\big)^{\frac{l_r}{2}}}\right)\right]\partial_{j_{\beta}}u\Bigg\}\Bigg],\label{Laplacian}
\end{gather}
where $\Delta_{G_{\beta}}$ denotes the Laplace--Beltrami operator for the Riemannian metric $G_{\beta}$, that is
\begin{gather*}
\Delta_{G_{\beta}} = \sum_{i_{\beta}=1}^{\lbeta}\sum_{j_{\beta}=1}^{\lbeta}\frac{1}{\sqrt{|G_{\beta}|}}\partial_{i_{\beta}}\Big(\sqrt{|G_{\beta}|}
\big(G^{\beta}\big)^{i_{\beta}j_{\beta}}\partial_{j_{\beta}}\Big),\qquad |G_{\beta}|=\det((G_{\beta})_{i_{\beta}j_{\beta}}).
\end{gather*}

We now state our main results, the proofs of which will be given in Section \ref{ProofsofMainTheorems}. We first define the \emph{generalized Robertson conditions}, in analogy with the classical Robertson conditions (\ref{classicalRobertson}) for St\"ackel metrics.

\begin{Definition} \label{defiRC}
A Painlev\'e metric $g$ is said to satisfy the generalized Robertson conditions if and only if the differential conditions
	\begin{gather}\label{GenRobertson}
\partial_{j_{\alpha}}\gamma_{i_{\beta}}=0,\quad \textrm{for all} \quad 1\leq \alpha \neq \beta \leq r, \quad 1_{\alpha}\leq i_{\alpha} \leq l_{\alpha},\quad 1_{\beta}\leq i_{\beta} \leq l_{\beta},
\end{gather}
where
\[
\gamma_{i_{\beta}}=-\partial_{i_{\beta}}\left[\log \frac{(\det S)^{\frac{n}{2}-1}s^{\beta 1}}{\big(s^{11}\big)^{\frac{l_1}{2}} \cdots \big(s^{r1}\big)^{\frac{l_r}{2}}}\right],
\]
are satisfied.
\end{Definition}
The generalized Robertson conditions (\ref{GenRobertson}) imply that
\begin{gather}\label{GenRobertsonsimpl}
\partial_{\ialpha}\gamma^{\jbeta}=0, \quad \textrm{for all} \quad 1\leq \alpha \neq \beta \leq r,\\
\label{gammaup}
\gamma^{j_{\beta}}:= \sum_{i_{\beta}=1_{\beta}}^{l_{\beta}}\big(G^{\beta}\big)^{i_{\beta}j_{\beta}} \gamma_{i_{\beta}}.
\end{gather}
We shall be working with both the forms (\ref{GenRobertson}) and (\ref{GenRobertsonsimpl}) of these conditions. Note that if the Robertson conditions hold, then the positive Laplace--Beltrami operator can be written in a synthetic form as
\begin{gather*}
 -\Delta_{g} u  =   \sum_{\beta=1}^{r} \left(\frac{s^{\beta 1}}{\det S}\right)\left\{-\Delta_{G_{\beta}}u + \sum_{{j}_{\beta}=1_{\beta}}^{l_{\beta}} \gamma^{j_{\beta}} \partial_{j_{\beta}}u\right\}
   =   \sum_{\beta=1}^{r} \left(\frac{s^{\beta 1}}{\det S}\right) B_\beta u , \nonumber
\end{gather*}
where the operators $B_\beta = -\Delta_{G_{\beta}} + \sum\limits_{{j}_{\beta}=1_{\beta}}^{l_{\beta}} \gamma^{j_{\beta}} \partial_{j_{\beta}}$ only depend on the group of variables $\textbf{x}^\beta$.

As will be seen in Section \ref{PainHelmSection}, the generalization of the notion of complete multiplicative separation for the Helmholtz equation to the case of separation in terms of groups of variables is given by considering a parametrized family of product-separable solutions of the form
\begin{gather} \label{SeparabilitySchro}
u=\prod_{\beta =1}^{r}u_{\beta}\big({\bf x}^{\beta};a_{1},\dots,a_{r}\big),
\end{gather}
satisfying the rank condition
\begin{gather} \label{RankPainSchro}
 \det  \left( \partial_{a_\alpha} \left( \frac{B_\beta u_\beta}{u_\beta} \right) \right) \ne 0,
\end{gather}
where we assume that $u_{\beta}\neq 0$.

Our first result states the separability conditions for the Helmholtz equation, and gives their interpretation in terms of the vanishing of the off-block diagonal components of the Ricci tensor:

\begin{Theorem}\label{maintheorem1}\quad
\begin{enumerate}\itemsep=0pt
\item[$1)$] Given a Painlev\'e metric $g$ of the form \eqref{Painleveform} satisfying the Robertson conditions \eqref{GenRobertson}, the Helmholtz equation
\begin{gather}\label{HelmPain}
-\Delta_{g} u =\lambda u,
\end{gather}
where $\Delta_{g}$ denotes the Laplace--Beltrami operator \eqref{Laplacian} admits a solution that is product-separab\-le in the $r$ groups of variables $\big({{\bf x}^{1}},\dots,{{\bf x}^{r}}\big)$,
\begin{gather}\label{sepform}
u=\prod_{\beta =1}^{r}u_{\beta}\big({\bf x}^{\beta};a_{1}:=\lambda,a_{2},\dots,a_{r}\big),
\end{gather}
and satisfies the rank condition \eqref{RankPainSchro}.

\item[$2)$] The conditions \eqref{GenRobertson} may be written equivalently in terms of the Ricci tensor of the Painlev\'e metric \eqref{Painleveform} as
\begin{gather}\label{Ricciform}
R_{j_{\alpha}k_{\beta}}=0, \quad {\mbox{for all}} \quad 1\leq \alpha \neq \beta \leq r, \quad {\mbox{and}} \quad 1_{\alpha}\leq j_{\alpha} \leq l_{\alpha}, \quad 1_{\beta}\leq k_{\beta}\leq l_{\beta}.
\end{gather}
\end{enumerate}
\end{Theorem}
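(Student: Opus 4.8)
The plan is to treat the two assertions separately. For part~1 I would start from the Laplacian~(\ref{Laplacian}) and use the Robertson conditions to put it in the synthetic form $-\Delta_{g}=\sum_{\beta=1}^{r}\frac{s^{\beta 1}}{\det S}B_{\beta}$, where $B_{\beta}=-\Delta_{G_{\beta}}+\sum_{j_\beta}\gamma^{j_\beta}\partial_{j_\beta}$ acts only on the group $\mathbf{x}^{\beta}$; this is precisely where the independence~(\ref{GenRobertsonsimpl}) of $\gamma^{j_\beta}$ from the other groups is used. Inserting the product ansatz $u=\prod_{\beta}u_{\beta}(\mathbf{x}^{\beta})$ and dividing by $u$, each $B_{\beta}$ differentiates only its own factor, so the Helmholtz equation becomes $\sum_{\beta}\frac{s^{\beta 1}}{\det S}\frac{B_{\beta}u_{\beta}}{u_{\beta}}=\lambda$. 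The separation is then forced by demanding that each quotient be the St\"ackel combination $B_{\beta}u_{\beta}=\big(\sum_{\alpha}a_{\alpha}s_{\beta\alpha}(\mathbf{x}^{\beta})\big)u_{\beta}$ with $a_{1}:=\lambda$: summing against $s^{\beta 1}/\det S$ and invoking the cofactor identity $\sum_{\beta}s_{\beta\alpha}s^{\beta 1}=\delta_{\alpha 1}\det S$ collapses the right-hand side to $a_{1}=\lambda$, so any family of local nonvanishing solutions of these $r$ decoupled block equations yields a product-separable solution. For the rank condition~(\ref{RankPainSchro}) I would merely differentiate: since $B_{\beta}u_{\beta}/u_{\beta}=\sum_{\gamma}a_{\gamma}s_{\beta\gamma}$, one gets $\partial_{a_{\alpha}}(B_{\beta}u_{\beta}/u_{\beta})=s_{\beta\alpha}$, so the matrix in~(\ref{RankPainSchro}) is the transpose of the generalized St\"ackel matrix~(\ref{genStackelmat}), whose determinant is nonzero by hypothesis; smooth dependence on the $a_{\alpha}$ follows from standard local solvability of the (elliptic, or ordinary when $l_{\beta}=1$) block equations.

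For part~2 the plan is a direct computation of the Ricci tensor of the block-diagonal metric $g=\sum_{\alpha}f_{\alpha}G_{\alpha}$, with $f_{\alpha}:=\det S/s^{\alpha 1}$ and $\phi_{\alpha}:=\log f_{\alpha}$. The preliminary algebraic step is to simplify the logarithm defining $\gamma_{i_\beta}$: writing $\log s^{\gamma 1}=\log\det S-\phi_{\gamma}$ and using $\sum_{\gamma}l_{\gamma}=n$, the powers of $\det S$ cancel and one obtains the clean identity
\[
\gamma_{i_\beta}=\partial_{i_\beta}\phi_{\beta}-\tfrac12\sum_{\gamma=1}^{r}l_{\gamma}\,\partial_{i_\beta}\phi_{\gamma}.
\]
I would then compute the Christoffel symbols, organized by how the three indices distribute among the groups. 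Because each $G_{\alpha}$ and each $|G_{\alpha}|$ depends only on $\mathbf{x}^{\alpha}$, all cross-group contributions enter through derivatives of the $\phi_{\alpha}$; for $\alpha\neq\beta$ one finds $\Gamma^{k_\alpha}_{i_\alpha j_\beta}=\tfrac12\delta^{k_\alpha}_{i_\alpha}\partial_{j_\beta}\phi_{\alpha}$ and $\Gamma^{k_\beta}_{i_\alpha j_\beta}=\tfrac12\delta^{k_\beta}_{j_\beta}\partial_{i_\alpha}\phi_{\beta}$, together with genuinely off-block symbols such as $\Gamma^{k_\delta}_{i_\alpha j_\alpha}=-\tfrac12(f_{\alpha}/f_{\delta})(G_{\alpha})_{i_\alpha j_\alpha}(G^{\delta})^{k_\delta m_\delta}\partial_{m_\delta}\phi_{\alpha}$.

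Feeding these into $R_{i_\alpha j_\beta}=\partial_{k}\Gamma^{k}_{i_\alpha j_\beta}-\partial_{i_\alpha}\partial_{j_\beta}\log\sqrt{|g|}+\Gamma^{l}_{i_\alpha j_\beta}\partial_{l}\log\sqrt{|g|}-\Gamma^{k}_{i_\alpha l}\Gamma^{l}_{j_\beta k}$ and using $\log\sqrt{|g|}=\tfrac12\sum_{\gamma}(l_{\gamma}\phi_{\gamma}+\log|G_{\gamma}|)$, the pure second-derivative part collapses to $\tfrac12(\partial_{i_\alpha}\gamma_{j_\beta}+\partial_{j_\beta}\gamma_{i_\alpha})$. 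At this point I would invoke the structural fact, recorded after~(\ref{genStackelmat}), that the cofactor $s^{\alpha 1}$ is independent of $\mathbf{x}^{\alpha}$: since $\phi_{\beta}-\phi_{\alpha}=\log s^{\alpha 1}-\log s^{\beta 1}$, the mixed derivative $\partial_{i_\alpha}\partial_{j_\beta}(\phi_{\beta}-\phi_{\alpha})$ vanishes, so $\partial_{i_\alpha}\gamma_{j_\beta}=\partial_{j_\beta}\gamma_{i_\alpha}$ and the second-order part is simply $\partial_{i_\alpha}\gamma_{j_\beta}$. The remaining task is to show that the first-order contributions — the trace term $\Gamma^{l}_{i_\alpha j_\beta}\partial_{l}\log\sqrt{|g|}$ together with the contraction $\Gamma^{k}_{i_\alpha l}\Gamma^{l}_{j_\beta k}$ — cancel, so that $R_{i_\alpha j_\beta}=\partial_{i_\alpha}\gamma_{j_\beta}$ and the equivalence of~(\ref{GenRobertson}) and~(\ref{Ricciform}) is immediate, the passage between $\gamma_{i_\beta}$ and $\gamma^{j_\beta}$ of~(\ref{gammaup}) being harmless since $G^{\beta}$ is invertible.

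The main obstacle is exactly this last cancellation. The contraction $\Gamma^{k}_{i_\alpha l}\Gamma^{l}_{j_\beta k}$ requires enumerating all distributions of the summed indices $k,l$ over the $r$ groups, and it mixes the off-block symbols carrying the ratios $f_{\alpha}/f_{\delta}$ with the intrinsic Christoffels of the $G_{\gamma}$; the latter must drop out entirely, leaving only their traces $\partial_{i_\alpha}\log\sqrt{|G_{\alpha}|}$, which are designed to cancel against the corresponding pieces of $\partial_{l}\log\sqrt{|g|}$. Keeping track of these block ratios and trace terms, and confirming that every quadratic-in-$\partial\phi$ contribution cancels, is the delicate bookkeeping at the heart of part~2; once it is carried out, both directions of the equivalence follow at once.
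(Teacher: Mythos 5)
Your part~1 is correct and is essentially the paper's own argument (Proposition~\ref{PainSchro}): write $-\Delta_g=\sum_{\beta}\frac{s^{\beta 1}}{\det S}B_{\beta}$ using the Robertson conditions, impose the separated equations $B_{\beta}u_{\beta}=\big(\sum_{\alpha}s_{\beta\alpha}a_{\alpha}\big)u_{\beta}$, collapse the sum by the cofactor identity, and read off the rank condition from $\partial_{a_{\alpha}}\big(B_{\beta}u_{\beta}/u_{\beta}\big)=s_{\beta\alpha}$ together with the nonsingularity of $S$.

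Part~2 has a genuine gap, and it sits exactly where you locate ``the delicate bookkeeping'': the cancellation you are counting on does not happen, and cannot be produced by bookkeeping alone. For a general block-orthogonal metric $g=\sum_{\alpha}f_{\alpha}G_{\alpha}$ with arbitrary positive factors $f_{\alpha}$, the off-block Ricci components are \emph{not} a pure mixed second derivative of logarithms: genuinely quadratic terms in the first derivatives $\partial\phi_{\gamma}$ survive (this is already visible for orthogonal metrics in the classical case, which is precisely why Eisenhart's theorem is not a formal identity). What the paper actually establishes, in \eqref{adaptedRicci}--\eqref{Tterms}, is that for a Painlev\'e metric $R_{j_{\alpha}k_{\beta}}=-\tfrac{3}{4}\,\partial_{j_{\alpha}}\partial_{k_{\beta}}\log\big[(\det S)^{n-2}/\big((s^{11})^{l_{1}}\cdots(s^{r1})^{l_{r}}\big)\big]+\tfrac{1}{4}T_{j_{\alpha}k_{\beta}}$, where $T_{j_{\alpha}k_{\beta}}$ collects exactly the terms you expect to drop out: a second-derivative term weighted by $l_{\alpha}+l_{\beta}-2$ and, for each $\gamma\neq\alpha,\beta$, a Levi-Civita combination weighted by $l_{\gamma}$. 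This remainder is \emph{not} identically zero for arbitrary block factors; it vanishes because the $f_{\alpha}$ come from a generalized St\"ackel matrix, via the generalized Levi-Civita conditions \eqref{genLeviCivita} and the identities \eqref{secondorder} of Lemma~\ref{LeviCivita}, which the paper derives in turn from the Killing--Eisenhart equations \eqref{genKE}, i.e., from the Poisson-commutation of the quadratic first integrals. Your proposal never invokes or proves these identities, so the asserted conclusion ``$R_{i_{\alpha}j_{\beta}}=\partial_{i_{\alpha}}\gamma_{j_{\beta}}$, every quadratic-in-$\partial\phi$ contribution cancels'' is unjustified, and false as an identity of block-orthogonal geometry. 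A smaller symptom of the same problem: even after the Levi-Civita identities are used, the correct relation is $R_{j_{\alpha}k_{\beta}}=\tfrac{3}{2}\,\partial_{j_{\alpha}}\gamma_{k_{\beta}}$, not $\partial_{j_{\alpha}}\gamma_{k_{\beta}}$; the factor is harmless for the equivalence with \eqref{Ricciform}, but it shows the cancellation pattern is not the one you predicted.

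To repair your argument you would need, before (or alongside) the curvature computation, to prove that every Painlev\'e metric \eqref{Painleveform} satisfies $\partial_{j_{\alpha}}\partial_{k_{\beta}}\big(\det S/(s^{\alpha 1}s^{\beta 1})\big)=0$ and the full conditions \eqref{genLeviCivita} --- that is, to reprove Lemma~\ref{LeviCivita} --- and then verify that the surviving first-order terms in your Ricci expansion are a linear combination of exactly these expressions. Only then does the equivalence of \eqref{GenRobertson} (equivalently, of the product condition \eqref{RobertsonCond}) with \eqref{Ricciform} follow.
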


Our next result shows that the Laplace--Beltrami operator for a Painlev\'e metric satisfying the generalized Robertson conditions admits $r$ linearly independent mutually commuting symmetry operators:

\begin{Theorem}\label{maintheorem2} Consider a Painlev\'e metric \eqref{Painleveform} for which the generalized Robertson conditions~\eqref{GenRobertson}, which imply the separability of the Helmholtz equation, are satisfied. Then the operators~$\Delta_{K_{\alpha}}$ defined for $2\leq \alpha \leq r$ by
\begin{gather}\label{KLaplacian}
\Delta_{K_{(\alpha)}}=\nabla_{i}(K_{(\alpha)}^{ij}\nabla_{j})=\sum_{\gamma=1}^{r}\sum_{i_{\gamma} =1_{\gamma}}^{l_{\gamma}}\sum_{j_{\gamma}=1_{\gamma}}^{l_{\gamma}}\nabla_{i_{\gamma}}\big(K_{(\alpha)}^{i_{\gamma}j_{\gamma}}\nabla_{j_{\gamma}}\big),
\end{gather}
where $K_{(\alpha)}$ is defined by~\eqref{Killingtensor},
commute with the Laplace--Beltrami opera\-tor~$\Delta_{g}$ and pairwise commute
\begin{gather}\label{commrelKill}
\big[\Delta_{K_{(\alpha)}},\Delta_{g}\big]=0,\qquad \big[\Delta_{K_{(\alpha)}},\Delta_{K_{(\beta)}}\big]=0,\qquad 2\leq \alpha,\beta \leq r,
\end{gather}
and admit the separable solutions \eqref{sepform} as formal eigenfunctions with the separation cons\-tants~$a_{\alpha}$ arising from the separation of variables as eigenvalues,
\[
\Delta_{K_{(\alpha)}}u=a_{\alpha}u,\qquad 2\leq \alpha \leq r.
\]
\end{Theorem}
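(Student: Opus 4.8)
The plan is to reduce each symmetry operator to a combination of the block operators $B_\beta$ appearing in the synthetic Laplacian, and then read off all three assertions from that reduction. The central computation establishes a synthetic form for $\Delta_{K_{(\alpha)}}$ parallel to the one recorded for $\Delta_g$. Since $K_{(\alpha)}$ is block-diagonal one has $\Delta_{K_{(\alpha)}}u=\frac{1}{\sqrt{|g|}}\partial_{i_\beta}\big(\sqrt{|g|}\,K_{(\alpha)}^{i_\beta j_\beta}\partial_{j_\beta}u\big)$, and inserting $K_{(\alpha)}^{i_\beta j_\beta}=\frac{s^{\beta\alpha}}{\det S}(G^\beta)^{i_\beta j_\beta}$ together with the factorization $\sqrt{|g|}=\det S\cdot\frac{(\det S)^{\frac{n}{2}-1}}{\prod_\gamma (s^{\gamma 1})^{l_\gamma/2}}\prod_\gamma\sqrt{|G_\gamma|}$ lets one pull the cofactor $s^{\beta\alpha}$ outside $\partial_{i_\beta}$, using that $s^{\beta\alpha}$ is independent of ${\bf x}^\beta$. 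The remaining steps are identical to those producing \eqref{Laplacian}, the logarithmic-derivative term reproducing exactly $\gamma_{i_\beta}$, and yield
\begin{gather*}
\Delta_{K_{(\alpha)}}u=-\sum_{\beta=1}^{r}\frac{s^{\beta\alpha}}{\det S}\,B_\beta u,\qquad 1\le\alpha\le r,
\end{gather*}
where the generalized Robertson conditions \eqref{GenRobertson} are precisely what guarantees, through $\partial_{i_\alpha}\gamma^{j_\beta}=0$, that each $B_\beta$ is a differential operator in the group ${\bf x}^\beta$ alone. For $\alpha=1$ this recovers the synthetic Laplacian, since $K_{(1)}=H$.

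Given this reduction the eigenvalue statement is immediate. For a block-separable solution \eqref{sepform} each $B_\beta$ differentiates only the factor $u_\beta$, so $B_\beta u=(B_\beta u_\beta/u_\beta)\,u$, and the separated Helmholtz equations (the analogue for \eqref{HelmPain} of the Hamilton--Jacobi system \eqref{HJsep}) take the form $B_\beta u_\beta=\big(\sum_{\delta}a_\delta s_{\beta\delta}\big)u_\beta$. Substituting and using the matrix identity $\sum_\beta\frac{s^{\beta\alpha}}{\det S}\,s_{\beta\delta}=(S^{-1}S)_{\alpha\delta}=\delta_{\alpha\delta}$ collapses the double sum to a single separation constant, exhibiting $u$ as a formal eigenfunction with eigenvalue $a_\alpha$ in the normalization of \eqref{sepform}.

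For the commutation relations \eqref{commrelKill} the key structural observation is that, raising one index with $g$, the Killing tensor becomes $K_{(\alpha)}{}^{i_\beta}{}_{j_\beta}=\frac{s^{\beta\alpha}}{s^{\beta 1}}\delta^{i_\beta}_{j_\beta}$ on each block and vanishes across blocks; thus every $K_{(\alpha)}$ is, as a $(1,1)$-tensor, block-diagonal with a single scalar on each block. Consequently the endomorphisms $K_{(\alpha)}$ and $K_{(\beta)}$ commute pointwise, and $[K_{(\alpha)},\mathrm{Ric}]^{i_\beta}{}_{j_\gamma}=\big(\frac{s^{\beta\alpha}}{s^{\beta 1}}-\frac{s^{\gamma\alpha}}{s^{\gamma 1}}\big)R^{i_\beta}{}_{j_\gamma}$ vanishes because the off-block Ricci components vanish under the Robertson conditions by Theorem \ref{maintheorem1}(2). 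Since for a Killing tensor the commutator $[\Delta_{K_{(\alpha)}},\Delta_g]$ is a first-order operator whose coefficients are built precisely from this Ricci obstruction (the mechanism underlying Eisenhart's theorem, cf.\ \cite{Car1977,Eis1934,KaMi1980}), its vanishing follows, and the pairwise case is handled by the same curvature bookkeeping once one notes that the third-order symbol of $[\Delta_{K_{(\alpha)}},\Delta_{K_{(\beta)}}]$ already cancels by the Poisson commutativity \eqref{PoissonCommute}. The step I expect to demand the most care is exactly this last one: the \emph{classical} identities $[K_{(\alpha)},K_{(\beta)}]=0$ and $[K_{(\alpha)},\mathrm{Ric}]=0$ hold pointwise, but promoting them to the \emph{operator} identities \eqref{commrelKill} requires showing that the quantization introduces no anomaly, which is where the Robertson conditions do the real work. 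A self-contained alternative that bypasses the general commutator formula is to argue by completeness: the commutator $C=[\Delta_{K_{(\alpha)}},\Delta_{K_{(\beta)}}]$ has order $\le 2$, and by the eigenvalue property it annihilates every separable solution \eqref{sepform} together with all of its derivatives with respect to the constants $a_\delta$; the rank condition \eqref{RankPainSchro} ensures that this family is rich enough to force a bounded-order differential operator annihilating it to vanish identically, whence $C=0$.
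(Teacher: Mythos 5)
Your reduction of $\Delta_{K_{(\alpha)}}$ to the synthetic form $\Delta_{K_{(\alpha)}}=-\sum_{\beta}\frac{s^{\beta\alpha}}{\det S}B_{\beta}$ is correct: the cofactor $s^{\beta\alpha}$ does pass through $\partial_{i_{\beta}}$, and the computation is word-for-word the one that produces \eqref{Laplacian}. This coincides (up to an overall sign that the paper itself is loose about, since $T_{1}=-\Delta_{g}=-\Delta_{K_{(1)}}$) with the identification of the operators $T_{\alpha}$ of \eqref{eigenvalueeq} with $\Delta_{K_{(\alpha)}}$ that the paper records at the end of its proof, and your derivation of the eigenvalue property from the separated equations and $S^{-1}S=I$ matches the paper. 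Your treatment of $\big[\Delta_{K_{(\alpha)}},\Delta_{g}\big]=0$ is genuinely different and, granting Carter's identity \cite{Car1977} expressing this commutator as a first-order operator built from the divergence of $[K_{(\alpha)},\mathrm{Ric}]$, it is legitimate: your formula $[K_{(\alpha)},\mathrm{Ric}]^{i_{\beta}}{}_{j_{\gamma}}=(\rho_{\alpha\beta}-\rho_{\alpha\gamma})R^{i_{\beta}}{}_{j_{\gamma}}$ is right, and it vanishes by Theorem~\ref{maintheorem1}(2), which is proved independently, so there is no circularity. This is shorter than the paper's coefficient-by-coefficient computation for that particular commutator.

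The genuine gap is the pairwise relations $\big[\Delta_{K_{(\alpha)}},\Delta_{K_{(\beta)}}\big]=0$, which is exactly where the paper's proof spends almost all of its effort. Your phrase ``the same curvature bookkeeping'' does not hold up: Carter's identity is specifically about a commutator with the Laplace--Beltrami operator, and for two Killing tensors the obstruction to commutation of the quantized operators is \emph{not} expressible through the Ricci tensor alone -- it involves contractions of both Killing tensors with derivatives of the connection, and there is no analogue of $[K,\mathrm{Ric}]=0$ to quote. This is precisely what the paper's computation supplies: the third- and second-order coefficients of the commutator cancel via the generalized Killing--Eisenhart equations \eqref{genKE} of Lemma~\ref{KE}, the first-order coefficients reduce to the divergence of the tensor $C^{i_{\gamma}}{}_{k_{\epsilon}}$ defined in \eqref{defC}, and Lemma~\ref{RC-NewFormulation} shows the Robertson conditions \eqref{GenRobertson} are equivalent to $C=0$. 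Your fallback completeness argument also has a hole where you lean on the rank condition: for exact separated solutions the matrix in \eqref{RankPainSchro} is just $(s_{\beta\alpha})$, so the condition is automatic ($\det S\neq 0$) and controls only the dependence on the $r$ parameters $a_{\delta}$; it neither states nor implies that the $2$-jets of the separable family at a point span the roughly $n(n+1)/2$-dimensional space needed to force a second-order operator annihilating the family to vanish. What would actually give that richness is the infinite-dimensional freedom in choosing each block factor $u_{\beta}$ among solutions of the elliptic separated equations $B_{\beta}u_{\beta}=\big(\sum_{\delta}s_{\beta\delta}a_{\delta}\big)u_{\beta}$, combined with a jet-density statement (every $2$-jet compatible with the equation at a point is attained, or at least approximated, by an exact local solution). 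That statement is plausible for elliptic operators and would rescue the argument, but you do not prove it, and it is not the rank condition. As written, the central claim \eqref{commrelKill} for $\alpha\neq\beta$ remains unproven.
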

Our final result shows that the above framework can be expanded just as in the St\"ackel case by considering conformal deformations of the Painlev\'e metrics (\ref{Painleveform}) which are compatible with the separation of the Helmholtz equation into groups of variables. This corresponds to a generalization of the important notion of $R$-separability \cite{BCR2005, CR2006, KaMi1984} to the context of Painlev\'e metrics.

Let us first recall that upon a conformal rescaling of the metric given by
\begin{gather}\label{confresc}
g\mapsto c^{4}g,
\end{gather}
where $c$ denotes a smooth positive function, the Laplace--Beltrami operator $\Delta_{g}$ obeys the transformation law
\begin{gather}\label{Yamabe1}
\Delta_{c^{4}g}=c^{-(n+2)}\big(\Delta_{g}-q_{c,g}\big)c^{n-2},
\end{gather}
where
\begin{gather}\label{Yamabe2}
q_{c,g}=c^{-n+2}\Delta_{g}c^{n-2}.
\end{gather}
Thus, letting
\begin{gather}\label{rescaledu}
v=c^{n-2}u
\end{gather}
and using the expression (\ref{Laplacian}) of the Laplace--Beltrami operator for a Painlev\'e metric $g$, the Helmholtz equation
\begin{gather}\label{confHelmholtz}
-\Delta_{c^{4}g}u=\lambda u,
\end{gather}
takes the form
\begin{gather}\label{Yamabe3}
\left[\sum_{\beta=1}^{r}\frac{s^{\beta 1}}{\det S}\left(-\Delta_{G_{\beta}}+\sum_{j_{\beta=1_{\beta}}}^{l_{\beta}}\gamma^{j_{\beta}}\partial_{j_{\beta}}\right)+q_{c,g}-\lambda c^{4}\right]v=0.
\end{gather}
We have
\begin{Theorem}\label{maintheorem3}
Let $g$ be a Painlev\'e metric. Suppose furthermore that $g$ is conformally rescaled by a factor $c^{4}$ as in \eqref{confresc}, where $c$ is chosen so as to satisfy the non-linear PDE
\begin{gather}\label{eqnc}
\Delta_{g}c^{n-2}-\lambda c^{n+2}-\left( a_1 + \sum_{\beta =1}^{r}\frac{s^{\beta 1}}{\det S}(P_{\beta}-\phi_{\beta}) \right) c^{n-2}=0,
\end{gather}
where
\begin{gather*}
P_{\beta}:=-\frac{1}{2}\partial_{j_{\beta}}\gamma^{j_{\beta}}-\frac{1}{4}\gamma^{j_{\beta}}\partial_{j_{\beta}}\log |G_{\beta}|+\frac{1}{4}(G_{\beta})_{i_{\beta}j_{\beta}}\gamma^{i_{\beta}}\gamma^{j_{\beta}},
\end{gather*}
and where $a_1$ is a constant and $\phi_{\beta}=\phi_{\beta}\big({{\bf x}}^{\beta}\big)$ are arbitrary smooth functions. Then the Helmholtz equation~\eqref{confHelmholtz} for the conformally rescaled metric $c^{4}g$ is $R$-separable in the $r$ groups of variables $\big({{\bf x}^{1}},\dots,{{\bf x}^{r}}\big)$. More precisely, if~$u$ is given by
\[
u=c^{-n+2} R w,
\]
with $R$ defined by
\begin{gather}\label{Rdef}
R=\frac{\big(s^{11}\big)^\frac{l_1}{4} \cdots \big(s^{r1}\big)^{\frac{l_r}{4}}}{(\det S)^{\frac{n-2}{4}}}.
\end{gather}
Then $w$ satisfies
\begin{gather}\label{sepeqw}
\sum_{\beta=1}^{r}\left(\frac{s^{\beta 1}}{\det S}\right)\big[{-}\Delta_{G_{\beta}}+\phi_{\beta}\big]w = a_1 w,
\end{gather}
which is separable in the $r$ groups of variables $\big({{\bf x}^{1}},\dots,{{\bf x}^{r}}\big)$ in the sense of \eqref{SeparabilitySchro}--\eqref{RankPainSchro} with the operators $B_\beta$ replaced by the operators $A_\beta = -\Delta_{G_{\beta}} + \phi_\beta$.
\end{Theorem}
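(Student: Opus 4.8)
The strategy is to use the gauge factor $R$ of \eqref{Rdef} to conjugate away the first-order terms in each block operator $B_\beta=-\Delta_{G_\beta}+\sum_{j_\beta}\gamma^{j_\beta}\partial_{j_\beta}$, turning the transformed Helmholtz equation \eqref{Yamabe3} into an equation of the separable type already analysed for Painlev\'e metrics. First I would substitute the Painlev\'e expression \eqref{Laplacian} into \eqref{Yamabe3}, writing $-\Delta_g v=\sum_{\beta}\frac{s^{\beta 1}}{\det S}B_\beta v$, and set $v=Rw$ so that, by \eqref{rescaledu}, $u=c^{-n+2}Rw$ as in the statement. Since $q_{c,g}-\lambda c^{4}$ is a multiplication operator it commutes with multiplication by $R$, so the whole reduction rests on conjugating the differential part $\sum_\beta\frac{s^{\beta 1}}{\det S}B_\beta$ by $R$ block by block.

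The decisive structural point is that $R$ is exactly adapted to the one-forms $\gamma_{i_\beta}$. Writing $\log R=\frac14\sum_{\gamma}l_\gamma\log s^{\gamma 1}-\frac{n-2}{4}\log\det S$ and comparing with the definition of $\gamma_{i_\beta}$ in Definition~\ref{defiRC}, all terms match except for a single term $\partial_{i_\beta}\log s^{\beta 1}$, which vanishes because the cofactor $s^{\beta 1}$ does not depend on the group of variables ${\bf x}^{\beta}$, as noted after \eqref{genStackelmat}. This yields the identity
\[
\gamma_{i_\beta}=2\,\partial_{i_\beta}\log R,\qquad 1\le\beta\le r,\quad 1_\beta\le i_\beta\le l_\beta,
\]
equivalently $R\,\gamma^{j_\beta}=2(G^\beta)^{i_\beta j_\beta}\partial_{i_\beta}R$.

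Using this identity I would expand $B_\beta(Rw)$ and observe that the two gradient cross-terms $-2(G^\beta)^{i_\beta j_\beta}\partial_{i_\beta}R\,\partial_{j_\beta}w$ and $R\,\gamma^{j_\beta}\partial_{j_\beta}w$ cancel identically, leaving only a zeroth-order remainder:
\[
R^{-1}\Big({-}\Delta_{G_\beta}+\sum_{j_\beta}\gamma^{j_\beta}\partial_{j_\beta}\Big)(Rw)=\big({-}\Delta_{G_\beta}+P_\beta\big)w.
\]
A short computation using $\Delta_{G_\beta}R/R=\Delta_{G_\beta}\log R+|\nabla\log R|^2_{G_\beta}$ together with $\partial_{i_\beta}\log R=\frac12\gamma_{i_\beta}$ identifies the induced potential with precisely the function $P_\beta$ of the statement. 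Substituting into \eqref{Yamabe3}, dividing by $R$, and recalling $q_{c,g}=c^{-n+2}\Delta_g c^{n-2}$ from \eqref{Yamabe2}, the equation for $w$ becomes $\sum_\beta\frac{s^{\beta 1}}{\det S}(-\Delta_{G_\beta}+P_\beta)w+(q_{c,g}-\lambda c^4)w=0$. Imposing the hypothesis \eqref{eqnc} on $c$ is then exactly the condition, after multiplying through by $c^{n-2}$, that the potentials $P_\beta$ be cancelled and replaced by the arbitrary $\phi_\beta$ together with the constant $a_1$, which collapses the equation to \eqref{sepeqw}. Finally, since $A_\beta=-\Delta_{G_\beta}+\phi_\beta$ depends only on ${\bf x}^\beta$ and the coefficients $s^{\beta 1}/\det S$ carry the generalized St\"ackel structure, \eqref{sepeqw} has exactly the form of the Helmholtz equation treated in Theorem~\ref{maintheorem1}, so its block-separability and the rank condition \eqref{RankPainSchro} (with $B_\beta$ replaced by $A_\beta$) follow by the same argument.

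I expect the main obstacle to be the middle step: checking that conjugation by this specific $R$ produces exactly $P_\beta$, no more and no less. The cancellation of the gradient cross-terms is what makes block $R$-separability possible at all, and it hinges entirely on the adaptation identity $\gamma_{i_\beta}=2\,\partial_{i_\beta}\log R$, which in turn relies on the block-independence of the cofactors $s^{\beta 1}$. Matching the surviving zeroth-order term to the stated $P_\beta$, and then verifying that \eqref{eqnc} is precisely the compatibility condition that removes these potentials in favour of the $\phi_\beta$, is the only genuinely computational part.
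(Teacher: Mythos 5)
Your proposal follows the same route as the paper: the same gauge factor $R$ from \eqref{Rdef}, the same key identity $\gamma_{i_\beta}=2\,\partial_{i_\beta}\log R$ (which the paper phrases as $R$ solving the overdetermined system \eqref{elim}), cancellation of the first-order terms, and reduction via the condition imposed on $c$. Your block-by-block conjugation $R^{-1}B_\beta(R\,\cdot)=-\Delta_{G_\beta}+P_\beta$ is a tidier, equivalent packaging of the paper's computation of $\Delta_g R/R$ in \eqref{loglap}, and it is correct, including the sign: the induced zeroth-order potential is $+P_\beta$.

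The gap is precisely the step you deferred. The final matching, which you yourself call ``the only genuinely computational part,'' is asserted rather than performed, and with the signs as printed it fails. Your (correct) reduced equation is
\[
\sum_{\beta=1}^{r}\frac{s^{\beta 1}}{\det S}\big({-}\Delta_{G_\beta}+P_\beta\big)w+\big(q_{c,g}-\lambda c^{4}\big)w=0 .
\]
Dividing \eqref{eqnc} by $c^{n-2}$ gives $q_{c,g}-\lambda c^{4}=a_1+\sum_{\beta}\frac{s^{\beta 1}}{\det S}\,(P_\beta-\phi_\beta)$, and inserting this into the display yields
\[
\sum_{\beta=1}^{r}\frac{s^{\beta 1}}{\det S}\big({-}\Delta_{G_\beta}+2P_\beta-\phi_\beta\big)w=-a_1 w ,
\]
which is not \eqref{sepeqw}. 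This is not a harmless relabelling of the arbitrary data: since Theorem~\ref{maintheorem3} does not assume the Robertson conditions \eqref{GenRobertson}, the functions $P_\beta$ depend in general on \emph{all} groups of variables, so a residual potential $2P_\beta-\phi_\beta$ destroys block separability, the very thing the $R$-transformation is meant to produce. What the cancellation actually requires is $q_{c,g}-\lambda c^{4}=-a_1+\sum_{\beta}\frac{s^{\beta 1}}{\det S}(\phi_\beta-P_\beta)$, i.e.\ \eqref{eqnc} with the opposite sign in front of the parenthesis. The discrepancy originates in the paper itself: its equation \eqref{Helmw} carries $+\Delta_g R/R$ where the expansion of $-\Delta_g(Rw)$ produces $-\Delta_g R/R$, and the condition \eqref{eqncor} derived there does not coincide with \eqref{eqnc} as claimed. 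A careful execution of your plan would therefore have exposed a sign error in the statement being proved; by skipping the verification you conceal it, and as written your argument does not reach \eqref{sepeqw}.
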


We remark that the PDE of Yamabe type given by~(\ref{Yamabe1}) satisfied by the conformal fac\-tors~$c({\bf x})$ can be viewed as an extension of the generalized Robertson conditions to the setting of metrics that are \emph{conformally} Painlev\'e. Moreover, the existence of such conformal factors will be addressed in Section \ref{2} through Proposition~\ref{Yamabe}. In particular, it will be shown there that such metrics enlarge considerably the class of Painlev\'e metrics satisfying the generalized Robertson conditions~(\ref{GenRobertson}).

We conclude this section by referring to the interesting recent paper by Chanu and Rastelli~\cite{CR2019} that was published during the elaboration of the present paper. It turns out that Chanu and Rastelli define the Painlev\'e form of metrics like our Definition~\ref{Painleve} in connection with the notion of separability of the Hamilton--Jacobi equations in groups of variables. They provide several intrinsic characterizations of Painlev\'e metrics extending the ones stated in our Section~\ref{PainHelmSection}. We refer for instance to the beautiful invariant characterization of Painlev\'e metrics given in their Proposition~5.8 that allow them to classify all Painlev\'e metrics in~$\mathbb{E}_3$.

\section[Examples of Painlev\'e metrics satisfying the generalized Robertson conditions]{Examples of Painlev\'e metrics satisfying\\ the generalized Robertson conditions} \label{2}

In this section, we provide several examples of Painlev\'e metrics satisfying the generalized Robertson conditions (\ref{GenRobertson}) in all dimensions. Then we try to give a catalogue~-- as complete as possible~-- of such Painlev\'e metrics in dimensions $2$, $3$ and $4$. All our examples are local in the sense that they are defined in a single coordinate chart. Obtaining global examples of Riemannian or semi-Riemannian manifolds admitting an atlas of coordinate charts in which the metric is in Painlev\'e form appears to be a challenging task, well worthy of further investigation. This point will be discussed as one of the perspectives listed in Section~\ref{Perspectives}. From the notations used in definition (\ref{Painleve}), recall that a Painlev\'e metric is given in local coordinates $\big(x^1,\dots,x^n\big) = \big({\bf x}^{1},\dots,{\bf x}^{r}\big)$ where ${\bf x}^{\alpha}$ denotes group of variables indexed by $1\leq \alpha\leq r$ by
\begin{gather*} 
g = g_{ij}{\rm d}x^{i}{\rm d}x^{j}=\frac{\det S}{s^{11}} G_1 +\cdots+\frac{\det S}{s^{r1}} G_r,
\end{gather*}
for quadratic differential forms
\begin{gather} \label{Ga}
G_{\alpha}=G_{\alpha}\big({{\bf x}^{\alpha}}\big)=\sum_{i_{\alpha}=1_{\alpha}}^{l_{\alpha}}\sum_{j_{\alpha}=1_{\alpha}}^{l_{\alpha}}G_{\alpha}\big({{\bf x}^{\alpha}}\big)_{i_{\alpha}j_{\alpha}}{\rm d}x^{i_{\alpha}}{\rm d}x^{j_{\alpha}},\qquad 1\leq \alpha \leq r,
\end{gather}
and a generalized St\"ackel matrix $S$ of the form
\[
S=\left(
\begin{matrix}
s_{11}\big({{\bf x}^{1}}\big)&\dots&s_{1r}\big({{\bf x}^{1}}\big) \\
\vdots& & \vdots \\
s_{r1}\big({{\bf x}^{r}}\big)&\dots&s_{rr}\big({\bf x}^{r}\big)
\end{matrix}
\right).
\]

From (\ref{GenRobertson}), recall also that the Robertson conditions read
\[
\partial_{j_{\alpha}}\gamma_{i_{\beta}}=0,\qquad 1\leq \alpha \neq \beta \leq r, \quad 1_{\alpha}\leq i_{\alpha} \leq l_{\alpha},\quad 1_{\beta}\leq i_{\beta} \leq l_{\beta},
\]
where
\[
\gamma_{i_{\beta}}=-\partial_{i_{\beta}}\left[\log \frac{(\det S)^{\frac{n}{2}-1}s^{\beta 1}}{\big(s^{11}\big)^\frac{l_1}{2} \cdots \big(s^{r1}\big)^{\frac{l_r}{2}}}\right].
\]
Since $s^{\beta 1}$ does not depend on the group of variables $\bf{x^\beta}$, these conditions can be equivalently formulated as the algebraic-differential condition
\begin{gather} \label{RobertsonCond}
 \frac{(\det S)^{n-2}}{\big(s^{11}\big)^{l_1} \cdots \big(s^{r1}\big)^{l_r}} = \prod_{\alpha = 1}^r f_\alpha\big(\bf{x^\alpha}\big),
\end{gather}
where $f_\alpha = f_\alpha\big(\bf{x^\alpha}\big)$ are arbitrary functions of the indicated group of variables. We will use this last expression of the Robertson conditions to find different examples of Painlev\'e metrics in all dimensions that satisfy them. Our main examples are:

\begin{Example} \label{Exem0}
If $r = 2$ and $n=2$, then any St\"ackel matrix
\[
S = \left( \begin{matrix}
s_{11}\big( x^1\big) & s_{12}\big(x^1\big) \\
s_{21}\big(x^2\big)&s_{22}\big(x^2\big)
\end{matrix}
\right),
\]
satisfies automatically the usual Robertson conditions~(\ref{RobertsonCond}). The corresponding St\"ackel metrics in 2D can be given the following \emph{normal form}
\begin{gather*} 
 g = \big( f_1\big(x^1\big) + f_2\big(x^2\big) \big) \big( \big({\rm d}x^1\big)^2 + \big({\rm d}x^2\big)^2 \big),
\end{gather*}
where $f_\alpha$, $\alpha=1,2$ are arbitrary functions of ${\bf x}^\alpha$ such that $f_1 + f_2 > 0$. Thus we recover the classical Liouville metrics.

If $r = 2$ and $n \geq 3$, then any generalized St\"ackel matrix
\[
S = \left( \begin{matrix}
s_{11}\big({\bf x}^1\big) & s_{12}\big({\bf x}^1\big) \\
0&s_{22}\big({\bf x}^2\big)
\end{matrix}
\right),
\]
satisfy the generalized Robertson conditions~(\ref{RobertsonCond}). The corresponding Painlev\'e metrics can be given the following \emph{normal form}
\begin{gather*} 
 g = G_1 + f_1\big(\textbf{x}^1\big) G_2,
\end{gather*}
where $G_1$, $G_2$ are Riemannian metrics as in (\ref{Ga}) and $f_1 = f_1\big(\textbf{x}^1\big)$ is any positive function. Note that the metrics are classical \emph{warped products}.
\end{Example}

\begin{Example} \label{Exem1}Consider a generalized St\"ackel matrix of the form
\[
S=\left(
\begin{matrix}
s_{11}\big({{\bf x}^{1}}\big)&\dots&s_{1r}\big({{\bf x}^{1}}\big) \\
a_{21}&\dots&a_{2r} \\
\vdots& & \vdots \\
a_{r1}&\dots&a_{rr}
\end{matrix}
\right),
\]
where the entries $a_{\alpha \beta}$, $2\leq \alpha \leq r$, $1\leq \beta \leq r$, are real constants chosen such that~(\ref{StackelPositiveness}) is satisfied. Then it is immediate that
\[
 \frac{(\det S)^{n-2}}{\big(s^{11}\big)^{l_1} \cdots \big(s^{r1}\big)^{l_r}} = f_1\big({\bf x}^1\big),
\]
for some function $f_1$ depending only on ${\bf x}^1$. Hence the Robertson conditions~(\ref{RobertsonCond}) are trivially satisfied. The corresponding Painlev\'e metrics are of the general form of \emph{multiply warped products}
\begin{gather} \label{Painleve1}
 g = \sum_{\alpha =1}^r f_\alpha\big({\bf x}^1\big) G_\alpha,
\end{gather}
where $f_\alpha$ are arbitrary positive functions of ${\bf x}^1$ and $G_\alpha$ are given by~(\ref{Ga}). We note that the inverse anisotropic Calder\'on problem on a class of singular metrics of the form~(\ref{Painleve1}) is studied in~\cite{DKN2018a}.
\end{Example}

\begin{Example} \label{Exem2}
Our final class of examples is the most interesting one and comes from the theory of geodesically (or projectively) equivalent metrics (see for instance \cite{BoMa2015, Ma2005, Top1999, Top2002, TopMa2003}) and its link to particular St\"ackel systems called Benenti systems \cite{Ben2016, BoMa2003}. Note that it only applies to St\"ackel metrics satisfying the usual Robertson conditions, i.e., we assume that $r = n$ in the following. Note also that in the context of geodesically equivalent metrics, the commutation relations of Theorem~\ref{maintheorem2} were already established by direct computation in~\cite{MaTop2001}, and can also be seen to follow from the commutation of the corresponding Killing tensors with the Ricci tensor~\cite{MaKio2009}. Consider a St\"ackel matrix~$S$ of Vandermonde type (see \cite[Theorem~8.5]{Ben2016})
\[
 S = \big( (-1)^{n+\alpha - \beta + 1} f_\alpha^{n-\beta} \big)_{1 \leq \alpha, \beta \leq n},
 \]
where the functions $f_\alpha = f_\alpha(x^\alpha)$ only depend on the variable $x^\alpha$ and satisfy
\[
 f_1\big(x^1\big) < f_2\big(x^2\big) < \dots < f_r\big(x^r\big), \qquad \forall\, x^\alpha, \quad 1 \leq \alpha \leq n.
\]
An easy calculation shows that
\[
 \det(S) = \prod_{1 \leq \alpha < \beta \leq n} |f_\alpha - f_\beta|, \qquad s^{\gamma 1} = \prod_{\substack{ 1 \leq \alpha < \beta \leq n \\ \alpha, \beta \ne \gamma }} |f_\alpha - f_\beta|, \qquad \forall\, 1 \leq \gamma \leq n,
\]
from which we deduce that the Robertson conditions (\ref{RobertsonCond}) are satisfied. The corresponding St\"ackel metrics are given by
\begin{gather*} 
 g = \left( \prod_{\alpha \ne 1} |f_\alpha - f_1| \right) \big({\rm d}x^1\big)^2 + \left( \prod_{\alpha \ne 2} |f_\alpha - f_2| \right) \big({\rm d}x^2\big)^2 + \dots + \left( \prod_{\alpha \ne n} |f_\alpha - f_n| \right) \big({\rm d}x^n\big)^2.
\end{gather*}
\end{Example}

Let us now use the above classes of examples to give as exhaustively as possible a list of Painlev\'e metrics satisfying the generalized Robertson conditions in dimensions $n=2, 3, 4$. Note that the list of Painlev\'e metrics given below is only exhaustive as far as generic cases are concerned, and thus does not cover all the examples of Painlev\'e metrics satisfying the Robertson conditions, such as metrics of constant sectional curvature. Recall also that we always assume that $2 \leq r \leq n$.

\textbf{2D Painlev\'e metrics.} Let $n = 2$ and $r=2$. Then according to Example~\ref{Exem0}, the only Painlev\'e metrics are St\"ackel metrics given by
\begin{gather*} 
 g = \big( f_1\big(x^1\big) + f_2\big(x^2\big) \big) \big( \big({\rm d}x^1\big)^2 + \big({\rm d}x^2\big)^2 \big),
\end{gather*}
for some functions $f_1$ and $f_2$ such that $f_1 + f_2 > 0$. Hence Painlev\'e metrics satisfying the Robertson conditions in 2D are Liouville metrics.

\textbf{3D Painlev\'e metrics.} Let $n = 3$.
\begin{itemize}\itemsep=0pt
\item If $r=2$ and say $l_1= 1$, $l_2 = 2$, then according to Example~\ref{Exem0}, Painlev\'e metrics satisfying the generalized Robertson conditions are classical warped products; more precisely
\begin{gather*} 
 g = \big({\rm d}x^1\big)^2 + f_1\big(x^1\big) G_{2}, \qquad \textrm{or} \qquad g = f_2\big(\textbf{x}^2\big) \big({\rm d}x^1\big)^2 + G_2,
\end{gather*}
for some positive functions $f_1$ and $f_{2}$ depending only on the indicated groups of variables and any Riemannian metric
\[
 G_{2} = (G_{2})_{ij}\big(x^2,x^3\big) {\rm d}x^i {\rm d}x^j, \qquad i,j=2,3.
\]	

\item If $r=3$, then 3D Painlev\'e metrics are in fact St\"ackel metrics. According to Examp\-les~\ref{Exem1} and~\ref{Exem2}, we have the following possible expressions for St\"ackel metrics $g$ satisfying the Robertson conditions (see also~\cite{Gob2016}):
\begin{gather*} 
 g = f_1 \big({\rm d}x^1\big)^2 + h_1 \big({\rm d}x^2\big)^2 + k_1 \big({\rm d}x^3\big)^2,
\end{gather*}
or
\begin{gather*} 
 g = (f_3 - f_1)(f_2-f_1) \big({\rm d}x^1\big)^2 + (f_3-f_2)(f_2-f_1) \big({\rm d}x^2\big)^2 + (f_3-f_1)(f_3-f_2) \big({\rm d}x^3\big)^2,
\end{gather*}
where $f_1$, $h_1$, $k_1$ are functions of the variable $x^1$ only and $f_2$, $f_3$ are functions of the variables $x^2$ and $x^3$ respectively such that $f_1 < f_2 < f_3$. We add a last example to this list found by inspection of the Robertson conditions~(\ref{RobertsonCond}). Consider the St\"ackel matrix
\[
S=\left( \begin{matrix}
1&s_{12}&a s_{13}\\
0&s_{22}&s_{23} \\
0&s_{32}&s_{33}
\end{matrix} \right),
\]
where $a$ is a real constant and the $s_{ij}=s_{ij}\big(x^i\big)$ are arbitrary functions of the indicated variables for which $\det S \neq 0$. Then we can check directly that the Robertson conditions are satisfied and we obtain the following expression for the corresponding St\"ackel metrics
\begin{gather} \label{Painleve3D-4}
 g = \big({\rm d}x^1\big)^2 + \left( \frac{1}{s_{12}} \right) \left[ (s_{22}s_{33} - s_{23}s_{32}) \left( \frac{\big({\rm d}x^2\big)^2}{s_{32}-as_{33}} + \frac{\big({\rm d}x^3\big)^2}{s_{23}-as_{22}} \right) \right].
\end{gather}
Note in particular that such metrics are warped products and thus admit a conformal Killing vector field.
\end{itemize}

\textbf{4D Painlev\'e metrics.} Let $n=4$.
\begin{itemize}\itemsep=0pt
\item If $r=2$ and $l_1 + l_2 = 4$, then according to Example~\ref{Exem0}, Painlev\'e metrics that satisfy the generalized Robertson conditions are warped products of the type
\begin{gather*} 
 g = G_1 + f_1\big(\textbf{x}^1\big) G_{2}, \qquad \textrm{or} \qquad g = f_2\big(\textbf{x}^2\big) G_1 + G_{2},
\end{gather*}
for some positive functions $f_1$ and $f_{2}$ depending only on the indicated variables and any Riemannian metrics $G_1$, $G_2$ of the type~(\ref{Ga}).
\item If $r=3$ and $l_1 = 2$, $l_2 = l_3 = 1$ (the other cases are treated similarly), then according to Example~\ref{Exem1}, we obtain the following Painlev\'e metrics
\begin{gather*} 
 g = h G_1 + k \big({\rm d}x^3\big)^2 + l \big({\rm d}x^4\big)^2,
\end{gather*}
where $h$, $k$, $l$ are positive functions of the variables $x^1$, $x^2$ only and
\begin{gather*} G_1 = (G_1)_{ij}\big(x^1,x^2\big) {\rm d}x^i {\rm d}x^j, \qquad i,j=1,2\end{gather*} is any Riemannian metric. Following the same procedure as in example (\ref{Painleve3D-4}), we also obtain the following class of Painlev\'e metrics
\begin{gather*} 
 g = G_1 + \left( \frac{1}{s_{12}} \right) \left[ (s_{22}s_{33} - s_{23}s_{32}) \left( \frac{\big({\rm d}x^3\big)^2}{s_{32}-as_{33}} + \frac{\big({\rm d}x^4\big)^2}{s_{23}-as_{22}} \right) \right],
\end{gather*}
where $s_{12} = s_{12}\big(x^1,x^2\big)$, $s_{22} = s_{22}\big(x^3\big)$, $s_{23} = s_{23}\big(x^3\big)$, $s_{32} = s_{32}\big(x^4\big)$, $s_{33} = s_{33}\big(x^4\big)$ and $G_1 = (G_1)_{ij}\big(x^1,x^2\big) {\rm d}x^i {\rm d}x^j$, $i,j=1,2$. Note in particular that such metrics are warped products that admit a~conformal Killing vector field.

\item If $r=4$, the Painlev\'e metrics are St\"ackel metrics. According to Examples~\ref{Exem1} and~\ref{Exem2}, possible expressions for Painlev\'e metrics satisfying the Robertson conditions are
\begin{gather*} 
 g = f_1 \big({\rm d}x^1\big)^2 + h_1 \big({\rm d}x^2\big)^2 + k_1 \big({\rm d}x^3\big)^2 + l_1 \big({\rm d}x^4\big)^2,
\end{gather*}
or
\begin{gather*}
 g = (f_4 - f_1)(f_3-f_1)(f_2-f_1) \big({\rm d}x^1\big)^2 + (f_4-f_2)(f_3-f_2)(f_2-f_1) \big({\rm d}x^2\big)^2\nonumber \\
 \hphantom{g =}{} + (f_4-f_3)(f_3-f_1)(f_3-f_2) \big({\rm d}x^3\big)^2 + (f_4-f_1)(f_4-f_2)(f_4-f_3) \big({\rm d}x^4\big)^2, 
\end{gather*}
where $f_1$, $h_1$, $k_1$, $l_1$ are functions of the variable $x^1$ only and $f_2$, $f_3$, $f_4$ are functions of the variables $x^2$, $x^3$ and $x^4$ respectively such that $f_1 < f_2 < f_3 < f_4$. We add a last example to this list found by inspection of the Robertson conditions~(\ref{RobertsonCond}). Consider the St\"ackel matrix
\[
S=\left( \begin{matrix}
1&s_{12}&a s_{12}&s_{12}\\
0&1&s_{23}&s_{23} \\
0&0&s_{33}&s_{34} \\
0&0&s_{43}&s_{44}
\end{matrix} \right),
\]
where $s_{ij}=s_{ij}\big(x^i\big)$ arbitrary functions of the indicated variables. Then we can check directly that the generalized Robertson conditions are satisfied and we obtain the following expression for the corresponding St\"ackel metrics
\begin{gather*} 
 g = \big({\rm d}x^1\big)^2 + \left( \frac{1}{s_{12}} \right) \Bigg[ \big({\rm d}x^2\big)^2 \\
 \hphantom{g =}{} + \left( \frac{1}{s_{23}-1}\right) \left( (s_{33}s_{44} - s_{34}s_{43}) \left( \frac{\big({\rm d}x^3\big)^2}{s_{44}-s_{43}} + \frac{\big({\rm d}x^4\big)^2}{s_{33}-s_{34}} \right) \right) \Bigg].
\end{gather*}
Note that such metrics are warped products and that the metrics between square brackets are also warped products.
\end{itemize}

We end this section by giving some existence results for the conformal factor $c(\bf x)$ appearing in Theorem~\ref{maintheorem3}, in the case in which $M$ is a smooth compact manifold of dimension $n \geq 3$, with smooth boundary $\partial M$. We recall from Theorem~\ref{maintheorem3} that the conformal factor $c(\bf x)$ must satisfy a non-linear PDE of Yamabe type, given by
\begin{gather*}
\Delta_{g}c^{n-2}+f({\bf x}) c^{n-2} - \lambda c^{n+2}=0,
\end{gather*}
where
\begin{gather*}
f({\bf x})= \left( \sum_{\beta =1}^{r}\frac{s^{\beta 1}}{\det S}(\phi_{\beta}-P_{\beta})\right) - a_1 ,
\end{gather*}
and where $\phi_{\beta}=\phi_{\beta}\big({{\bf x}}^{\beta}\big)$ are arbitrary smooth functions. Setting $w = c^{n-2}$, we are thus interested in solutions $w = c^{n-2}$ of the non-linear elliptic PDE:
\begin{alignat}{3}
& \Delta_g w + f({\bf x})w - \lambda w^{\frac{n+2}{n-2}} =0 , \qquad && \textrm{on} \ M, &\nonumber\\
&  w = \eta, && \textrm{on} \ \partial M, & \label{Eqw}
\end{alignat}
where $\eta$ is any suitable smooth positive function on $\partial M$. We can solve~(\ref{Eqw}) by using the well-known technique of lower and upper solutions which we briefly recall here. Setting
\[
f({\bf x}, w) = f({\bf x})w - \lambda w^{\frac{n+2}{n-2}},
\]
we recall that an upper solution ${\overline{w}}$ is a function in $C^2(M) \cap C^0(\overline{M})$ satisfying
\begin{gather*}
\Delta_g {\overline{w}}+ f({\bf x},{\overline{w}}) \leq 0 \quad \textrm{on} \ M, \qquad \textrm{and} \qquad {\overline{w}}_{|\partial M} \geq \eta.
\end{gather*}
Similarly, a lower solution ${\underline{w}}$ is a function in $C^2(M) \cap C^0(\overline{M})$ satisfying
\begin{gather*}
\Delta_g {\underline{w}}+ f({\bf x},{\underline{w}}) \geq 0 \quad \textrm{on} \ M, \qquad \textrm{and} \qquad {\underline{w}}_{|\partial M} \leq \eta.
\end{gather*}

It is well-known that if we can find a lower solution ${\underline{w}}$ and an upper solution ${\overline{w}}$ satisfying ${\underline{w}} \leq {\overline{w}}$ on~$M$, then there exists a solution $w \in C^{\infty}(\overline{M})$ of~(\ref{Eqw}) such that ${\underline{w}} \leq w \leq {\overline{w}}$ on $M$.

Now, we can prove the following result:

\begin{Proposition} \label{Yamabe}\quad
\begin{enumerate}\itemsep=0pt
\item[$1.$] If $\lambda >0$ and $f({\bf x})>0$ on $M$, then for each positive function $\eta$ on $\partial M$, there exists a~smooth positive solution $w$ of~\eqref{Eqw}.
\item[$2.$] If $\lambda \leq 0$ and $f({\bf x}) < \lambda$ on $M$, then for each for each positive function $\eta$ on $\partial M$ such that $\eta \leq 1$, there exists a smooth positive solution $w$ of~\eqref{Eqw}.
    \end{enumerate}	
\end{Proposition}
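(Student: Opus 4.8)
The plan is to prove both statements via the method of lower and upper solutions recalled just before \eqref{Eqw}: the cited result already guarantees that once we produce a lower solution $\underline{w}$ and an upper solution $\overline{w}$ with $\underline{w}\le\overline{w}$ on $M$, there exists a solution $w\in C^\infty(\overline M)$ with $\underline{w}\le w\le\overline{w}$. The entire task thus reduces to (i) exhibiting an ordered pair of sub/supersolutions in each of the two cases, and (ii) checking strict positivity of the resulting solution. Writing $p=\frac{n+2}{n-2}>1$ for the exponent, I would look first for \emph{constant} sub/supersolutions, since $\Delta_g$ annihilates constants and the defining differential inequalities collapse to the algebraic conditions $f({\bf x})-\lambda C^{p-1}\le 0$ for an upper solution $\overline{w}\equiv C$, and $f({\bf x})-\lambda c^{p-1}\ge 0$ for a lower solution $\underline{w}\equiv c$.

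For statement 1, where $\lambda>0$ and $f>0$, I would take $\overline{w}\equiv C$ and $\underline{w}\equiv c$ with $c,C>0$. Since $M$ is compact, $f$ attains a strictly positive minimum and a finite maximum, so the two algebraic conditions above hold as soon as $C$ is large enough and $c$ is small enough; shrinking $c$ and enlarging $C$ further I can also arrange $c\le\min_{\partial M}\eta$, $C\ge\max_{\partial M}\eta$, and $c\le C$, which are the boundary and ordering requirements. The resulting solution obeys $w\ge c>0$, so positivity is automatic and statement 1 is finished.

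For statement 2, where $\lambda\le 0$ and $f<\lambda$, constants no longer bracket a positive solution from below in a useful way, and I would instead take $\overline{w}\equiv 1$ together with $\underline{w}\equiv 0$. The constant $1$ is an upper solution because its algebraic condition reads $f({\bf x})-\lambda\le 0$, which is exactly the hypothesis $f<\lambda$, while the boundary requirement $\overline{w}=1\ge\eta$ is precisely where the assumption $\eta\le 1$ enters; and $\underline{w}=0$ is a lower solution since $f({\bf x},0)=0$ and $0\le\eta$. These are ordered, so the method yields a solution with $0\le w\le 1$.

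The hard part, and the only step beyond elementary algebra, is to upgrade this solution of statement 2 from nonnegative to strictly positive, the natural lower solution being identically zero. Here I would rewrite the equation as $(\Delta_g+f({\bf x}))\,w=\lambda w^{p}$ and invoke the strong maximum principle: since $\lambda\le 0$ and $w\ge 0$ the right-hand side is $\le 0$, so $w$ is a supersolution for the operator $\Delta_g+f({\bf x})$ whose zeroth-order coefficient $f$ is $\le 0$ (indeed $f<\lambda\le 0$). Hopf's strong minimum principle then forbids an interior zero of $w$ unless $w\equiv 0$, a possibility ruled out by the strictly positive boundary data $\eta>0$; hence $w>0$ on $M$. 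This is exactly the step in which the sign hypotheses $\lambda\le 0$ and $f<\lambda$ are essential, and it completes the argument.
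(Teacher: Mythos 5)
Your proof is correct. For part 1 it coincides with the paper's argument: constant barriers $\underline{w}=\epsilon$ and $\overline{w}=C$, with $\epsilon$ small and $C$ large, using compactness of $M$ together with $\lambda>0$ and $f>0$. For part 2, however, you take a genuinely different route. The paper manufactures a \emph{strictly positive} lower solution by solving the auxiliary linear Dirichlet problem $\Delta_g\underline{w}+f({\bf x})\underline{w}=0$, $\underline{w}|_{\partial M}=\eta$ (the strong maximum principle giving $0<\underline{w}\le\max\eta$), and an upper solution by solving $\Delta_g\overline{w}+f({\bf x})\overline{w}=f({\bf x})(\max\eta)^{\frac{n+2}{n-2}}$, $\overline{w}|_{\partial M}=\eta$; three further applications of the maximum principle yield $\overline{w}\le\max\eta$, the supersolution inequality, and the ordering $\underline{w}\le\overline{w}$, and positivity of the final solution $w$ is then inherited from $w\ge\underline{w}>0$. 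You instead use the constant barriers $\underline{w}\equiv 0$ and $\overline{w}\equiv 1$, whose differential inequalities are immediate from $f<\lambda\le 0$ and $\eta\le 1$, and you recover strict positivity a posteriori by applying the Hopf strong minimum principle to $w$ itself, viewed as a nonnegative supersolution of $\Delta_g+f$ (with zeroth-order coefficient $f\le 0$) that equals $\eta>0$ on $\partial M$; this is the same tool the paper applies to its barrier $\underline{w}$, just deployed at the end on $w$ rather than at the start. Your route is shorter and avoids the two auxiliary linear problems; what the paper's construction buys is a positive lower barrier, so the solution ranges in $[\min\underline{w},\max\eta]$ where the nonlinearity $w\mapsto w^{(n+2)/(n-2)}$ is smooth and the quoted $C^\infty$ existence statement applies verbatim (with your barriers the nonlinearity is only $C^1$ at $w=0$ when $n\ge 5$, so the $C^\infty$ regularity of $w$ is cleanest obtained after positivity, by elliptic bootstrapping), and it also gives the slightly sharper bound $w\le\max\eta$ in place of $w\le 1$.
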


\begin{Remark}Since $\frac{s^{\beta 1}}{\det S} >0$ by the hypothesis~(\ref{StackelPositiveness}), we see that the assumption $f({\bf x})>0$ on~$M$ (resp.~$f({\bf x}) < \lambda$ on~$M$) is satisfied if the $\phi_{\beta}$'s are chosen sufficiently large (resp.~$-\phi_{\beta}$ are sufficiently large).
\end{Remark}

\begin{proof}1. We use the technique of lower and upper solutions. We define ${\underline{w}}= \epsilon$ where $\epsilon>0$ is small enough. Thus, $\underline{w} \leq \eta$ on
$\partial M$ and we have
\[
\Delta_g \underline{w} + f({\bf x}, \underline{w} ) =
\epsilon \big( f({\bf x}) -\lambda \epsilon^{{\frac{n+2}{n-2}} -1} \big) >0,
\]
so $\underline{w}$ is a lower solution. In the same way, we define $\overline{w} = C$ where $C$ is sufficiently large. Thus $\overline{w} \geq \eta$ and we have
\[
\Delta_g \overline{w} + f({\bf x}, \overline{w}) = C \big( f({\bf x}) -\lambda C^{{\frac{n+2}{n-2}} -1} \big) <0.
\]
It follows that $\overline{w}$ is an upper solution and clearly $\underline{w} \leq \overline{w}$. Thus, there exist a smooth positive solution~$w$ of~(\ref{Eqw}) satisfying $\epsilon \leq w \leq C$.

2. In the case $\lambda \leq 0$, $f({\bf x}) < \lambda$ on $M$ and $\eta \leq 1$ on $\partial M$, we define ${\underline{w}}$ as the unique solution of the Dirichlet problem
\begin{alignat*}{3}
 & \Delta_g \underline{w} + f({\bf x}) \underline{w} = 0 ,\qquad & & \textrm{on} \ M,&\nonumber \\
 & \underline{w} = \eta, \qquad && \textrm{on} \ \partial M.& 
\end{alignat*}
The strong maximum principle implies that $0 < \underline{w} \leq \max \eta$ on $M$.
Moreover, $\triangle_g \underline{w} + f({\bf x}, \underline{w}) = -\lambda (\underline{w})^{\frac{n+2}{n-2}} \geq 0$. Hence $\underline{w}$ is a lower solution of~(\ref{Eqw}).

Now, we define ${\overline{w}}$ as the unique solution of the Dirichlet problem
\begin{alignat*}{3} 
 & \Delta_g \overline{w} + f({\bf x}) \overline{w} = f({\bf x}) (\max \eta)^{\frac{n+2}{n-2}} , \qquad && \textrm{on} \ M,& \\
 & \overline{w} = \eta, && \textrm{on} \ \partial M.&
\end{alignat*}
According to the maximum principle, we also have $\overline{w} \geq 0$ on $M$. Setting $v = \overline{w}- \max \eta$, we see that
\[
\Delta_g v + f({\bf x}) v = f({\bf x}) \big(\max \eta^{\frac{n+2}{n-2}} - \max \eta\big) \geq 0,
\]
since $\eta \leq 1$. Hence, the maximum principle implies that $v \leq 0$ on~$M$, or equivalently $ \overline{w} \leq \max \eta$.

We deduce that
\[
\Delta_g \overline{w} + f({\bf x}, \overline{w}) =
f({\bf x})\big(\max \eta^{\frac{n+2}{n-2}} - \overline{w}^{\frac{n+2}{n-2}}\big) +(f({\bf x})-\lambda) \overline{w}^{\frac{n+2}{n-2}} \leq 0,
\]
since $f({\bf x})< \lambda$. Thus, $\overline{w}$ is an upper solution of (\ref{Eqw}). Finally, $\overline{w} - \underline{w}$ satisfies
\begin{alignat*}{3} 
&\Delta_g (\overline{w} - \underline{w}) +f({\bf x}) (\overline{w} - \underline{w}) = f({\bf x}) (\max \eta)^{\frac{n+2}{n-2}} < 0 , \qquad && \textrm{on} \ M,& \\
& \overline{w} - \underline{w} = 0, && \textrm{on} \ \partial M. &
\end{alignat*}
Then, the maximum principle implies again $\overline{w} \geq \underline{w}$. Then according to the lower and upper solutions technique, there exists a smooth positive solution $w$ of~(\ref{Eqw}).
\end{proof}

We conclude by remarking that there exist important classes of $n$-dimensional metrics of physical interest for which the geodesic flow admits $[n/2]-1$ Poisson-commuting quadratic first integrals arising from the presence of a principal Killing--Yano tensor with torsion. We refer to~\cite{Frolov2017, Houri2012} for results on their local classification and normal forms.

\section{Generalized Killing--Eisenhart and Levi-Civita conditions}

The proofs of the main results of our paper, that is Theorems \ref{maintheorem1}, \ref{maintheorem2} and \ref{maintheorem3}, make use of generalizations to Painlev\'e metrics of the classical Killing--Eisenhart equations and Levi-Civita separability conditions which hold for St\"ackel metrics (see for example \cite{BCR1-2002, BCR2-2002, Eis1934, KaMi1980, KaMi1981, Per1990}). We present these generalizations in the form of the following two lemmas, beginning with the Killing--Eisenhart equations. Thus in analogy with the St\"ackel case, we introduce the quantities
\begin{gather}\label{defrho}
\rho_{\beta \gamma}:=\frac{s^{\gamma \beta}}{s^{\gamma 1}}.
\end{gather}
Note that by the assumption (\ref{StackelPositiveness}), we have $s^{\gamma 1}\neq 0$.
The following lemma gives the generalization to the case of Painlev\'e metrics of the Killing--Eisenhart equations given in \cite{BCR1-2002,BCR2-2002,Eis1934,KaMi1980,KaMi1981} for St\"ackel metrics:
\begin{Lemma} \label{KE}
We have, for all $1\leq \beta, \delta, \gamma \leq r$, the identities
\begin{gather}\label{genKE}
\partial_{j_{\gamma}}(\rho_{ \beta \delta})= (\rho_{\beta \gamma}-\rho_{\beta \delta })\left(\partial_{j_{\gamma}}{\log \frac{s^{\delta 1}}{\det S}}\right).
\end{gather}
\end{Lemma}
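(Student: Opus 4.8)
The plan is to recast everything in terms of the inverse matrix $A := S^{-1}$, whose entries are related to the cofactors by the classical adjugate formula $A_{ij} = s^{ji}/\det S$. Under this dictionary the two quantities appearing in the statement become pure entries of $A$: since $\rho_{\beta\delta} = s^{\delta\beta}/s^{\delta 1} = A_{\beta\delta}/A_{1\delta}$ and $s^{\delta 1}/\det S = A_{1\delta}$, the claimed identity \eqref{genKE} is equivalent to
\[
\partial_{j_\gamma}\frac{A_{\beta\delta}}{A_{1\delta}} = \left(\frac{A_{\beta\gamma}}{A_{1\gamma}}-\frac{A_{\beta\delta}}{A_{1\delta}}\right)\partial_{j_\gamma}\log A_{1\delta}.
\]
This reformulation is the conceptual heart of the argument; once it is in place, the proof reduces to a short computation with the inverse matrix.

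The computational engine is the standard formula $\partial A = -A(\partial S)A$, obtained by differentiating $SA = I$. The crucial structural input is that each row of the generalized St\"ackel matrix depends on a single group of variables, that is, $\partial_{j_\gamma}s_{\mu\nu} = 0$ unless $\mu = \gamma$. Feeding this into the derivative-of-inverse formula collapses the double sum to a single one and produces the factorization
\[
\partial_{j_\gamma}A_{i\delta} = -A_{i\gamma}\,T^\gamma_\delta, \qquad T^\gamma_\delta := \sum_{\nu}(\partial_{j_\gamma}s_{\gamma\nu})A_{\nu\delta},
\]
valid for every row index $i$. The point I want to stress is that the factor $T^\gamma_\delta$ is \emph{independent of $i$}; this common factor is exactly what makes the separated form of the right-hand side emerge, and recognizing it is really the only nontrivial step of the proof.

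With this factorization the remainder is the quotient rule. Differentiating $\rho_{\beta\delta} = A_{\beta\delta}/A_{1\delta}$ and inserting $\partial_{j_\gamma}A_{\beta\delta} = -A_{\beta\gamma}T^\gamma_\delta$ together with $\partial_{j_\gamma}A_{1\delta} = -A_{1\gamma}T^\gamma_\delta$, the common factor $T^\gamma_\delta$ pulls through and one is left with
\[
\partial_{j_\gamma}\rho_{\beta\delta} = \frac{T^\gamma_\delta\big(A_{\beta\delta}A_{1\gamma} - A_{\beta\gamma}A_{1\delta}\big)}{A_{1\delta}^2}.
\]
On the other side, using $\partial_{j_\gamma}\log A_{1\delta} = -A_{1\gamma}T^\gamma_\delta/A_{1\delta}$ and $\rho_{\beta\gamma}-\rho_{\beta\delta} = (A_{\beta\gamma}A_{1\delta}-A_{\beta\delta}A_{1\gamma})/(A_{1\gamma}A_{1\delta})$, the right-hand side of \eqref{genKE} evaluates to the identical expression, closing the identity. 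There is no serious obstacle beyond getting the cofactor-versus-inverse index bookkeeping right; as a consistency check, the degenerate case $\gamma=\delta$ is automatic, since cofactors obtained by deleting row $\delta$ are independent of ${\bf x}^\delta$, which forces both $T^\delta_\delta=0$ and $\rho_{\beta\delta}-\rho_{\beta\delta}=0$.
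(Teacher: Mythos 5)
Your proof is correct, but it takes a genuinely different route from the paper's. The paper never computes with $S^{-1}$ directly: it starts from the Poisson-commutation relations \eqref{PoissonCommute} (quoted from \cite{Per1990}), writes $\{K_{(1)},K_{(\beta)}\}=0$ out in coordinates, and uses the invertibility of the block matrices $\big(G^{\gamma}\big)^{i_\gamma j_\gamma}$ together with $\partial_{i_\gamma}\big(G^{\beta}\big)^{i_\beta j_\beta}=0$ for $\beta\neq\gamma$ to strip away all dependence on the block metrics, arriving at the pure St\"ackel-matrix relations \eqref{Stackelcomm}; setting $\delta=\gamma$ there gives $\partial_{p_\gamma}\rho_{\beta\gamma}=0$, and a short rearrangement yields \eqref{genKE}. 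You instead prove the identity unconditionally by linear algebra on $A=S^{-1}$: differentiating $SA=I$ and using the row structure of the generalized St\"ackel matrix gives the factorization $\partial_{j_\gamma}A_{i\delta}=-A_{i\gamma}T^{\gamma}_{\delta}$ with $T^{\gamma}_{\delta}$ independent of the row index $i$, after which the quotient rule closes the identity. What your route buys is self-containedness: it needs no input about Killing tensors or first integrals, whereas the paper's proof rests on the commutation relations \eqref{PoissonCommute}, which are cited without proof. What the paper's route buys is the explicit identification of \eqref{genKE} as the coordinate content of those Poisson brackets: the intermediate form \eqref{Stackelcomm} is reused verbatim in the proof of Theorem \ref{maintheorem2}, and the equivalence between Poisson commutation and the Killing--Eisenhart equations is precisely what the unlabelled proposition of Section \ref{PainHJSection} (following \cite{CR2019}) records, something your direct computation does not exhibit.

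One blemish in your closing consistency check: the claim that $T^{\delta}_{\delta}=0$ is false in general. Expanding $\det S$ along row $\delta$, whose cofactors are independent of ${\bf x}^{\delta}$, gives $T^{\delta}_{\delta}=\sum_{\nu}\big(\partial_{j_\delta}s_{\delta\nu}\big)s^{\delta\nu}/\det S=\partial_{j_\delta}\log\det S$, which has no reason to vanish. This is harmless for your argument: in the case $\gamma=\delta$ both sides of your final identity vanish for a different reason, namely the antisymmetric factor $A_{\beta\delta}A_{1\gamma}-A_{\beta\gamma}A_{1\delta}$ is zero when $\gamma=\delta$ (equivalently, $\rho_{\beta\delta}=s^{\delta\beta}/s^{\delta 1}$ is built from cofactors independent of ${\bf x}^{\delta}$, so $\partial_{j_\delta}\rho_{\beta\delta}=0$ directly), not because $T^{\delta}_{\delta}$ vanishes.
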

We will refer to \eqref{genKE} as the {\emph{generalized Killing--Eisenhart equations.}}
\begin{proof}The Poisson bracket relations (\ref{PoissonCommute}) imply
\begin{gather}\label{Painlevecomm}
\sum_{\pgamma=1}^{\lgamma}\big(\big(\partial_{\pgamma} K_{(1)}^{\idelta \jdelta}\big) K_{(\beta)}^{\pgamma \kgamma}-\big(\partial_{\pgamma} K_{(\beta)}^{\idelta \jdelta}\big) K_{(1)}^{\pgamma \kgamma}\big)=0,
\end{gather}
where $1\leq \idelta, \jdelta \leq \ldelta$, $1\leq \kgamma \leq \lgamma$ and $1\leq \delta, \gamma \leq r$. Using the expressions~(\ref{Killingtensor}) of the Killing tensors $\big(K_{(\beta)}^{\idelta \jdelta}\big)$, the fact that $\partial_{\igamma}\big(G^{\beta}\big)^{\ibeta \jbeta}=0$ for $\beta \neq \gamma$, and the fact that each of the $\lgamma \times \lgamma$ matrices $\big(\big(G^{\gamma}\big)^{\igamma \jgamma}\big)$, is invertible, we obtain that the relations~(\ref{Painlevecomm}) are equivalent to
\begin{gather}\label{Stackelcomm}
\left[\partial_{\pgamma}\left(\frac{s^{\delta 1}}{\det S}\right)\right] \frac{s^{\gamma \beta}}{\det S}-\left[\partial_{\pgamma}\left(\frac{s^{\delta \beta}}{\det S}\right)\right] \frac{s^{\gamma 1}}{\det S}=0,
\end{gather}
where $1\leq \delta, \gamma \leq r$, $1_{\gamma}\leq \pgamma \leq \lgamma $. In particular, the relations (\ref{Painlevecomm}) are independent of the block metrics (\ref{blockmetric}). Setting $\delta = \gamma$ in~(\ref{Stackelcomm}), we obtain
\[
\partial_{\pgamma}\left(\frac{s^{\gamma \beta}}{s^{\gamma 1}}\right)=0,
\]
for $1_{\gamma}\leq \pgamma \leq \lgamma$, so that using the definition (\ref{defrho}) of the quantities $\rho_{\beta \gamma}$, the relations (\ref{Stackelcomm}) take the form
\[
\partial_{\pgamma}\left(\rho_{ \beta \delta}\frac{s^{\delta 1}}{\det S}\right)\frac{s^{\gamma 1}}{\det S}=\left(\partial_{\pgamma}\left(\frac{s^{\delta 1}}{\det S}\right)\right)\rho_{\beta \gamma} \frac{s^{\gamma 1}}{\det S},
\]
which in turn reduces to (\ref{genKE}), thus proving our claim.
\end{proof}

In order to state the generalization to Painlev\'e metrics of the classical Levi-Civita conditions which hold for St\"ackel metrics, we make the hypothesis
\begin{gather}\label{H}
\rho_{\beta \delta}\neq \rho_{\beta \epsilon},\qquad \forall\, 1\leq \beta \leq r,\quad \forall\, 1\leq \delta\neq \epsilon \leq r.
\end{gather}

The generalization of the Levi-Civita conditions to the Painlev\'e case is now given by the following:
\begin{Lemma} \label{LeviCivita}
A generalized St\"ackel matrix \eqref{genStackelmat} corresponding to a Painlev\'e metric~\eqref{Painleveform} for which the genericity hypothesis~\eqref{H} holds true satisfies, for all $1\leq \beta, \gamma \leq r$,
the identities
\begin{gather}
\left(\partial_{\jalpha}\log\left(\frac{s^{\gamma 1}}{\det S}\right)\right) \left(\partial_{\kbeta}\log\left(\frac{s^{\alpha 1}}{\det S}\right)\right)+\left(\partial_{\jalpha}\log\left(\frac{s^{\beta 1}}{\det S}\right)\right)\left(\partial_{\kbeta}\log\left(\frac{s^{\gamma 1}}{\det S}\right)\right)\nonumber\\
\qquad{} -\frac{\det S}{s^{\gamma 1}}\partial_{\jalpha}\partial_{\kbeta}\left(\frac{s^{\gamma 1}}{\det S}\right)=0.\label{genLeviCivita}
\end{gather}
In particular, we have the identities
\begin{gather}\label{secondorder}
\frac{\partial}{\partial x^{\jalpha}}\frac{\partial }{\partial x^{\kbeta}}\left(\frac{\det S}{s^{\alpha1}s^{\beta1}}\right)=0,
\end{gather}
for all
$1\leq \alpha, \beta \leq r$.
\end{Lemma}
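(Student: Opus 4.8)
The plan is to derive \eqref{genLeviCivita} from the generalized Killing--Eisenhart equations \eqref{genKE} of Lemma~\ref{KE}, using the genericity hypothesis \eqref{H} as the mechanism that upgrades first-order information into second-order information; the identities \eqref{secondorder} will then follow as a corollary. Throughout I abbreviate $L_\gamma:=\log\big(s^{\gamma 1}/\det S\big)$, so that \eqref{genKE} reads $\partial_{j_\gamma}\rho_{\beta\delta}=(\rho_{\beta\gamma}-\rho_{\beta\delta})\,\partial_{j_\gamma}L_\delta$. Since $\frac{\det S}{s^{\gamma1}}\,\partial_{j_\alpha}\partial_{k_\beta}\frac{s^{\gamma1}}{\det S}=\partial_{j_\alpha}\partial_{k_\beta}L_\gamma+(\partial_{j_\alpha}L_\gamma)(\partial_{k_\beta}L_\gamma)$, the identity \eqref{genLeviCivita} is equivalent to the second-order relation
\[
\partial_{j_\alpha}\partial_{k_\beta}L_\gamma=(\partial_{j_\alpha}L_\gamma)(\partial_{k_\beta}L_\alpha)+(\partial_{j_\alpha}L_\beta)(\partial_{k_\beta}L_\gamma)-(\partial_{j_\alpha}L_\gamma)(\partial_{k_\beta}L_\gamma),
\]
which I will establish for $\alpha\neq\beta$ and arbitrary $\gamma$.

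First I would write down the compatibility (Clairaut) condition $\partial_{k_\epsilon}\partial_{j_\gamma}\rho_{\beta\delta}=\partial_{j_\gamma}\partial_{k_\epsilon}\rho_{\beta\delta}$ for two distinct groups $\gamma\neq\epsilon$. Applying $\partial_{k_\epsilon}$ to the Killing--Eisenhart equation for $\partial_{j_\gamma}\rho_{\beta\delta}$, and $\partial_{j_\gamma}$ to the one for $\partial_{k_\epsilon}\rho_{\beta\delta}$, every first derivative of a $\rho$ produced by the product rule can again be eliminated using \eqref{genKE}. Equating the two mixed partials, the surviving second-derivative term $\partial_{j_\gamma}\partial_{k_\epsilon}L_\delta$ carries the coefficient $(\rho_{\beta\gamma}-\rho_{\beta\delta})-(\rho_{\beta\epsilon}-\rho_{\beta\delta})=\rho_{\beta\gamma}-\rho_{\beta\epsilon}$, while the remaining purely first-order contributions collapse --- after combining the two $\partial L_\delta\,\partial L_\delta$ terms --- into $(\rho_{\beta\gamma}-\rho_{\beta\epsilon})$ times a bracket in the first derivatives of the $L$'s alone. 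Notably, the free index $\beta$ disappears entirely from both sides.

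The only delicate point, and the step I expect to be the main obstacle, is precisely this bookkeeping: one must track the four first-order products generated by the two applications of the product rule and verify that they assemble into the single common factor $\rho_{\beta\gamma}-\rho_{\beta\epsilon}$. Granting this, the genericity hypothesis \eqref{H} guarantees $\rho_{\beta\gamma}\neq\rho_{\beta\epsilon}$ whenever $\gamma\neq\epsilon$, so I may divide through by this factor. What remains expresses $\partial_{j_\gamma}\partial_{k_\epsilon}L_\delta$ through first derivatives of the $L$'s, and relabeling the three group indices $(\gamma,\epsilon,\delta)$ as $(\alpha,\beta,\gamma)$ yields the second-order relation displayed above, that is, \eqref{genLeviCivita}.

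For the identities \eqref{secondorder} I would exploit that $s^{\gamma1}$ is independent of the group $\mathbf{x}^\gamma$: differentiating $L_\gamma=\log s^{\gamma1}-\log\det S$ along group $\gamma$ gives $\partial_{j_\gamma}\log\det S=-\partial_{j_\gamma}L_\gamma$. With $\alpha\neq\beta$, writing $\log\big(\frac{\det S}{s^{\alpha1}s^{\beta1}}\big)=-L_\alpha-L_\beta-\log\det S$ and applying this relation, the single derivatives telescope to $\partial_{j_\alpha}\log\big(\frac{\det S}{s^{\alpha1}s^{\beta1}}\big)=-\partial_{j_\alpha}L_\beta$ and $\partial_{k_\beta}\log\big(\frac{\det S}{s^{\alpha1}s^{\beta1}}\big)=-\partial_{k_\beta}L_\alpha$. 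Taking one further derivative and invoking \eqref{genLeviCivita} in the special case $\gamma=\alpha$ --- in which two of its three right-hand terms cancel, leaving $\partial_{j_\alpha}\partial_{k_\beta}L_\alpha=(\partial_{j_\alpha}L_\beta)(\partial_{k_\beta}L_\alpha)$ --- one finds that $\partial_{j_\alpha}\partial_{k_\beta}\log(\cdot)+(\partial_{j_\alpha}\log(\cdot))(\partial_{k_\beta}\log(\cdot))$ vanishes identically. Since this quantity equals $\big(\frac{\det S}{s^{\alpha1}s^{\beta1}}\big)^{-1}\partial_{j_\alpha}\partial_{k_\beta}\big(\frac{\det S}{s^{\alpha1}s^{\beta1}}\big)$, the identity \eqref{secondorder} follows.
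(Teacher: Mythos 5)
Your proposal is correct and takes essentially the same route as the paper: the paper likewise derives \eqref{genLeviCivita} from the Clairaut compatibility conditions of the generalized Killing--Eisenhart equations \eqref{genKE}, extracts the common factor $\rho_{\beta\gamma}-\rho_{\beta\epsilon}$, divides it out using the genericity hypothesis \eqref{H}, expands the logarithmic second derivative via $\partial_x\partial_y \log f = f^{-1}\partial_x\partial_y f - (\partial_x \log f)(\partial_y \log f)$, and relabels indices, and the bookkeeping you flag as delicate does assemble into that single common factor exactly as you describe. Your derivation of \eqref{secondorder} (specializing \eqref{genLeviCivita} to $\gamma=\alpha$ and using that the cofactors are independent of their own group of variables) is the same computation as the paper's, which sets $\delta=\epsilon$ in the integrability condition before dividing by the common factor.
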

We will likewise refer to the identities (\ref{genLeviCivita}) as the {\emph{generalized Levi-Civita conditions}}.
\begin{Remark}\quad
\begin{enumerate}\itemsep=0pt
\item[1)] In the case of St\"ackel metrics, that is when $r=n$, the conditions (\ref{genLeviCivita}) reduce to the classical Levi-Civita conditions, given by
\begin{gather*}
\left(\partial_{j}\log\left(\frac{s^{l1}}{\det S}\right)\right)\left(\partial_{k}\log\left(\frac{s^{j1}}{\det S}\right)\right)+\left(\partial_{j}\log\left(\frac{s^{k1}}{\det S}\right)\right)\left(\partial_{k}\log \left(\frac{s^{l1}}{\det S}\right)\right)\\
\qquad{} -\frac{\det S}{s^{l1}}\partial_{j}\partial_{k}\left(\frac{s^{l1}}{\det S}\right)=0.
\end{gather*}
\item[2)] We shall show at the end of the next Section~\ref{PainHJSection} that the generalized Levi-Civita condi\-tions~(\ref{genLeviCivita}) hold in fact for all Painlev\'e metrics~(\ref{Painleveform}) without assuming our genericity hypothesis~(\ref{H}). Nevertheless, it is easier to obtain them from the Killing--Eisenhart equations under the assumption~(\ref{H}) as we do below.
\end{enumerate}
\end{Remark}
\begin{proof}
The general idea behind the proof is similar to the one that is used in the classical St\"ackel case, and is based on expressing the integrability conditions for the generalized Killing--Eisenhart (\ref{genKE}), with additional twist resulting from the fact that the separation is in groups of variables only. We let $1\leq \alpha \neq \beta \leq r$ denote fixed indices and introduce the simplified notation
\[
\rho_{\delta}:=\rho_{ \beta \delta}=\frac{s^{\delta \beta}}{s^{\delta 1}},
\]
so as to make the expressions a bit more compact. The generalized Killing--Eisenhart equa\-tions~(\ref{genKE}) thus take the form
\begin{gather*}
\partial_{\pgamma} \rho_{\delta}= (\rho_{\gamma}-\rho_{\delta })\left(\partial_{\pgamma}\left(\log \frac{s^{\delta 1}}{\det S}\right)\right),
\end{gather*}
where $1\leq \delta,\gamma \leq r$. The integrability conditions
\begin{gather}\label{intKE}
\partial_{\kepsilon}\big(\partial_{\pgamma} \rhodelta \big)=\partial_{\pgamma}\big(\partial_{\kepsilon} \rhodelta \big)
\end{gather}
are now easily obtained. We have
\begin{gather*}
\partial_{\kepsilon} (\partial_{\pgamma} \rhodelta )= \left[(\rhoepsilon-\rhogamma)\partial_{\kepsilon}\log\left(\frac{s^{\gamma 1}}{\det S}\right)-(\rhoepsilon-\rhodelta)\partial_{\kepsilon}\log\left(\frac{s^{\delta 1}}{\det S}\right)\right] \partial_{\pgamma} \log \frac{s^{\delta 1}}{\det S} \\
\hphantom{\partial_{\kepsilon} (\partial_{\pgamma} \rhodelta )=}{} + (\rhogamma-\rhodelta)\partial_{\kepsilon}\partial_{\pgamma}\log\left(\frac{s^{\delta 1}}{\det S}\right),
\end{gather*}
so that (\ref{intKE}) becomes
\begin{gather}
(\rho_{\gamma}-\rho_{\epsilon})\left[\left(\partial_{\kepsilon}\log\left(\frac{s^{\gamma 1}}{\det S}\right)\right)\left(\partial_{\pgamma}\log\left(\frac{s^{\delta 1}}{\det S}\right)\right)\right.\nonumber\\
\qquad{} +\left(\partial_{\kepsilon}\log\left(\frac{s^{\delta 1}}{\det S}\right)\right)\left(\partial_{\pgamma}\log\left(\frac{s^{\epsilon 1}}{\det S}\right)\right) \nonumber\\
\left.\qquad{} -\left(\partial_{\kepsilon}\log\left(\frac{s^{\delta 1}}{\det S}\right)\right)\left(\partial_{\pgamma}\log\left(\frac{s^{\delta 1}}{\det S}\right)\right)-\partial_{\kepsilon}\partial_{\pgamma}\log\left(\frac{s^{\delta 1}}{\det S}\right)\right]=0,\label{compat}
\end{gather}
where $1\leq \epsilon \neq \gamma \leq r$ and $1\leq \delta \leq r$. Using the rank hypothesis (\ref{H}), expanding the logarithmic second derivative in (\ref{compat}) using the identity
\begin{gather}\label{identity}
\partial_{xy}\log f = \frac{1}{f}\partial_{x}\partial_{y}f-(\partial_{x}\log f)(\partial_{y}\log f)
\end{gather}
and
 relabeling the indices, we obtain
\begin{gather*}
\left(\partial_{\jalpha}\log\left(\frac{s^{\gamma 1}}{\det S}\right)\right) \left(\partial_{\kbeta}\log\left(\frac{s^{\alpha 1}}{\det S}\right)\right)+\left(\partial_{\jalpha}\log\left(\frac{s^{\beta 1}}{\det S}\right)\right)\left(\partial_{\kbeta}\log\left(\frac{s^{\gamma 1}}{\det S}\right)\right)\\
\qquad{}-\frac{\det S}{s^{\gamma 1}}\partial_{\jalpha}\partial_{\kbeta}\left(\frac{s^{\gamma 1}}{\det S}\right)=0,
\end{gather*}
which is precisely (\ref{genLeviCivita}). Finally, the relations~(\ref{secondorder}) are obtained by setting $\delta = \epsilon$ in the integrability condition (\ref{compat}), using the identity (\ref{identity}), and the fact that the cofactors $s^{\gamma \alpha}$ and $s^{\epsilon \alpha}$ are independent of the groups of variables ${\bf x}^{\gamma}$ and ${\bf x}^{\epsilon}$ respectively.
\end{proof}

\section{Different characterizations of Painlev\'e metrics}\label{PainHJSection}

Let us start with the characterization of Painlev\'e metrics in terms of complete additive separability of the Hamilton--Jacobi equations stated in Proposition \ref{HJSeparability-Painleve}.

\begin{proof}[Proof of Proposition \ref{HJSeparability-Painleve}]

In block orthogonal coordinates $(\textbf{x}^\alpha)$ associated to the metric
\[
 g = \sum_{\alpha = 1}^r c_\alpha G_\alpha,
\]
the Hamilton--Jacobi equation (\ref{HJ}) reads
\begin{gather} \label{HJ1}
 \sum_{\alpha =1}^{r} c^\alpha \sum_{\ialpha =1_{\alpha}}^{\lalpha}\sum_{\jalpha =1_{\alpha}}^{\lalpha} \big(G^{\alpha}\big)^{\ialpha \jalpha}\frac{\partial W}{\partial x^{\ialpha} }\frac{\partial W}{\partial x^{\jalpha}}=E=:a_{1},
\end{gather}
where $c^\alpha = (c_\alpha)^{-1}$. Assume that there exists a solution $W$ in the block-separable form~(\ref{Painsepform}) satisfying the completeness condition~(\ref{rankPainHJ}). Then (\ref{HJ1}) can be written as
\begin{gather} \label{HJ2}
 \sum_{\alpha =1}^{r} c^\alpha \sum_{\ialpha =1_{\alpha}}^{\lalpha}\sum_{\jalpha =1_{\alpha}}^{\lalpha} \big(G^{\alpha}\big)^{\ialpha \jalpha}\frac{\partial W_\alpha}{\partial x^{\ialpha} }\frac{\partial W_\alpha}{\partial x^{\jalpha}}=E=:a_{1}.
\end{gather}
Differentiating (\ref{HJ2}) with respect to $a_\beta$, we get
\begin{gather} \label{HJ3}
 \sum_{\alpha =1}^{r} c^\alpha \sum_{\ialpha =1_{\alpha}}^{\lalpha}\sum_{\jalpha =1_{\alpha}}^{\lalpha} 2 \big(G^{\alpha}\big)^{\ialpha \jalpha}\frac{\partial W_\alpha}{\partial x^{\ialpha} } \frac{\partial^2 W_\alpha}{\partial a_\beta \partial x^{\jalpha}} = \delta_{1\beta} .
\end{gather}
From (\ref{Painsepform}) and (\ref{rankPainHJ}), it follows that the family of matrices
\begin{gather*} 
 S(a_1,\dots,a_r) = (S_{\alpha \beta})(a_1,\dots,a_r) := 2 \big(G^{\alpha}\big)^{\ialpha \jalpha}\frac{\partial W_\alpha}{\partial x^{\ialpha} } \frac{\partial^2 W_\alpha}{\partial a_\beta \partial x^{\jalpha}}
\end{gather*}
are non-singular st\"ackel matrices of rank $r$ for all $(a_1,\dots,a_r) \in U$. Using the invertibility of $S(a_1,\dots,a_r)$ which is equivalent to the completeness condition (\ref{rankPainHJ}), we get immediately from~(\ref{HJ3}) that
\[
 c^\alpha = \frac{s^{\alpha 1}}{\det S},
\]
which proves that $g$ is a Painlev\'e metric.

Conversely, assume that $g$ is a Painlev\'e metric of the form (\ref{Painleveform}). Then the Hamilton--Jacobi equation (\ref{HJ}) takes the form
\begin{gather}\label{HJPainform}
\sum_{\alpha =1}^{r} \left(\frac{s^{\alpha 1}}{\det S}\right) \sum_{\ialpha =1_{\alpha}}^{\lalpha}\sum_{\jalpha =1_{\alpha}}^{\lalpha} \big(G^{\alpha}\big)^{\ialpha \jalpha}\frac{\partial W}{\partial x^{\ialpha} }\frac{\partial W}{\partial x^{\jalpha}}=E=:a_{1}.
\end{gather}
Now, since
\[
\sum_{\alpha =1}^{r}\left(\frac{s^{\alpha 1}}{\det S}\right)s_{\alpha \beta}=\delta_{1 \beta},
\]
we may rewrite (\ref{HJPainform}) as
\begin{gather} \label{HJ4}
\sum_{\alpha =1}^{r}\frac{s^{\alpha 1}}{\det S}\left(\sum_{\ialpha =1_{\alpha}}^{\lalpha}\sum_{\jalpha =1_{\alpha}}^{\lalpha}\big(G^{\alpha}\big)^{\ialpha \jalpha}\frac{\partial W}{\partial x^{\ialpha} }\frac{\partial W}{\partial x^{\jalpha}}-\sum_{\beta=1}^{r}s_{\alpha \beta}a_{\beta}\right)=0,
\end{gather}
for any $(a_1,\dots,a_r) \in U \subset \mathbb{R}^r$. Choosing $W$ in the block-separable form~(\ref{Painsepform}), we see that any solution of the reduced Hamilton--Jacobi equations
\begin{gather} \label{HJ5}
\sum_{\ialpha =1_{\alpha}}^{\lalpha}\sum_{\jalpha =1_{\alpha}}^{\lalpha}\big(G^{\alpha}\big)^{\ialpha \jalpha}\frac{\partial W_{\alpha}}{\partial x^{\ialpha} }\frac{\partial W_{\alpha}}{\partial x^{\jalpha}}=\sum_{\beta = 1}^{r}s_{\alpha \beta}a_{\beta},
\end{gather}
will provide a solution of (\ref{HJ4}). But the latter equation always admit \emph{locally} solutions $W_\alpha(\textbf{x}^\alpha, \allowbreak a_1,\dots,a_r)$ by standard PDE results \cite{Tay2011a}. Differentiating (\ref{HJ5}) with respect to $a_{\beta}$, we obtain
\begin{gather*} 
2\sum_{\alpha =1}^{r}\sum_{\ialpha =1_{\alpha}}^{\lalpha}\sum_{\jalpha =1_{\alpha}}^{\lalpha}\big(G^{\alpha}\big)^{\ialpha \jalpha}\left(\frac{\partial W_{\alpha}}{\partial x^{\ialpha}}\right)\left( \frac{\partial^{2}W_{\alpha}}{\partial a_{\beta}\partial x^{\jalpha}}\right)=s_{\alpha \gamma}.
\end{gather*}
Since the generalized St\"ackel matrix (\ref{genStackelmat}) is to be non-singular, it follows that the rank condition that must be satisfied by our block-separable parametrized family of solutions of the Hamilton--Jacobi equation is precisely~(\ref{rankPainHJ}), thus proving that metrics of the Painlev\'e form~(\ref{Painleveform}) are indeed characterized by the existence of a parametrized family of solutions of the Hamilton--Jacobi equation satisfying a suitable completeness condition.
\end{proof}

Let us now give another proof of Proposition~\ref{HJSeparability-Painleve} that will allow us to characterize Painlev\'e metrics in terms of the generalized Levi-Civita conditions (\ref{genLeviCivita}). Working in block-orthogonal coordinates or directly with a Painlev\'e metric with the identification
\begin{gather}\label{not}
c^\alpha:=\frac{\det S}{s^{\alpha 1}},
\end{gather}
the Hamilton--Jacobi equation (\ref{HJ}) takes the form
\begin{gather}\label{HJPainformnew}
\sum_{\alpha =1}^{r} c^\alpha \sum_{\ialpha =1_{\alpha}}^{\lalpha}\sum_{\jalpha =1_{\alpha}}^{\lalpha} \big(G^{\alpha}\big)^{\ialpha \jalpha}\frac{\partial W}{\partial x^{\ialpha} }\frac{\partial W}{\partial x^{\jalpha}}=E=:a_{1}.
\end{gather}
We recall that we seek a solution $W$ of (\ref{HJPainformnew}) which is additively separable into groups of variables, that is
\[
W=\sum_{\alpha =1}^{r}W_{\alpha}\big({\bf x}^{\alpha}\big),
\]
and let
\[
u_{\alpha}^{(1)}=\sum_{\ialpha =1_{\alpha}}^{\lalpha}\sum_{\jalpha =1_{\alpha}}^{\lalpha}\big(G^{\alpha}\big)^{\ialpha \jalpha}\frac{\partial W_{\alpha}}{\partial x^{\ialpha} }\frac{\partial W_{\alpha}}{\partial x^{\jalpha}},
\]
in which case the Hamilton--Jacobi equation (\ref{HJPainformnew}) takes the form
\[
\sum_{\alpha =1}^{r}c^{\alpha}u_{\alpha}^{(1)}=a_{1}.
\]
Differentiating the latter equation with respect to $x^{\kbeta}$, we obtain the first order differential system in normal form
\begin{gather}\label{normalsyst}
 \partial_{\kbeta}u^{(1)}_{\gamma} =   \begin{cases}\displaystyle -\frac{1}{c^{\beta}}\sum_{\alpha =1}^{r}\big(\partial_{\kbeta} c^{\alpha}\big) u^{(1)}_{\alpha} & \gamma = \beta, \\
 0 & \gamma \neq \beta . \end{cases}
\end{gather}
Introducing the first-order differential operators
\begin{gather}\label{totaldiff}
D_{\kbeta} := \frac{\partial}{\partial x^{\kbeta}}-\frac{1}{c^{\beta}}\sum_{\alpha =1}^{r}\big(\partial_{\kbeta} c^{\alpha}\big) u^{(1)}_{\alpha}\frac{\partial}{\partial u^{(1)}_{\beta}},
\end{gather}
the differential system (\ref{normalsyst}) will admit a family of solutions $u^{(1)}_{\alpha}=u^{(1)}_{\alpha}\big(x^{1},\dots,x^{n};a_{1},\dots,a_{r}\big)$ defined on an open subset $U\in M$ and depending smoothly on~$r$ constants $(a_{1},\dots,a_{r})$ defined in an open subset $A\subset {\mathbb{R}}^{r}$, satisfying the rank condition
\begin{gather}\label{rankHJ}
\det\left(\frac{\partial u^{(1)}_{\alpha}}{\partial a_{\beta}}\right)\neq 0,
\end{gather}
if and only if the operators (\ref{totaldiff}) pairwise commute, that is
\begin{gather}\label{commutation}
\big[D_{\kbeta},D_{\jalpha}\big]=0
\end{gather}
for all $1\leq \alpha, \beta \leq r$, $1_{\alpha}\leq \jalpha\leq \lalpha$, $1_{\beta}\leq \kbeta\leq \lbeta$. We refer to \cite[Theorem~2.1]{BCR1-2002} for this result. Note that if~(\ref{commutation}) hold, then the Hamilton--Jacobi equation admits locally a solution which is additively separable into groups of variables and satisfies the completeness condition~(\ref{rankPainHJ}) as a consequence of the completeness condition (\ref{rankHJ}). In consequence, such metrics are of the Painlev\'e form~(\ref{Painleveform}).

We now prove that the commutation relations (\ref{commutation}) are equivalent to the generalized Levi-Civita conditions~(\ref{genLeviCivita}). Note that this is a natural generalization of the link between the complete separation of variables which is familiar from the St\"ackel case and the classical Levi-Civita separability conditions, as reviewed in~\cite{BCR1-2002,KaMi1984}. Indeed, we have
\begin{Lemma}
The pairwise commutation relations \eqref{commutation} for the derivations $D_{\kbeta}$ are equivalent to the generalized Levi-Civita conditions~\eqref{genLeviCivita}.
\end{Lemma}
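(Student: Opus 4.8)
The plan is to compute the Lie bracket $\big[D_{\kbeta},D_{\jalpha}\big]$ directly and to read off, coefficient by coefficient, the conditions under which it vanishes. The key structural observation is that each operator in \eqref{totaldiff} is a vector field on the extended space with coordinates $\big(x^{1},\dots,x^{n},u^{(1)}_{1},\dots,u^{(1)}_{r}\big)$ of the special shape
\[
D_{\kbeta}=\frac{\partial}{\partial x^{\kbeta}}+A^{\beta}_{\kbeta}\frac{\partial}{\partial u^{(1)}_{\beta}},\qquad A^{\beta}_{\kbeta}=-\frac{1}{c^{\beta}}\sum_{\mu=1}^{r}\big(\partial_{\kbeta}c^{\mu}\big)u^{(1)}_{\mu},
\]
so that its $x$-components are constants ($0$ or $1$) while its single $u$-component $A^{\beta}_{\kbeta}$ is \emph{linear} in the coordinates $u^{(1)}_{\mu}$. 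Because the $x$-components are constant, the bracket has no $\partial/\partial x$-part; by the standard commutator formula for vector fields it reduces to the two $u$-directions, with the coefficient of $\partial/\partial u^{(1)}_{\alpha}$ equal to $D_{\kbeta}\big(A^{\alpha}_{\jalpha}\big)$ and that of $\partial/\partial u^{(1)}_{\beta}$ equal to $-D_{\jalpha}\big(A^{\beta}_{\kbeta}\big)$.

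First I would treat the generic case $\alpha\neq\beta$. Since $A^{\alpha}_{\jalpha}$ is linear in the $u^{(1)}_{\mu}$ and these are independent coordinates, each of the two coefficients above is again linear in the $u^{(1)}_{\mu}$, and the bracket vanishes if and only if the coefficient of every $u^{(1)}_{\gamma}$ vanishes. Carrying out the two differentiations---$\partial_{\kbeta}$ acts on $A^{\alpha}_{\jalpha}$ through both $1/c^{\alpha}$ and $\partial_{\jalpha}c^{\mu}$, while the $A^{\beta}_{\kbeta}\,\partial/\partial u^{(1)}_{\beta}$ part enters through $\partial_{u^{(1)}_{\beta}}A^{\alpha}_{\jalpha}=-\tfrac{1}{c^{\alpha}}\partial_{\jalpha}c^{\beta}$---and clearing the common factor $-c^{\alpha}$, the vanishing of the coefficient of $u^{(1)}_{\gamma}$ becomes, for every $1\le\gamma\le r$,
\[
\partial_{\jalpha}\partial_{\kbeta}c^{\gamma}=\frac{\big(\partial_{\jalpha}c^{\gamma}\big)\big(\partial_{\kbeta}c^{\alpha}\big)}{c^{\alpha}}+\frac{\big(\partial_{\jalpha}c^{\beta}\big)\big(\partial_{\kbeta}c^{\gamma}\big)}{c^{\beta}}.
\]
I would then check that the companion coefficient (that of $\partial/\partial u^{(1)}_{\beta}$) reproduces the same relation after using $\partial_{\jalpha}\partial_{\kbeta}=\partial_{\kbeta}\partial_{\jalpha}$, so that the two directions carry no independent information.

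Next I would dispose of the degenerate case $\alpha=\beta$: here both $D_{\kbeta}$ and $D_{\jbeta}$ differentiate the same coordinate $u^{(1)}_{\beta}$, and a short computation shows that the coefficient of each $u^{(1)}_{\gamma}$ in the bracket consists of terms that cancel in pairs (the two second-derivative terms cancel by equality of mixed partials, and the surviving first-derivative products cancel against one another). Hence these brackets vanish identically and impose no constraint, so the entire content of \eqref{commutation} is the displayed family of identities for $\alpha\neq\beta$.

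Finally I would identify this family with the generalized Levi-Civita conditions. Recalling that $c^{\gamma}$ is the coefficient of the $\gamma$-th block of the inverse metric in the Hamilton--Jacobi equation \eqref{HJPainformnew}, namely $c^{\gamma}=s^{\gamma 1}/\det S$ (as in the first proof of Proposition~\ref{HJSeparability-Painleve}), I divide the displayed relation by $c^{\gamma}$ and invoke the identity \eqref{identity} in the form $\tfrac{1}{c^{\gamma}}\partial_{\jalpha}\partial_{\kbeta}c^{\gamma}=\partial_{\jalpha}\partial_{\kbeta}\log c^{\gamma}+\big(\partial_{\jalpha}\log c^{\gamma}\big)\big(\partial_{\kbeta}\log c^{\gamma}\big)$. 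The two right-hand terms then become products of logarithmic derivatives of $s^{\gamma 1}/\det S$, $s^{\alpha 1}/\det S$ and $s^{\beta 1}/\det S$, and the resulting equation is precisely \eqref{genLeviCivita}; reading the chain of equalities backwards gives the converse. The main obstacle is purely bookkeeping: organizing the product-rule terms so that the relation collapses cleanly, and in particular placing the single second-derivative term on the correct side---which is exactly where the identity \eqref{identity} and the correct identification $c^{\gamma}=s^{\gamma 1}/\det S$ are essential.
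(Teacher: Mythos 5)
Your proposal is correct and follows essentially the same route as the paper's proof: both compute the commutator $\big[D_{\kbeta},D_{\jalpha}\big]$ directly, use linearity in the fibre coordinates $u^{(1)}_{\mu}$ to reduce its vanishing to the vanishing of the coefficient of each $u^{(1)}_{\gamma}$, note that the case $\alpha=\beta$ is identically satisfied, and identify the resulting second-order relations (after dividing by $c^{\gamma}$ and using $c^{\gamma}=s^{\gamma 1}/\det S$) with the generalized Levi-Civita conditions \eqref{genLeviCivita}. If anything, your writeup is slightly more explicit than the paper's, e.g., in verifying that the $\partial/\partial u^{(1)}_{\alpha}$ and $\partial/\partial u^{(1)}_{\beta}$ components of the bracket yield the same relation (the invocation of \eqref{identity} at the end is harmless but not actually needed, since \eqref{genLeviCivita} already contains the term $\frac{1}{c^{\gamma}}\partial_{\jalpha}\partial_{\kbeta}c^{\gamma}$ rather than its logarithmic counterpart).
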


\begin{proof}Substituting the expression (\ref{totaldiff}) of the differential operators $D_{\kbeta}$ into the commutation relations (\ref{commutation}), we obtain
 \begin{gather*}
 \left[-\partial_{\kbeta}\left(\frac{1}{c^{\gamma}}\sum_{\delta =1}^{r}\big(\partial_{\pgamma} c^{\delta}\big)u^{(1)}_{\delta}\right) + \frac{1}{c^{\beta} c^{\gamma}}\sum_{\alpha =1}^{r}\big(\partial_{\kbeta} c^{\alpha}\big)u^{(1)}_{\alpha}\big(\partial_{\pgamma} c^{\beta}\big)\right]\frac{\partial}{\partial u^{(1)}_{\gamma}} \\
 \qquad{} +
 \left[-\partial_{\kbeta}\left(\frac{1}{c^{\gamma}}\sum_{\delta =1}^{r}\big(\partial_{\pgamma} c^{\delta}\big)u^{(1)}_{\delta}\right)+\frac{1}{c^{\beta}c^{\gamma}}\sum_{\alpha =1}^{r}\big(\partial_{\kbeta}c^{\alpha}\big)u^{(1)}_{\alpha}\big(\partial_{\pgamma} c^{\beta}\big)\right]\frac{\partial}{\partial u^{(1)}_{\gamma}}=0.
 \end{gather*}
For $1\leq \gamma \neq \beta \leq r$, the above identity is equivalent to
\begin{gather} \label{GenLC}
\big(\partial_{\kbeta} \log c^{\gamma}\big)\big(\partial_{\pgamma} \log c^{\alpha}\big)+\big(\partial_{\kbeta} \log c^{\alpha}\big)\big(\partial_{\pgamma} \log c^{\beta}\big)-\frac{1}{c^{\alpha}}\partial_{\kbeta}\partial_{\pgamma}c^{\alpha}=0,
\end{gather}
which are precisely the generalized Levi-Civita conditions~(\ref{genLeviCivita}) after relabeling of indices. Note that for $1\leq \gamma = \beta \leq r$, the above identity is always satisfied by a straightforward calculation.
\end{proof}

At this stage, we thus have proved another characterization of Painlev\'e metrics which appears in \cite[p.~12]{CR2019}.

\begin{Proposition} \label{PainLC}A metric $g$ is of Painlev\'e type if and only if there exist block orthogonal coordinates~$\big(\textbf{x}^\alpha\big)$ such that the generalized Levi-Civita conditions \eqref{GenLC} hold.
\end{Proposition}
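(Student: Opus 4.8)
The plan is to read off the Proposition as a chain of three equivalences already assembled in the preceding discussion, traversed in both directions. Concretely, for a metric written in block-orthogonal coordinates $\big(\textbf{x}^{\alpha}\big)$ so that the Hamilton--Jacobi equation takes the form \eqref{HJPainformnew}, I would use: (i) the characterization of Painlev\'e metrics through the existence of a block-separable complete integral of \eqref{HJPainformnew} (Proposition \ref{HJSeparability-Painleve}); (ii) the equivalence, furnished by the integrability result \cite[Theorem~2.1]{BCR1-2002}, between the existence of a complete solution of the first-order normal system \eqref{normalsyst} and the pairwise commutation \eqref{commutation} of the derivations $D_{\kbeta}$ of \eqref{totaldiff}; and (iii) the equivalence between \eqref{commutation} and the generalized Levi-Civita conditions \eqref{GenLC} supplied by the Lemma just proved. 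Because both implications of the Proposition amount to traversing this chain in opposite senses, the argument is largely one of bookkeeping, the only genuine input being the correct invocation of step (ii).

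For the forward implication I would begin with a metric $g$ of Painlev\'e form \eqref{Painleveform}, whose defining coordinates are block-orthogonal with inverse-metric coefficients $c^{\alpha}=s^{\alpha 1}/\det S$. By Proposition \ref{HJSeparability-Painleve} the equation \eqref{HJPainformnew} then admits a block-separable complete integral $W=\sum_{\alpha}W_{\alpha}\big(\textbf{x}^{\alpha}\big)$ obeying the rank condition \eqref{rankPainHJ}. Setting $u^{(1)}_{\alpha}=\big(G^{\alpha}\big)^{\ialpha\jalpha}\partial_{\ialpha}W_{\alpha}\,\partial_{\jalpha}W_{\alpha}$ produces a solution of the normal system \eqref{normalsyst} whose completeness \eqref{rankHJ} is a direct consequence of \eqref{rankPainHJ}; step (ii) then forces \eqref{commutation}, and the Lemma converts this into \eqref{GenLC}.

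The backward implication runs the same chain in reverse, and this is where the single subtle step lies. Given block-orthogonal coordinates in which \eqref{GenLC} holds, the Lemma yields \eqref{commutation}, and \cite[Theorem~2.1]{BCR1-2002} produces a complete solution $u^{(1)}_{\alpha}\big(x^{1},\dots,x^{n};a_{1},\dots,a_{r}\big)$ of \eqref{normalsyst} satisfying \eqref{rankHJ}. To return to a genuine block-separable integral I would observe that each $u^{(1)}_{\alpha}$ depends only on the group $\textbf{x}^{\alpha}$ and that $G_{\alpha}$ is positive-definite, so the intra-block eikonal equation $\big(G^{\alpha}\big)^{\ialpha\jalpha}\partial_{\ialpha}W_{\alpha}\,\partial_{\jalpha}W_{\alpha}=u^{(1)}_{\alpha}$ is locally solvable for $W_{\alpha}$ by standard PDE theory. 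One then checks that \eqref{rankHJ} is exactly the data needed for $W=\sum_{\alpha}W_{\alpha}$ to meet the completeness condition \eqref{rankPainHJ}, whereupon Proposition \ref{HJSeparability-Painleve} identifies $g$ as a Painlev\'e metric.

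The hard part, accordingly, is not any new computation but this reorganization of an abstract complete solution of \eqref{normalsyst} into per-block momenta coming from actual generating functions $W_{\alpha}$, together with the verification that the Jacobian condition \eqref{rankHJ} matches the rank condition \eqref{rankPainHJ} of Proposition \ref{HJSeparability-Painleve}. Since this correspondence is precisely the one already exploited in the first proof of that Proposition, both implications close. I would also record that the whole chain avoids the genericity hypothesis \eqref{H}: unlike Lemma \ref{LeviCivita}, the Lemma relating \eqref{commutation} and \eqref{GenLC} is a direct substitution using \eqref{identity}, so the resulting characterization holds for all Painlev\'e metrics.
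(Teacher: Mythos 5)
Your proposal is correct and follows essentially the same route as the paper: the paper's own ``proof'' of Proposition~\ref{PainLC} is precisely the chain you describe, namely Proposition~\ref{HJSeparability-Painleve}, the equivalence from \cite[Theorem~2.1]{BCR1-2002} between complete solutions of the normal system \eqref{normalsyst} and the commutation relations \eqref{commutation}, and the Lemma identifying \eqref{commutation} with \eqref{GenLC}, assembled in the discussion immediately preceding the statement. Your explicit reconstruction of the block-separable integral $W=\sum_{\alpha}W_{\alpha}$ from the $u^{(1)}_{\alpha}$ via the intra-block eikonal equations, and the matching of \eqref{rankHJ} with \eqref{rankPainHJ}, merely spell out what the paper leaves implicit in the sentence ``Note that if \eqref{commutation} hold, then the Hamilton--Jacobi equation admits locally a solution which is additively separable into groups of variables,'' and your closing remark that the genericity hypothesis \eqref{H} is not needed matches the paper's own concluding observation.
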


We finish this section giving still another characterization of Painlev\'e metrics of a more intrinsic nature. The starting point is the observation that the generalized Killing--Eisenhart equations are related to the existence of quadratic first integrals (or symmetries) $K_{(\beta)}$ by the following result proved in~\cite[Proposition~5.3]{CR2019} (see also Lemma~\ref{KE} for an implicit proof of this proposition)

\begin{Proposition}In block orthogonal coordinates, we have that
\[
 \big\{H,K_{(\beta)} \big\} = 0,
\]	
if and only if the Killing--Eisenhart equations
\begin{gather}\label{genKE1}
 \partial_{j_{\gamma}}(\rho_{ \beta \delta})= (\rho_{\beta \gamma} - \rho_{\beta \delta }) \big(\partial_{j_{\gamma}} \log c^\delta\big),
\end{gather}
hold for all $1 \leq \gamma, \delta \leq r$.
\end{Proposition}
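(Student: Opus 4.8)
The plan is to replay, now as a chain of equivalences, the computation already carried out in the proof of Lemma~\ref{KE}, where the \emph{implication} $\{H,K_{(\beta)}\}=0 \Rightarrow \eqref{genKE}$ was established. The present Proposition amounts to the observation that every step of that derivation is reversible, so that the implication upgrades to a biconditional; in particular no genericity hypothesis such as \eqref{H} is needed here (that assumption entered only in Lemma~\ref{LeviCivita}).

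First I would fix $\beta$ and write, in block orthogonal coordinates, $H=K_{(1)}=\sum_{\gamma}c^{\gamma}\mathcal{F}_{\gamma}$ and $K_{(\beta)}=\sum_{\gamma}\rho_{\beta\gamma}c^{\gamma}\mathcal{F}_{\gamma}$, where $c^{\gamma}:=s^{\gamma 1}/\det S$ is the coefficient for which $H=\sum_{\gamma}c^{\gamma}\mathcal{F}_{\gamma}$, the $\mathcal{F}_{\gamma}$ are given by \eqref{F}, $\rho_{\beta\gamma}$ is defined by \eqref{defrho}, and the $K_{(\alpha)}$ are the block-diagonal Killing tensors \eqref{Killingtensor}. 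Computing the Poisson bracket of these two functions, each quadratic in the momenta, produces a cubic polynomial in $p$. Because both $K_{(1)}^{ij}$ and $K_{(\beta)}^{ij}$ vanish on off-block pairs of indices, only a restricted family of monomials survives, and the vanishing of the polynomial is equivalent, monomial by monomial after symmetrizing over the three momentum indices, to the system \eqref{Painlevecomm} for all $\delta,\gamma$ and all block indices. This is precisely the relation used in Lemma~\ref{KE}, now read in both directions.

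Next I would strip off the block metrics. Substituting \eqref{Killingtensor} into \eqref{Painlevecomm} and using that $G^{\delta}$ depends only on $\mathbf{x}^{\delta}$, so that $\partial_{m_{\gamma}}(G^{\delta})^{i_{\delta}j_{\delta}}=0$ whenever $\gamma\neq\delta$, while for $\gamma=\delta$ the two terms carrying $\partial_{m_{\gamma}}(G^{\gamma})^{i_{\gamma}j_{\gamma}}$ cancel identically (their coefficients being $\tfrac{s^{\gamma 1}}{\det S}\tfrac{s^{\gamma\beta}}{\det S}$ with opposite signs), together with the invertibility of each block matrix $\big((G^{\gamma})^{i_{\gamma}j_{\gamma}}\big)$, one finds that \eqref{Painlevecomm} is equivalent to the purely St\"ackel relations \eqref{Stackelcomm} for all $1\leq\delta,\gamma\leq r$ and $1_{\gamma}\leq m_{\gamma}\leq l_{\gamma}$. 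It is exactly the nonsingularity of the $G^{\gamma}$ that makes this an equivalence rather than a one-way implication, since the block factors may then be cancelled without loss of information.

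Finally, the algebraic rewriting is routine and manifestly reversible: substituting $s^{\delta\beta}=\rho_{\beta\delta}s^{\delta 1}$ and $s^{\gamma\beta}=\rho_{\beta\gamma}s^{\gamma 1}$ into \eqref{Stackelcomm}, dividing by the factor $s^{\gamma 1}/\det S$ (nonzero by \eqref{StackelPositiveness}), expanding the derivative by the product rule, and dividing by $s^{\delta 1}/\det S$, one arrives at
\[
\partial_{j_{\gamma}}\rho_{\beta\delta}=(\rho_{\beta\gamma}-\rho_{\beta\delta})\,\partial_{j_{\gamma}}\log c^{\delta},
\]
which is \eqref{genKE1}; the special case $\delta=\gamma$ moreover records that $\rho_{\beta\gamma}=s^{\gamma\beta}/s^{\gamma 1}$ is independent of $\mathbf{x}^{\gamma}$. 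Since each operation is a division by a nowhere-vanishing quantity, running the chain backwards recovers \eqref{Stackelcomm}, then \eqref{Painlevecomm}, and finally $\{H,K_{(\beta)}\}=0$. I expect the only genuinely delicate point to be the first step, namely verifying that passing from the vanishing of the cubic Poisson bracket to the index form \eqref{Painlevecomm} loses no information: one must check that the block-diagonal structure forces the surviving monomials to be exactly those captured by \eqref{Painlevecomm} once the symmetrization over the three momentum indices is correctly accounted for.
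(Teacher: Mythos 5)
Your proposal takes essentially the same route as the paper, which in fact gives no self-contained proof of this Proposition: it cites \cite[Proposition~5.3]{CR2019} and points to Lemma~\ref{KE} as an ``implicit proof'', and Lemma~\ref{KE} is precisely the chain from $\{H,K_{(\beta)}\}=0$ to \eqref{Painlevecomm} to \eqref{Stackelcomm} to \eqref{genKE} that you run in both directions. Your second and third steps are correct: the cancellation of the $\partial_{p_{\gamma}}(G^{\gamma})^{i_{\gamma}j_{\gamma}}$ terms when $\delta=\gamma$, the passage to \eqref{Stackelcomm} via invertibility of the blocks, and the reversible divisions by the nowhere-vanishing factors $s^{\gamma 1}/\det S$ and $s^{\delta 1}/\det S$; you are also right that the genericity hypothesis \eqref{H} plays no role here. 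One caveat on the setting: for the equivalence to have content, $c^{\gamma}$ and $\rho_{\beta\gamma}$ should be treated as \emph{arbitrary} functions attached to a block-orthogonal metric and a block-diagonal tensor whose blocks are conformal to those of $g$ (this is how the Proposition is used in Proposition~\ref{PainKA}); if they are literally the St\"ackel quantities \eqref{defrho} and \eqref{Killingtensor}, then both sides of the equivalence hold identically by \eqref{PoissonCommute} and Lemma~\ref{KE}, and the statement is vacuous. Your manipulations never use the St\"ackel origin of these quantities, so they transfer verbatim to the general setting.

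The step you flag as delicate is the one genuine gap, and it does need a short argument, which the paper also elides. The coefficient of $p_{a}p_{b}p_{c}$ in $\{H,K_{(\beta)}\}$ is a sum of three terms, obtained by letting each of $a$, $b$, $c$ play the role of $k_{\gamma}$ in \eqref{Painlevecomm}. If the three indices do not all lie in one block, at most one of the three terms survives: whenever the remaining pair straddles two different blocks, $H^{ab}$ and $K_{(\beta)}^{ab}$ vanish identically, hence so do their derivatives; thus for $\delta\neq\gamma$ the symmetrized coefficient \emph{equals} the expression in \eqref{Painlevecomm}, and no information is lost. When all three indices lie in a single block $\gamma$, each of the three terms reduces, via $K_{(\beta)}^{i_{\gamma}j_{\gamma}}=\rho_{\beta\gamma}H^{i_{\gamma}j_{\gamma}}$, to $-(c^{\gamma})^{2}(G^{\gamma})^{i_{\gamma}j_{\gamma}}v^{k_{\gamma}}$ with the roles of the indices permuted, where $v^{k_{\gamma}}:=\sum_{p_{\gamma}}(G^{\gamma})^{p_{\gamma}k_{\gamma}}\partial_{p_{\gamma}}\rho_{\beta\gamma}$, so the vanishing of the symmetrized coefficient reads
\[
(G^{\gamma})^{i_{\gamma}j_{\gamma}}v^{k_{\gamma}}+(G^{\gamma})^{j_{\gamma}k_{\gamma}}v^{i_{\gamma}}+(G^{\gamma})^{k_{\gamma}i_{\gamma}}v^{j_{\gamma}}=0.
\]
Contracting with $(G_{\gamma})_{i_{\gamma}j_{\gamma}}$ gives $(l_{\gamma}+2)v^{k_{\gamma}}=0$, hence $v^{k_{\gamma}}=0$ and, by invertibility of $G^{\gamma}$, $\partial_{p_{\gamma}}\rho_{\beta\gamma}=0$, which is exactly the case $\delta=\gamma$ of \eqref{Painlevecomm}, equivalently of \eqref{genKE1}. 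With this supplement your chain of equivalences is complete.
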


The second observation is the fact that the integrability conditions for the Killing--Eisenhart equations~(\ref{genKE1}) are given by
\begin{gather*} 
 (\rho_{\beta \gamma} - \rho_{\beta \delta}) \left[ \big( \partial_{\jgamma} \log c^{\delta} \big) \big( \partial_{\kalpha} \log c^{\gamma}\big) + \big(\partial_{\jgamma} \log c^{\alpha} \big) \big( \partial_{\kalpha} \log c^{\delta} \big) - \frac{1}{c^{\delta}} \partial_{\kalpha}\partial_{\jgamma}c^{\delta} \right] = 0,
\end{gather*}
for all $1 \leq \alpha, \beta, \gamma, \delta \leq r$. Clearly, these integrability conditions can be shortened as
\begin{gather} \label{IC2}
 (\rho_{\beta \gamma} - \rho_{\beta \delta}) \cdot \left[ \textrm{Levi-Civita conditions} \right] = 0.
\end{gather}

Using these two observations, Chanu ad Rastelli proved in \cite[Proposition 5.5]{CR2019} the following characterization of Painlev\'e metrics.

\begin{Proposition} \label{PainKA}
$(M,g)$ is a Painlev\'e manifold if and only if
\begin{enumerate}\itemsep=0pt
\item[$1)$] there exist $r$ independent quadratic first integral $K_{(\beta)}$, $\beta = 1,\dots,r$ such that $K_{(1)} = H$ and $\{ H, K_{(\beta)} \} = 0$.
\item[$2)$] The associated Killing two tensors	$K_{(\beta)}^{ij}$ are simultaneously block-diagonalized and have common normally integrable eigenspaces.
\end{enumerate}
\end{Proposition}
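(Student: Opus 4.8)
The plan is to derive both implications from the structural results already in hand, using as the two pivots the Killing--Eisenhart characterization of the vanishing bracket $\{H,K_{(\beta)}\}=0$ stated just above and the factorization~(\ref{IC2}) of its integrability conditions, together with the Levi-Civita characterization of Proposition~\ref{PainLC}.

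For the forward implication I would take $g$ of the Painlev\'e form~(\ref{Painleveform}) and produce the $r$ quadratic first integrals $K_{(\beta)}$ of~(\ref{Killingtensor}). These include $K_{(1)}=H$, are linearly independent, and Poisson-commute with $H$ by~(\ref{PoissonCommute}), so condition~$1)$ holds. For condition~$2)$ I would note that, by~(\ref{Killingtensor}), the tensor $g^{-1}K_{(\beta)}$ acts on the coordinate block ${\bf x}^{\delta}$ as the scalar $\rho_{\beta\delta}=s^{\delta\beta}/s^{\delta1}$ of~(\ref{defrho}) times the identity. Hence the $K_{(\beta)}$ are simultaneously block-diagonal with common eigendistributions $D_{\delta}=\mathrm{span}\{\partial_{i_{\delta}}\}$; since~(\ref{Painleveform}) is block-diagonal, each $D_{\delta}$ and its orthogonal complement $\bigoplus_{\gamma\neq\delta}D_{\gamma}$ are coordinate distributions and thus integrable, giving normal integrability. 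The only point needing care is that these common eigenspaces are precisely the individual blocks: two blocks $\delta,\delta'$ would coalesce only if $\rho_{\beta\delta}=\rho_{\beta\delta'}$ for every $\beta$, i.e., if two rows of the cofactor matrix of $S$ were proportional, which is excluded by $\det S\neq0$.

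For the converse I would assume $1)$ and $2)$. The geometric input---and the main obstacle---is that a family of symmetric Killing tensors with $r$ common, normally integrable eigendistributions $D_{1},\dots,D_{r}$ admits adapted \emph{block orthogonal} coordinates $({\bf x}^{\alpha})$ in which $g=\sum_{\alpha}c_{\alpha}G_{\alpha}$ with each $G_{\alpha}=G_{\alpha}({\bf x}^{\alpha})$: integrability of the $D_{\alpha}$ supplies the coordinate groups, while integrability of their orthogonal complements forces each block metric to depend only on its own group. This is the classical construction underlying Proposition~\ref{PainLC}, and once it is in place the remainder is algebraic. Writing the coefficients $c^{\alpha}=(c_{\alpha})^{-1}$ that appear in the Killing--Eisenhart equations, condition~$1)$ and the Killing--Eisenhart characterization give~(\ref{genKE1}) for the $\rho_{\beta\delta}$; since these hold identically, equating the mixed derivatives $\partial_{k_{\alpha}}\partial_{j_{\gamma}}\rho_{\beta\delta}=\partial_{j_{\gamma}}\partial_{k_{\alpha}}\rho_{\beta\delta}$ and substituting~(\ref{genKE1}) produces the integrability conditions, which by~(\ref{IC2}) read $(\rho_{\beta\gamma}-\rho_{\beta\delta})\cdot[\textrm{Levi-Civita}]=0$.

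It then remains to cancel the factor $(\rho_{\beta\gamma}-\rho_{\beta\delta})$, and this is exactly where the independence in~$1)$, made concrete through~$2)$, enters---note that the distinctness furnished by~$2)$ plays here the role of the genericity hypothesis~(\ref{H}), which need not be assumed. Because the common eigenspaces are the $r$ distinct blocks, any two distinct blocks are separated by some first integral, so for $\gamma\neq\delta$ there is a $\beta$ with $\rho_{\beta\gamma}\neq\rho_{\beta\delta}$. For a prescribed Levi-Civita expression the two differentiated blocks are distinct, hence at least one of them differs from the source block; using the symmetry of the bracket in~(\ref{IC2}) under exchange of its two differentiated blocks, I would place that block in the role controlling the factor, making the factor nonzero for a suitable $\beta$ and forcing the bracket to vanish. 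This yields all the generalized Levi-Civita conditions~(\ref{GenLC}) in the block orthogonal coordinates, whereupon Proposition~\ref{PainLC} gives that $g$ is of Painlev\'e type. The decisive difficulty is thus the passage, via Frobenius, from normally integrable common eigenspaces to block orthogonal coordinates of the precise form $\sum_{\alpha}c_{\alpha}G_{\alpha}({\bf x}^{\alpha})$; the cancellation argument, by contrast, is routine once the distinctness of the block eigenvalues is recorded.
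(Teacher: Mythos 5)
Your overall skeleton coincides with the paper's: the forward direction is read off from the already-established properties~(\ref{Killingtensor}), (\ref{PoissonCommute}) of Painlev\'e metrics (and your cofactor-row argument that the blocks cannot coalesce is correct), while the converse runs through the Killing--Eisenhart equations~(\ref{genKE1}), their integrability conditions~(\ref{IC2}), cancellation of the factor $(\rho_{\beta\gamma}-\rho_{\beta\delta})$, and Proposition~\ref{PainLC}. Your cancellation step is a correct variant of the paper's: the paper uses linearity of~(\ref{genKE1}) and invertibility of the fundamental matrix $\mathcal{A}=(\rho_{\alpha\beta})$ to build one constant-coefficient solution whose components are pairwise distinct near a point, whereas you use column-distinctness of $\mathcal{A}$ (for each pair of blocks, some given integral separates them) together with the symmetry of the Levi-Civita bracket in its two differentiated indices; both are legitimate, and yours is if anything more direct.

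The genuine gap is in the step you yourself flag as decisive. You assert that Frobenius integrability of the $D_\alpha$ and of their orthogonal complements already yields \emph{block orthogonal} coordinates, i.e., $g=\sum_\alpha c_\alpha G_\alpha$ with $G_\alpha=G_\alpha({\bf x}^\alpha)$, saying ``integrability of their orthogonal complements forces each block metric to depend only on its own group.'' That is false: integrability gives only adapted coordinates in which $g$ is block diagonal, with each block's coefficients a priori depending on \emph{all} the variables. For instance, $g=\big({\rm d}x^1\big)^2+A\big(x^1,x^2,x^3\big)\big({\rm d}x^2\big)^2+B\big(x^1,x^2,x^3\big)\big({\rm d}x^3\big)^2$ satisfies all the integrability requirements, yet its second block has the form $c_2\,G_2\big(x^2,x^3\big)$ only if $A/B$ is independent of~$x^1$. (Nor does Proposition~\ref{PainLC} supply this construction: block orthogonal coordinates are a hypothesis there, not a conclusion.) The conformal-block structure actually comes from condition~$1)$: writing $\big\{H,K_{(\beta)}\big\}=0$ in the merely block-diagonal coordinates, with $K_{(\beta)}=\sum_\delta\rho_{\beta\delta}\,g_\delta^{-1}$ blockwise, the cross-block cubic terms give
\[
\partial_{k_\gamma}\rho_{\beta\delta}\,\big(g^{\delta}\big)^{i_\delta j_\delta}+(\rho_{\beta\delta}-\rho_{\beta\gamma})\,\partial_{k_\gamma}\big(g^{\delta}\big)^{i_\delta j_\delta}=0,\qquad \gamma\neq\delta,
\]
so whenever some $\beta$ separates the two blocks (exactly the distinctness you already invoke), every transverse derivative of $\big(g^{\delta}\big)^{i_\delta j_\delta}$ is proportional to $\big(g^{\delta}\big)^{i_\delta j_\delta}$ by a factor independent of $i_\delta$, $j_\delta$; hence all ratios of components of $g_\delta$ depend on ${\bf x}^\delta$ alone and $g_\delta=c_\delta\,G_\delta({\bf x}^\delta)$. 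Only after this reduction do the equations~(\ref{genKE1}) make sense, and the rest of your argument then goes through; the same computation also yields~(\ref{genKE1}) itself. The paper passes over this reduction silently, deferring to Proposition~5.5 of Chanu--Rastelli~\cite{CR2019}, but it does not rest on the erroneous Frobenius claim; your proof, as written, does.
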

\begin{proof}
If $g$ is a Painlev\'e metric, then the above assertions were already proved.

Assume now that there exist $r$ linearly independent Killing tensors simultaneously in block diagonal form. Then the generalized Killing--Eisenhart equations (\ref{genKE1}) will admit an $r$-dimen\-sio\-nal vector space of solutions. The latter is equivalent to the invertibility of the fundamental matrix $\mathcal{A}$ defined by
\[
{\mathcal{A}}=\left(
\begin{matrix}
\rho_{11}&\dots&\rho_{1r} \\
\vdots& & \vdots \\
\rho_{r1}&\dots&\rho_{rr}
\end{matrix}
\right).
\]
Hence any solution $(\rho_{1},\dots,\rho_{r})$ of (\ref{genKE1}) is given by
\[
\left(
\begin{matrix}\rho_{1}\\
\vdots\\
\rho_{r}
\end{matrix}
\right)
=
{\mathcal{A}}\left(
\begin{matrix}a_{1}\\
\vdots\\
a_{r}
\end{matrix}
\right),
\]
for some constants $(a_{1},\dots,a_{r})\in A$. It is clear then that the Killing--Eisenhart equations are completely integrable and thus satisfy the integrability conditions (\ref{IC2}). Moreover, from the invertibility of~$\mathcal{A}$, we can always choose the constants $(a_1,\dots,a_r)$ such that $\rho_{\alpha} \neq \rho_{\beta}$, $\forall\, 1 \leq \alpha \ne \beta \leq r$ at a point $p\in M$ and therefore in an neighbourhood of $p$, by continuity. Hence, the integrabi\-li\-ty conditions~(\ref{IC2}) reduces to the generalized Levi-Civita separability conditions~(\ref{GenLC}). We conclude that $g$ is a Painlev\'e metric from Proposition~\ref{PainLC}.
\end{proof}

As a concluding remark for this section, we emphasize that the hypotheses~(\ref{H}) that we make to deduce the Levi-Civita conditions from the Killing--Eisenhart equations aren't in fact necessary. Indeed, it follows from Proposition~\ref{PainLC} or Proposition~\ref{PainKA} that the Levi-Civita conditions always hold whenever~$g$ is a Painlev\'e metric, that is a metric of the form~(\ref{Painleveform}).

\section[The generalized Robertson conditions and the separability of the Helmholtz equation]{The generalized Robertson conditions and the separability\\ of the Helmholtz equation}\label{PainHelmSection}

Assume that the manifold $(M,g)$ admits locally block-orthogonal coordinates $\big(\textbf{x}^\alpha\big)$ such that $g = \sum\limits_{\beta = 1}^r c_\beta G_\beta$. Then using the same calculation that led to the expression~(\ref{Laplacian}) of the Laplace--Beltrami operator, the Helmholtz equation~(\ref{HelmPain}) for such a metric reads
\begin{gather}\label{PainHelmeff}
\sum_{\beta =1}^{r}c^{\beta}\left[-\Delta_{G_{\beta}}+\sum_{\jbeta = 1_{\beta}}^{\lbeta}\gamma^{\jbeta}\partial_{\jbeta}\right]u=a_{1}u,
\end{gather}
where $c^\beta = (c_\beta)^{-1}$ and
\[
 \gamma^{\jbeta} = - \sum_{\ibeta = 1_\beta}^{l_\beta} \big(G^\beta \big)^{\ibeta \jbeta} \partial_{\ibeta} \left[ \log\left( \frac{c_{1}^{\frac{l_{1}}{2}}\cdots c_{r}^{\frac{l_{r}}{2}}}{c_{\beta}} \right)\right].
\]

We shall say that a block diagonal metric $g = \sum\limits_{\beta = 1}^r c_\beta G_\beta$ satisfies the generalized Robertson conditions if and only if the differential equations
\begin{gather} \label{RC}
 \partial_{\ialpha} \gamma^{\jbeta} = 0, \qquad \forall\, 1 \leq \alpha \ne \beta \leq r,
\end{gather}
hold.

Note that under the assumption (\ref{RC}), we may write the Helmholtz equation (\ref{PainHelmeff}) as
\[
\sum_{\beta = 1}^{r}c^{\beta} B_{\beta}u=a_{1}u,
\]
where the partial differential operators $B_{\beta}$, $1\leq \beta \leq r,$ defined by
\[
B_{\beta }:=-\Delta_{G_{\beta}}+\sum_{\jbeta = 1_{\beta}}^{\lbeta}\gamma^{\jbeta}\partial_{\jbeta},
\]
now depend on the group of variables ${\bf x}^{\beta}$ only.

In this section, we want to prove

\begin{Proposition} \label{PainSchro}
 Assume that the manifold $(M,g)$ admits locally block-orthogonal coordinates and satisfies the Robertson conditions~\eqref{RC}. Then $g$ is a Painlev\'e metric if and only if there exists a parametrized family of solutions $u$ of the Helmholtz equation~\eqref{PainHelmeff} which is product-separable into groups of variables, of the form
\begin{gather}\label{Blocksep}
u=\prod_{\beta =1}^{r}u_{\beta}\big({\bf x}^{\beta};a_{1},\dots,a_{r}\big),
\end{gather}
and satisfies the completeness condition
\begin{gather}\label{rankPainHelm}
\det \left({\partial_{a_{\alpha}}\left(\frac{B_{\beta} u_{\beta}}{u_{\beta}}\right)}\right)\neq 0.
\end{gather}
\end{Proposition}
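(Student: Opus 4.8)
The plan is to mirror the structure of the two proofs of Proposition~\ref{HJSeparability-Painleve}, since the Helmholtz separability statement is the ``quantum'' analogue of the Hamilton--Jacobi one. Write the Helmholtz equation in the reduced form~\eqref{PainHelmeff}, namely $\sum_{\beta=1}^r c^\beta B_\beta u = a_1 u$, where under the Robertson conditions~\eqref{RC} each operator $B_\beta$ depends only on the group ${\bf x}^\beta$. For the forward direction, I would assume $g$ is a Painlev\'e metric, so that $c^\alpha = \det S / s^{\alpha 1}$ and the key algebraic identity $\sum_{\alpha=1}^r (s^{\alpha 1}/\det S)\, s_{\alpha\beta} = \delta_{1\beta}$ holds. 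Substituting a product ansatz~\eqref{Blocksep} and dividing by $u$, the equation becomes $\sum_{\beta=1}^r c^\beta (B_\beta u_\beta / u_\beta) = a_1$, where $B_\beta u_\beta / u_\beta$ depends only on ${\bf x}^\beta$. One then imposes the block-separated equations $B_\beta u_\beta / u_\beta = \sum_{\gamma=1}^r s_{\beta\gamma} a_\gamma$, exactly paralleling the reduced Hamilton--Jacobi system~\eqref{HJ5}; summing against $c^\beta = s^{\beta 1}/\det S$ and using the above identity recovers the Helmholtz equation. Existence of local solutions $u_\beta$ of these (second-order, single-block) ODEs/PDEs follows from standard PDE theory, and differentiating with respect to $a_\alpha$ produces precisely the rank condition~\eqref{rankPainHelm}.

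For the converse, I would assume block-orthogonal coordinates with $g = \sum_\beta c_\beta G_\beta$, the Robertson conditions~\eqref{RC}, and the existence of a product-separable family satisfying~\eqref{rankPainHelm}. Inserting the product form into~\eqref{PainHelmeff} and dividing by $u$ gives
\[
\sum_{\beta=1}^r c^\beta \left(\frac{B_\beta u_\beta}{u_\beta}\right) = a_1,
\]
where each factor $B_\beta u_\beta / u_\beta$ depends only on ${\bf x}^\beta$. Differentiating this relation with respect to each $a_\alpha$ yields the linear system
\[
\sum_{\beta=1}^r c^\beta\, \partial_{a_\alpha}\!\left(\frac{B_\beta u_\beta}{u_\beta}\right) = \delta_{1\alpha},\qquad 1\leq\alpha\leq r,
\]
whose coefficient matrix is exactly the one in the rank condition~\eqref{rankPainHelm}. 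Since that matrix is invertible, I can solve for the vector $(c^\beta)$ and read off that $c^\beta$ is determined by the block functions $B_\beta u_\beta / u_\beta$ through the inverse St\"ackel matrix, forcing $c^\beta = s^{\beta 1}/\det S$ for a suitable generalized St\"ackel matrix $S$ built from the separation functions. This identifies $g$ as a Painlev\'e metric in the sense of~\eqref{Painleveform}, completing the equivalence.

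The main obstacle I anticipate is the converse direction's recovery of the generalized St\"ackel structure: one must argue that the block-dependent functions $B_\beta u_\beta / u_\beta$, together with the invertibility furnished by~\eqref{rankPainHelm}, genuinely assemble into a matrix of the required form~\eqref{genStackelmat}, with the correct dependence of each row on a single group of variables and with the conformal factors $c^\beta$ matching $s^{\beta 1}/\det S$. This is the exact analogue of the step in the first proof of Proposition~\ref{HJSeparability-Painleve} where the family of matrices $S(a_1,\dots,a_r)$ is shown to be a generalized St\"ackel matrix; the subtlety here is that the entries involve the second-order operators $B_\beta$ rather than the first-order quadratic forms, so I must carefully track that the Robertson conditions~\eqref{RC} are precisely what guarantees the $B_\beta$ act blockwise and hence that the separation is consistent. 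Apart from this structural identification, the remaining verifications---existence of local solutions and the bookkeeping of the rank condition---are routine and closely follow the Hamilton--Jacobi argument.
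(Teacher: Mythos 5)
Your proposal is correct and essentially identical to the paper's proof: the forward direction imposes the separated equations $B_\beta u_\beta = \big(\sum_{\alpha} s_{\beta\alpha}a_\alpha\big)u_\beta$ and reads off the rank condition from the invertibility of $S$, while the converse differentiates $\sum_{\beta} c^\beta (B_\beta u_\beta/u_\beta)=a_1$ with respect to the parameters and sets $s_{\beta\alpha}:=\partial_{a_\alpha}(B_\beta u_\beta/u_\beta)$, recovering $c^\beta = s^{\beta 1}/\det S$ by inverting the resulting linear system. The structural point you flag as the main obstacle is resolved exactly as you anticipate and exactly as the paper (implicitly) does: the Robertson conditions \eqref{RC} make each $B_\beta$ depend only on ${\bf x}^{\beta}$, so each row of the matrix so defined depends on a single group of variables, which together with \eqref{rankPainHelm} is all that is needed to identify it as a generalized St\"ackel matrix.
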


\begin{proof}
Assume that $g$ is a Painlev\'e metric. This means that $c_\beta = \frac{\det S}{s^{\beta 1}}$. Then the Helmholtz equation (\ref{PainHelmeff}) may equivalently be written in terms of the generalized St\"ackel matrix $S = (s_{\alpha \beta})$ and a set of arbitrary real parameters $(a_{1}:=\lambda, a_{2},\dots, a_{r})$ defined on an open subset of $A$ of ${\mathbb R}^{r}$ as
\begin{gather} \label{HE1}
\sum_{\beta =1}^{r} c^{\beta} \left[B_{\beta} -\sum_{\alpha =1}^{r}s_{\beta \alpha}a_{\alpha}\right] u = 0.
\end{gather}

We now consider a parametrized family of solutions $u$ of the Helmholtz equation~(\ref{HE1}) which is product-separable into groups of variables, of the form~(\ref{Blocksep}) where for each $1\leq \beta \leq r$, the factor~$u_{\beta}$ is required to satisfy the partial differential equation in the group of variables ${\bf x}^{\beta}$ given by
\begin{gather}\label{SepPDE}
 B_{\beta}u_{\beta} = \left(\sum_{\alpha =1}^{r}s_{\beta \alpha} a_{\alpha}\right) u_\beta.
\end{gather}

We note that for Painlev\'e metrics of Riemannian signature on a compact manifold, it follows by \cite[Theorem~8.3]{GiTru2001} that the elliptic partial differential equation (\ref{SepPDE}) admits a unique solution if the parameters $(a_{1}:=\lambda, a_{2},\dots, a_{r})$ are chosen so that the non-positivity condition
\[
\sum_{\alpha =1}^{r}s_{\beta \alpha} a_{\alpha}\leq 0,
\]
is satisfied (at least locally).

The form of the separated equations (\ref{SepPDE}) and the assumption that the generalized St\"ackel matrix is invertible imply that our parametrized family of block-separable solutions of the Helmholtz equation must satisfy the rank condition (\ref{rankPainHelm}),  i.e.,
\[
\det \left({\partial_{a^{\alpha}}\left(\frac{B_{\beta} u_{\beta}}{u_{\beta}}\right)}\right)\neq 0.
\]

We now prove the converse statement, namely that the existence of a parametrized family of block-separable solutions~(\ref{Blocksep}) of the Helmholtz equation satisfying partial differential equations of the form
\begin{gather}\label{SepB}
\sum_{\beta =1}^{r}c^{\beta}B_{\beta} u=a_{1}u,
\end{gather}
and the rank condition~(\ref{rankPainHelm}) implies that the underlying metric must be of Painlev\'e form. Substituting $u$ of the form (\ref{Blocksep}) into (\ref{SepB}) gives
\[
\sum_{\beta =1}^{r} c^{\beta}\frac{B_{\beta} u_{\beta}}{u_{\beta}}=a_{1}.
\]
Differentiating the latter equation with respect to $a_{\alpha}$, we obtain
\[
\sum_{\beta =1}^{r} c^{\beta}\left({\partial_{a^{\alpha}}\left(\frac{B_{\beta} u_{\beta}}{u_{\beta}}\right)}\right)=\delta^{\alpha}_{1}.
\]
Letting
\[
s_{\beta \alpha}={\partial_{a^{\alpha}}\left(\frac{B_{\beta} u_{\beta}}{u_{\beta}}\right)},
\]
we obtain the expression of $c_{\beta}$ for a Painlev\'e metric as given in (\ref{not}).
\end{proof}

\begin{Remark} From (\ref{SepPDE}) and the fact that the St\"ackel matrix $S$ is invertible, we conclude that the product separable solutions~(\ref{sepform}) satisfy eigenvalue equations of the form
\begin{gather}\label{eigenvalueeq}
T_{\alpha} u = a_{\alpha}u, \qquad 1\leq \alpha \leq r,
\end{gather}
where the $T_{\alpha}$ are the linear second order differential operators given by
\[
 T_{\alpha} = \sum_{\beta = 1}^r \frac{s^{\beta \alpha}}{\det S} B_\beta.
\]
They will be shown in Theorem~\ref{maintheorem2} to be identical to the operators $\Delta_{K_{(\alpha)}}$ defined by~(\ref{KLaplacian}). Hence the separation constants $a_1, \dots, a_r$ can be understood as the natural eigenvalues of the operators~$\Delta_{K_{(\alpha)}}$.
\end{Remark}

\section{Proofs of the main theorems} \label{ProofsofMainTheorems}

\subsection{Proof of Theorem \ref{maintheorem1}}
The fact that the generalized Robertson conditions (\ref{GenRobertson}) are sufficient conditions for the product separability of the Helmholtz equation (\ref{HelmPain}) in the groups of variables associated to a Painlev\'e metric follows from Proposition \ref{PainSchro}.

What remains to be done in order to prove Theorem \ref{maintheorem1} and what constitutes our main task is therefore to show that the generalized Robertson conditions are equivalent to the conditions~(\ref{Ricciform}) on the Ricci tensor, thus generalizing the classical result of Eisenhart \cite{Eis1934} to Painlev\'e metrics. In order to do this, we will show that
\begin{gather}\label{adaptedRicci}
R_{\jalpha \kbeta}=-\frac{3}{4}\partial_{j_{\alpha}}\partial_{k_{\beta}}\log\left[\frac{(\det S)^{n-2}}{\big(s^{11}\big)^{l_{1}}\cdots \big(s^{r1}\big)^{l_{r}}}\right] + \frac{1}{4}T_{\jalpha \kbeta},
\end{gather}
where
\begin{gather}
T_{\jalpha \kbeta} =(l_{\alpha}+l_{\beta}-2)\frac{s^{\alpha 1}s^{\beta 1}}{\det S}\partial_{j_{\alpha}}\partial_{k_{\beta}}\left(\frac{\det S}{s^{\alpha 1}s^{\beta 1}}\right)\nonumber\\
\hphantom{T_{\jalpha \kbeta} =}{} +\sum_{\gamma\neq \alpha,\beta=1}^{r}l_{\gamma}\left[\left(\partial_{j_{\alpha}}\log\frac{s^{\gamma 1}}{\det S}\right)\left(\partial_{k_{\beta}}\log\frac{s^{\alpha 1}}{\det S}\right)\right.\nonumber\\
\left.\hphantom{T_{\jalpha \kbeta} =}{}+\left(\partial_{j_{\alpha}}\log\frac{s^{\beta 1}}{\det S}\right)\left(\partial_{k_{\beta}}\log\frac{s^{\gamma 1}}{\det S}\right)
-\frac{\det S}{s^{\gamma 1}}\partial_{j_{\alpha}}\partial_{k_{\beta}}\left(\frac{s^{\gamma 1}}{\det S}\right)\right].\label{Tterms}
\end{gather}
We note that the expression (\ref{adaptedRicci}) of the off block-diagonal components of the Ricci tensor $R_{\jalpha \kbeta}$ is independent of the $r$ Riemannian block metrics $G_{\beta}$, $1\leq \beta \leq r$ defined by~(\ref{blockmetric}). We also remark that the first term in the expression~(\ref{Tterms}) of~$T_{\jalpha \kbeta}$, involving second derivatives, vanishes identically in the special case of St\"ackel metrics since the pre-factor $\lalpha+\lbeta-2$ is zero in that case.

Once we will have established (\ref{adaptedRicci}), it will then follow from Lemma~\ref{LeviCivita} and more precisely from the generalized Levi-Civita conditions~(\ref{genLeviCivita}) and~(\ref{secondorder}) that the generalized Robertson conditions (\ref{GenRobertson}) are indeed equivalent to the vanishing conditions~(\ref{Ricciform}) on the non-block diagonal components of the Ricci tensor. We therefore proceed to establish the form (\ref{adaptedRicci}) of the Ricci tensor for a Painlev\'e metric.

The expression of the Ricci tensor in terms of the Christoffel symbols is given by
\begin{gather}\label{Ricci}
R_{\jalpha \kbeta}=R^{l}_{{\phantom l} l \jalpha \kbeta}=\partial_{l}{\Gamma^{l}}_{\jalpha \kbeta}-\partial_{\jalpha}{\Gamma^{l}}_{l\kbeta}+{\Gamma^{m}}_{\jalpha \kbeta}{\Gamma^{l}}_{l m}-{\Gamma^{m}}_{l\kbeta}{\Gamma^{l}}_{\jalpha m},
\end{gather}
where the summation convention is applied with $1\leq l,m \leq n = \dim M$. In order to compute the right-hand side of (\ref{Ricci}), we will need expressions for the Christoffel symbols of a Painlev\'e metric (\ref{Painleveform}). Using the standard formulas
\begin{gather*}
\Gamma_{hji}=\frac{1}{2}(\partial_{h}g_{ji}+\partial_{j}g_{ih}-\partial_{i}g_{hj}), \qquad {\Gamma^{i}}_{hj}=g^{ik}\Gamma_{hjk},
\end{gather*}
and writing the Painlev\'e metric (\ref{Painleveform}) in block-diagonal form as
\[
{\rm d}s^{2}=\sum_{\alpha = 1}^{r}\sum_{\ialpha=1_{\alpha}}^{\lalpha}\sum_{\jalpha=1_{\alpha}}^{\lalpha}(g_{\alpha})_{\ialpha \jalpha}{\rm d}x^{\ialpha}{\rm d}x^{\jalpha},
\]
we obtain for fixed indices $1\leq \alpha,\beta \leq r$,
\begin{gather}
{\Gamma^{\ialpha}}_{\kalpha \jbeta}=\frac{1}{2}\sum_{\palpha=1_{\alpha}}^{\lalpha} \big(g^{\alpha}\big)^{\ialpha \palpha}\partial_{\jbeta}(g_{\alpha})_{\kalpha \palpha},\nonumber\\
 {\Gamma^{\jbeta}}_{\ialpha \kalpha}=-\frac{1}{2}\sum_{\pbeta=1_{\beta}}^{\lbeta}\big(g^{\beta}\big)^{\jbeta \pbeta}\partial_{\pbeta}(g_{\alpha})_{\ialpha \kalpha}\qquad \text{for} \quad \alpha \neq \beta,\label{Christoffel}
\end{gather}
and
\begin{gather}\label{Christoffelmixed}
{\Gamma^{\ialpha}}_{\halpha \kalpha}=\frac{1}{2}\sum_{\kalpha=1_{\alpha}}^{\lalpha} g^{\ialpha \kalpha}(\partial_{\halpha}g_{\jalpha \kalpha}+\partial_{\jalpha}g_{\halpha \kalpha}-\partial_{\kalpha}g_{\halpha \jalpha}).
\end{gather}
In view of the expressions (\ref{Christoffel}) of the Christoffel symbols, it is convenient to split the sum over~$l$ appearing in (\ref{Ricci}) into three sums, the first sum corresponding to the values of the summation index $l$ lying in groups of indices different from the groups corresponding to $\alpha$ and $\beta$, and the remaining two to the values of $l$ belonging to the groups of indices labelled by $\alpha$ and $\beta$ respectively. Thus we write
\begin{gather}\label{Riccidecomp}
R_{\jalpha \kbeta}=\sum_{l=1}^{n}R^{l}_{{\phantom l} l \jalpha \kbeta}=\sum_{\gamma \neq \alpha,\, \beta=1}^{r}\sum_{\pgamma=1_{\gamma}}^{\lgamma}R^{\pgamma}_{{\phantom l} \pgamma \jalpha \kbeta} + \sum_{\palpha =1_{\alpha}}^{\lalpha}R^{\palpha}_{{\phantom l} \palpha \jalpha \kbeta}+\sum_{\pbeta=1_{\beta}}^{\lbeta}R^{\pbeta}_{{\phantom l} \pbeta \jalpha \kbeta}.
\end{gather}
Let us begin with the first term. We have, for $\gamma\neq \alpha, \beta$,
\begin{gather}
\sum_{\pgamma = 1_{\gamma}}^{\lgamma}R^{\pgamma}_{{\phantom l} \pgamma \jalpha \kbeta}=-\sum_{\pgamma = 1_{\gamma}}^{\lgamma}\partial_{\jalpha}{\Gamma^{\pgamma}}_{\pgamma \kbeta} +\sum_{\pgamma = 1_{\gamma}}^{\lgamma}\sum_{\malpha=1_{\alpha}}^{\lalpha}{\Gamma^{\malpha}}_{\jalpha \kbeta}{\Gamma^{\pgamma}}_{\pgamma \malpha}\nonumber\\
\hphantom{\sum_{\pgamma = 1_{\gamma}}^{\lgamma}R^{\pgamma}_{{\phantom l} \pgamma \jalpha \kbeta}=}{}
+\sum_{\pgamma = 1_{\gamma}}^{\lgamma}\sum_{\mbeta=1_{\beta}}^{\lbeta}{\Gamma^{\mbeta}}_{\jalpha \kbeta}{\Gamma^{\pgamma}}_{\pgamma \mbeta} -\sum_{\pgamma = 1_{\gamma}}^{\lgamma}\sum_{\mgamma=1_{\gamma}}^{\lgamma}{\Gamma^{\mgamma}}_{\pgamma \kbeta}{\Gamma^{\pgamma}}_{\jalpha \mgamma},\label{Riemann1}
\end{gather}
where the summations have been written out explicitly to avoid notational ambiguities. We begin with evaluating the first-derivative term on the right-hand side of~(\ref{Riemann1}) using the expressions~(\ref{Christoffel}) for the Christoffel symbols, thus obtaining, for $\gamma \neq \alpha, \beta$,
\begin{gather}\label{derivgamma}
\sum_{\pgamma = 1_{\gamma}}^{\lgamma}\partial_{\jalpha}{\Gamma^{\pgamma}}_{\pgamma \kbeta} = \frac{1}{2}\sum_{\pgamma = 1_{\gamma}}^{\lgamma}\sum_{\ngamma = 1_{\gamma}}^{\lgamma}\partial_{\jalpha}\big(\big(g^{\gamma}\big)^{\pgamma \ngamma}\partial_{\kbeta}((g_{\gamma})_{\pgamma \ngamma})\big)=\frac{1}{2}\partial_{\jalpha}\partial_{\kbeta}\log(|g_{\gamma}|),
\end{gather}
where $|g_{\gamma}|:=\det(g_{\igamma \jgamma})$. It follows that
\[
\sum_{\gamma \neq \alpha,\, \beta=1}^{r}\sum_{\pgamma=1_{\gamma}}^{\lgamma}\partial_{\jalpha}{\Gamma^{\pgamma}}_{\pgamma \kbeta}= \frac{1}{2}\partial_{\jalpha}\partial_{\kbeta}\left(\prod_{\gamma \neq \alpha, \,\beta=1}^{r}\log(|g_{\gamma}|)\right).
\]
From the Painlev\'e form (\ref{Painleveform}), we have
\begin{gather}\label{Painleveblock}
g_{\igamma \jgamma}=\frac{\det S}{s^{\gamma 1}}(G_{\gamma})_{\igamma \jgamma},
\end{gather}
so that
\[
|g_{\gamma}|=\frac{(\det S)^{\lgamma}}{\big({s^{\gamma 1}}\big)^{\lgamma}}|G_{\gamma}|.
\]
Using the fact that $|G_{\gamma}|$ is a function of the variables ${\bf x}^{\gamma}$ only, we obtain
\begin{gather*}
\sum_{\gamma \neq \alpha, \,\beta=1}^{r}\sum_{\pgamma=1_{\gamma}}^{\lgamma} \partial_{\jalpha}{\Gamma^{\pgamma}}_{\pgamma \kbeta}=\frac{1}{2}\sum_{\gamma \neq \alpha, \,\beta=1}^{r}\lgamma \partial_{\jalpha}\partial_{\kbeta}\left(\log\left(\frac{\det S}{s^{\gamma 1}}\right)\right).
\end{gather*}
We notice that the above expression is independent of the quadratic differential forms $G_{\alpha}$ defined by~(\ref{blockmetric}).

Next we evaluate the terms quadratic in the Christoffel symbols in the right-hand side of~(\ref{Riemann1}). Again, we use the fact that $\gamma\neq \alpha, \beta$ and the fact that in the Painlev\'e form~(\ref{Painleveform}), each of the quadratic differential forms $G_{\alpha}$ defined by~(\ref{blockmetric}) depends on the group of variables ${\bf x}^{\alpha}$ only. We have
\begin{gather}
\sum_{\malpha=1_{\alpha}}^{\lalpha} \sum_{\pgamma = 1_{\gamma}}^{\lgamma}{\Gamma^{\malpha}}_{\jalpha \kbeta}{\Gamma^{\pgamma}}_{\pgamma \malpha}\nonumber\\
\qquad{} =\frac{1}{4}\sum_{\malpha = 1_{\alpha}}^{l_{\alpha}} \sum_{\pgamma = 1_{\gamma}}^{\lgamma}\sum_{\nalpha=1_{\alpha}}^{l_{\alpha}}(g^{\alpha})^{\malpha \nalpha}\partial_{\kbeta}((g_{\alpha})_{\jalpha \nalpha})g^{\pgamma \hgamma}\partial_{\malpha}((g_{\gamma})_{\pgamma \hgamma})\nonumber\\
\qquad{} =\frac{1}{4}\sum_{\malpha=1_{\alpha}}^{l_{\alpha}}\sum_{\nalpha=1_{\alpha}}^{l_{\alpha}}(g^{\alpha})^{\malpha \nalpha}\partial_{\kbeta}((g_{\alpha})_{\jalpha \nalpha})\partial_{\malpha}\log(|g_{\gamma}|)\nonumber\\
\qquad{} =\frac{1}{4}\sum_{\malpha=1_{\alpha}}^{l_{\alpha}}\sum_{\nalpha=1_{\alpha}}^{l_{\alpha}}\frac{s^{\alpha 1}}{\det S}\big(G^{\alpha}\big)^{\malpha \nalpha}\partial_{\kbeta}\left(\frac{\det S}{s^{\alpha 1}}(G_{\alpha})_{\jalpha \nalpha}\right)\partial_{\malpha}\log(|g_{\gamma}|)\nonumber\\
\qquad{} =\frac{1}{4}\sum_{\malpha=1_{\alpha}}^{l_{\alpha}}\sum_{\nalpha=1_{\alpha}}^{l_{\alpha}}{\delta^{\malpha}}_{\jalpha}\partial_{\kbeta}\log\left(\frac{\det S}{s^{\alpha 1}}\right)\partial_{\malpha}\log \left(\frac{(\det S)^{\lgamma}}{(s^{\gamma 1})^{\lgamma}}\right)\nonumber\\
\qquad{} =\frac{1}{4}\left(\partial_{\kbeta}\log\left(\frac{s^{\alpha 1}}{\det S}\right)\right)\left(\partial_{\jalpha}\log \left(\frac{\big(s^{\gamma 1}\big)^{\lgamma}}{(\det S)^{\lgamma}}\right)\right).\label{quadr1}
\end{gather}
We notice that the above expression is again independent of the quadratic differential forms $G_{\alpha}$ defined by (\ref{blockmetric}). Likewise, we obtain for the next quadratic term in the Christoffel symbols that appears in the right-hand side of~(\ref{Riemann1}),
\begin{gather}\label{quadr2}
\sum_{\mbeta=1_{\beta}}^{\lbeta} \sum_{\pgamma = 1_{\gamma}}^{\lgamma}{\Gamma^{\mbeta}}_{\jalpha \kbeta}{\Gamma^{\pgamma}}_{\pgamma \mbeta}=\frac{1}{4}\left(\partial_{\jalpha}\log\left(\frac{s^{\beta 1}}{\det S}\right)\right)\left(\partial_{\kbeta}\log \left(\frac{\big(s^{\gamma 1}\big)^{\pgamma}}{(\det S)^{\pgamma}}\right)\right).
\end{gather}
For the third and final quadratic term, we have
\begin{align}
\sum_{\mgamma=1_{\gamma}}^{\lgamma}{\Gamma^{\mgamma}}_{\pgamma \kbeta}{\Gamma^{\pgamma}}_{\jalpha \mgamma} &=\frac{1}{4}\sum_{\mgamma=1_{\gamma}}^{\lgamma}{\delta^{\mgamma}}_{\pgamma}\left(\partial_{\kbeta}\log\left(\frac{s^{\gamma 1}}{\det S}\right)\right){\delta^{\pgamma}}_{\mgamma}\left(\partial_{\jalpha}\log\left(\frac{s^{\gamma 1}}{\det S}\right)\right)\nonumber\\
&=\frac{1}{4}\lgamma \left(\partial_{\kbeta}\log\left(\frac{s^{\gamma 1}}{\det S}\right)\right)\left(\partial_{\jalpha}\log\left(\frac{s^{\gamma 1}}{\det S}\right)\right),\label{quadr3}
\end{align}
which is likewise independent of the quadratic differential forms $G_{\alpha}$ defined by (\ref{blockmetric}). Putting together the expressions (\ref{derivgamma}), (\ref{quadr1}), (\ref{quadr2}) and (\ref{quadr3}) we obtain
\begin{gather}
\sum_{1\leq \gamma \neq \alpha, \beta\leq r}R^{\pgamma}_{{\phantom l} \pgamma \jalpha \kbeta}=\sum_{\gamma \neq \alpha, \,\beta =1}^{r}l_{\gamma}\left[\frac{1}{2}\partial_{\jalpha}\partial_{\kbeta}\log\left(\frac{s^{\gamma 1}}{\det S}\right)\right.\nonumber\\
\hphantom{\sum_{1\leq \gamma \neq \alpha, \beta\leq r}R^{\pgamma}_{{\phantom l} \pgamma \jalpha \kbeta}=}{}
+\frac{1}{4}\left(\partial_{\kbeta}\log\left(\frac{s^{\alpha 1}}{\det S}\right)\right)\left(\partial_{\jalpha}\log\left(\frac{s^{\gamma 1}}{\det S}\right)\right)\nonumber\\
\hphantom{\sum_{1\leq \gamma \neq \alpha, \beta\leq r}R^{\pgamma}_{{\phantom l} \pgamma \jalpha \kbeta}=}{}
+ \frac{1}{4}\left(\partial_{\jalpha}\log\left(\frac{s^{\beta 1}}{\det S}\right)\right)\left(\partial_{\kbeta}\log\left(\frac{s^{\gamma 1}}{\det S}\right)\right)\nonumber\\
\left.\hphantom{\sum_{1\leq \gamma \neq \alpha, \beta\leq r}R^{\pgamma}_{{\phantom l} \pgamma \jalpha \kbeta}=}{}
-\frac{1}{4}\left(\partial_{\kbeta}\log\left(\frac{s^{\gamma 1}}{\det S}\right)\right)\left(\partial_{\jalpha}\log\left(\frac{s^{\gamma 1}}{\det S}\right)\right)\right].\label{Riccidecomp1}
\end{gather}
We still need to evaluate the curvature components $R^{\palpha}_{{\phantom l} \palpha \jalpha \kbeta}$ and $R^{\pbeta}_{{\phantom l} \pbeta \jalpha \kbeta}$, which will require a~separate calculation. We have
\begin{gather}
\sum_{\palpha=1_{\alpha}}^{\lalpha}R^{\palpha}_{{\phantom l} \palpha \jalpha \kbeta} =\sum_{\palpha=1_{\alpha}}^{\lalpha}\partial_{\palpha}{\Gamma^{\palpha}}_{\jalpha \kbeta}-\sum_{\palpha=1_{\alpha}}^{\lalpha}\partial_{\jalpha}{\Gamma^{\palpha}}_{\palpha \kbeta}\nonumber\\
\hphantom{\sum_{\palpha=1_{\alpha}}^{\lalpha}R^{\palpha}_{{\phantom l} \palpha \jalpha \kbeta} =}{}
+\sum_{\palpha=1_{\alpha}}^{\lalpha}\sum_{\malpha = 1_{\alpha}}^{\lalpha}{\Gamma^{\malpha}}_{\jalpha \kbeta}{\Gamma^{\palpha}}_{\palpha \malpha}+\sum_{\palpha=1_{\alpha}}^{\lalpha}\sum_{\mbeta=1_{\beta}}^{\lbeta}{\Gamma^{\mbeta}}_{\jalpha \kbeta}{\Gamma^{\palpha}}_{\palpha \mbeta}\nonumber\\
\hphantom{\sum_{\palpha=1_{\alpha}}^{\lalpha}R^{\palpha}_{{\phantom l} \palpha \jalpha \kbeta} =}{}-\sum_{\palpha=1_{\alpha}}^{\lalpha}\sum_{\malpha=1_{\alpha}}^{\lalpha}{\Gamma^{\malpha}}_{\palpha \kbeta}{\Gamma^{\palpha}}_{\jalpha \malpha}-\sum_{\palpha=1_{\alpha}}^{\lalpha}\sum_{\mbeta=1_{\beta}}^{\lbeta}{\Gamma^{\mbeta}}_{\lalpha \kbeta}{\Gamma^{\palpha}}_{\jalpha \mbeta},\label{Riemann2}
\end{gather}
where again we have written out the summation signs explicitly to avoid notational ambiguities. We have, using (\ref{Christoffel}),
\begin{gather}\label{moreChristoffel}
{\Gamma^{\palpha}}_{\jalpha \kbeta}=\frac{1}{2}{\delta^{\palpha}}_{\jalpha}\partial_{\kbeta}\left(\log\left(\frac{\det S}{s^{\alpha 1}}\right)\right),
\end{gather}
so using the fact that the cofactor $s^{\alpha 1}$ is independent of ${\bf x}^{\alpha}$, we obtain
\begin{gather}\label{newderiv1}
\sum_{\palpha=1_{\alpha}}^{\lalpha}\partial_{\palpha}{\Gamma^{\palpha}}_{\jalpha \kbeta}=\frac{1}{2}\partial_{\jalpha}\partial_{\kbeta}\big(\log(\det S)\big),
\end{gather}
and a similar calculation gives
\begin{gather}\label{newderiv2}
\sum_{\palpha=1_{\alpha}}^{\lalpha}\partial_{\jalpha}{\Gamma^{\palpha}}_{\palpha \kbeta}=\frac{1}{2}\partial_{\jalpha}\partial_{\kbeta}\big(\log\big((\det S)^{\lalpha}\big)\big).
\end{gather}
We now evaluate the quadratic terms in the Christoffel symbols that appear in the curvature component~(\ref{Riemann2}). In order to do so, we substitute into the expression (\ref{Christoffelmixed}) of the Christoffel symbols the expressions
\[
(g_{\alpha})_{\ialpha \jalpha}=\frac{\det S}{s^{\alpha 1}}G_{\alpha},
\]
which result from the Painlev\'e form (\ref{Painleveform}). We obtain the following expressions for the Christoffel symbols,
\begin{gather}
{\Gamma^{\ialpha}}_{\halpha \jalpha}={\gamma^{\ialpha}}_{\halpha \jalpha} +\frac{1}{2}\Bigg[ {\delta^{\ialpha}}_{\jalpha}\partial_{\halpha}\big(\log{(\det S)}\big)+{\delta^{\ialpha}}_{\halpha}\partial_{\jalpha}\big(\log{(\det S)}\big)\nonumber\\
\hphantom{{\Gamma^{\ialpha}}_{\halpha \jalpha}=}{}-\sum_{\kalpha = 1_{\alpha}}^{\lalpha}\big(G^{\alpha}\big)^{\ialpha \kalpha}(G_{\alpha})_{\halpha \jalpha}\partial_{\kalpha}\big(\log{(\det S)}\big)\Bigg],\label{longChristoffel}
\end{gather}
where the ${\gamma^{\ialpha}}_{\halpha \jalpha}$ denote the Christoffel symbols of the block metric~$G_{\alpha}$ given by~(\ref{blockmetric}). It then follows from~(\ref{longChristoffel}) and~(\ref{moreChristoffel}) that
\begin{gather}
\sum_{\palpha=1_{\alpha}}^{\lalpha}\sum_{\malpha = 1_{\alpha}}^{\lalpha}{\Gamma^{\malpha}}_{\jalpha \kbeta}{\Gamma^{\palpha}}_{\palpha \malpha}\nonumber\\
\qquad{} =\frac{1}{2}\partial_{\kbeta}\left(\log\left(\frac{\det S}{s^{\alpha 1}}\right)\right)\left[\sum_{\palpha=1_{\alpha}}^{\lalpha}{\gamma^{\palpha}}_{\palpha \jalpha}+\frac{1}{2}\lalpha \partial_{\malpha}\big(\log{(\det S)}\big)\right].\label{newquadr1}
\end{gather}
Similarly, using (\ref{moreChristoffel}) and (\ref{longChristoffel}), we obtain
\begin{gather}
\sum_{\palpha=1_{\alpha}}^{\lalpha}\sum_{\malpha=1_{\alpha}}^{\lalpha}{\Gamma^{\malpha}}_{\palpha \kbeta}{\Gamma^{\palpha}}_{\jalpha \malpha}\nonumber\\
\qquad{} =\frac{1}{2}\partial_{\kbeta}\left(\log\left(\frac{\det S}{s^{\alpha 1}}\right)\right)\left[\sum_{\palpha=1_{\alpha}}^{\lalpha}{\gamma^{\palpha}}_{\jalpha \palpha}+\frac{1}{2}\lalpha \partial_{\malpha}\big(\log{(\det S)}\big)\right].\label{newquadr2}
\end{gather}
It follows therefore from (\ref{newquadr1}) and (\ref{newquadr2}) that
\[
\sum_{\palpha=1_{\alpha}}^{\lalpha}\sum_{\malpha = 1_{\alpha}}^{\lalpha}{\Gamma^{\malpha}}_{\jalpha \kbeta}{\Gamma^{\palpha}}_{\palpha \malpha}-\sum_{\palpha=1_{\alpha}}^{\lalpha}\sum_{\malpha=1_{\alpha}}^{\lalpha}{\Gamma^{\malpha}}_{\palpha \kbeta}{\Gamma^{\palpha}}_{\jalpha \malpha}=0.
\]
We now evaluate the remaining difference of two double sums in the expression (\ref{Riemann2}) of the curvature, using the expressions
\[
{\Gamma^{\mbeta}}_{\jalpha \kbeta}=\frac{1}{2}{\delta^{\mbeta}}_{\kbeta}\partial_{\jalpha}\left(\log\left(\frac{\det S}{s^{\beta 1}}  \right)\right), \qquad {\Gamma^{\mbeta}}_{\jalpha \kbeta} = \frac{1}{2}\partial_{\mbeta}\left(\log\left(\frac{(\det S)^{\lalpha}}{\big(s^{\beta 1}\big)^{\lalpha}} \right)\right),
\]
for the Christoffel symbols, to obtain
\begin{gather}
\sum_{\palpha=1_{\alpha}}^{\lalpha}\sum_{\malpha = 1_{\alpha}}^{\lalpha}{\Gamma^{\malpha}}_{\jalpha \kbeta}{\Gamma^{\palpha}}_{\palpha \malpha}-\sum_{\palpha=1_{\alpha}}^{\lalpha}\sum_{\mbeta=1_{\beta}}^{\lbeta}{\Gamma^{\mbeta}}_{\lalpha \kbeta}{\Gamma^{\palpha}}_{\jalpha \mbeta}\nonumber\\
\qquad{} =\frac{1}{4}(\lalpha-1)\partial_{\jalpha}\left(\log \left(\frac{\det S}{s^{\beta 1}}\right)\right)\partial_{\kbeta}\left(\log\left(\frac{\det S}{s^{\alpha 1}} \right)\right).\label{newquadr3}
\end{gather}
Substituting the expressions (\ref{newderiv1}), (\ref{newderiv2}), (\ref{newquadr1}), (\ref{newquadr2}) and (\ref{newquadr3}) into (\ref{Riemann2}), we obtain
\begin{gather}
\sum_{\palpha=1_{\alpha}}^{\lalpha}R^{\palpha}_{{\phantom l} \palpha \jalpha \kbeta}= \frac{1-\lalpha}{2} \partial_{\jalpha}\partial_{\kbeta} \big(\log(\det S)\big)\nonumber\\
\hphantom{\sum_{\palpha=1_{\alpha}}^{\lalpha}R^{\palpha}_{{\phantom l} \palpha \jalpha \kbeta}=}{}
+\frac{1}{4}(\lalpha-1)\partial_{\jalpha}\left(\log\left(\frac{\det S}{s^{\beta 1}}\right)\right)\partial_{\kbeta}\left(\log\left(\frac{\det S}{s^{\alpha 1}}\right)\right),\label{Riccidecomp2}
\end{gather}
and similarly
\begin{gather}\label{Riccidecomp3}
\sum_{\pbeta=1_{\beta}}^{\lalpha}R^{\pbeta}_{{\phantom l} \pbeta \jalpha \kbeta}=\frac{1-\lbeta}{2}\partial_{\jalpha}\partial_{\kbeta} \big(\log(\det S)\big)\nonumber\\
\hphantom{\sum_{\pbeta=1_{\beta}}^{\lalpha}R^{\pbeta}_{{\phantom l} \pbeta \jalpha \kbeta}=}{}
+\frac{1}{4}(\lbeta-1)\partial_{\jalpha}\left(\log\left(\frac{\det S}{s^{\beta 1}}\right)\right)\partial_{\kbeta}\left(\log\left(\frac{\det S}{s^{\alpha 1}}\right)\right).
\end{gather}
We now substitute the expressions (\ref{Riccidecomp1}), (\ref{Riccidecomp2}) and (\ref{Riccidecomp3}) into the decomposition~(\ref{Riccidecomp}) of the off-block diagonal Ricci curvature components $R_{\jalpha \kbeta}$ to obtain, for $\alpha \neq \beta$,
\begin{gather}
R_{\jalpha \kbeta}=-\frac{1}{2}\sum_{\gamma =1}^{r}\lgamma \partial_{\jalpha}\partial_{\kbeta}\left(\log\left(\frac{\det S}{s^{\gamma 1}}\right)\right)+\partial_{\jalpha}\partial_{\kbeta}\big(\log(\det S)\big)\nonumber\\
\hphantom{R_{\jalpha \kbeta}=}{} + \frac{1}{4}(\lalpha+\lbeta-2) \partial_{\jalpha}\left(\log\left(\frac{\det S}{s^{\beta 1}}\right)\right)\partial_{\kbeta}\left(\log\left(\frac{\det S}{s^{\alpha 1}}\right)\right)\nonumber\\
\hphantom{R_{\jalpha \kbeta}=}{}+\frac{1}{4}\sum_{\gamma \neq \alpha, \,\beta=1}^{r}\lgamma \left[\partial_{\kbeta}\left(\log\left(\frac{\det S}{s^{\alpha 1}}\right)\right) \partial_{\jalpha}\left(\log\left(\frac{\det S}{s^{\gamma 1}}\right)\right)\right.\nonumber\\
\hphantom{R_{\jalpha \kbeta}=}{}
+\partial_{\jalpha}\left(\log\left(\frac{\det S}{s^{\beta 1}}\right)\right)\partial_{\kbeta}\left(\log\left(\frac{\det S}{s^{\gamma 1}}\right)\right)\nonumber\\
\left.\hphantom{R_{\jalpha \kbeta}=}{} -\partial_{\kbeta}\left(\log\left(\frac{\det S}{s^{\gamma 1}}\right)\right)\partial_{\jalpha}\left(\log\left(\frac{\det S}{s^{\gamma 1}}\right)\right)\right].\label{Riccialmostfinal}
\end{gather}
We observe that since the cofactors $s^{\alpha 1}$ and $s^{\beta 1}$ are independent of the groups of variables ${\bf x}^{\alpha}$ and ${\bf x}^{\beta}$ respectively and since $\sum\limits_{\alpha=1}^{r}\lalpha = n$, we have
\begin{gather*}
-\frac{1}{2}\sum_{\gamma =1}^{r}\lgamma \partial_{\jalpha}\partial_{\kbeta}\left(\log\left(\frac{\det S}{s^{\gamma 1}}\right)\right)+\partial_{\jalpha}\partial_{\kbeta}\big(\log(\det S)\big)\\
\qquad{} =-\frac{1}{2}\partial_{j_{\alpha}}\partial_{k_{\beta}}\log\left[\frac{(\det S)^{n-2}}{\big(s^{11}\big)^{l_{1}}\cdots \big(s^{r1}\big)^{l_{r}}}\right].
\end{gather*}
We write
\begin{gather}
-\frac{1}{2}\partial_{j_{\alpha}}\partial_{k_{\beta}}\log\left[\frac{(\det S)^{n-2}}{\big(s^{11}\big)^{l_{1}}\cdots \big(s^{r1}\big)^{l_{r}}}\right]\nonumber\\
\qquad{} =-\frac{3}{4}\partial_{j_{\alpha}}\partial_{k_{\beta}}\log\left[\frac{(\det S)^{n-2}}{\big(s^{11}\big)^{l_{1}}\cdots \big(s^{r1}\big)^{l_{r}}}\right] +\frac{1}{4}\partial_{j_{\alpha}}\partial_{k_{\beta}}\log\left[\frac{(\det S)^{n-2}}{\big(s^{11}\big)^{l_{1}}\cdots \big(s^{r1}\big)^{l_{r}}}\right],\label{simplif}
\end{gather}
and compute the second term in (\ref{simplif}) as follows
\begin{gather*}
\frac{1}{4}\partial_{j_{\alpha}}\partial_{k_{\beta}}\log\left[\frac{(\det S)^{n-2}}{\big(s^{11}\big)^{l_{1}}\cdots \big(s^{r1}\big)^{l_{r}}}\right]\\
\qquad{} =\frac{1}{4}\partial_{j_{\alpha}}\partial_{k_{\beta}}\left[ \sum_{1\leq \gamma \neq \alpha, \beta\leq r} \log\left( \frac{(\det S)^{\lgamma}}{\big(s^{\gamma 1}\big)^{\lgamma}}\right)+\log\left(\frac{(\det S)^{\lalpha+\lbeta-2}}{s^{\alpha 1}s^{\beta 1}}\right)\right].
\end{gather*}
Evaluating the derivatives an using the fact that the cofactors $s^{\alpha 1}$ and $s^{\beta 1}$ are independent of the groups of variables ${\bf x}^{\alpha}$ and ${\bf x}^{\alpha}$, we conclude that the expression (\ref{Riccialmostfinal}) of the off-block diagonal Ricci curvature can be written as
\begin{gather*}
R_{\jalpha \kbeta} =-\frac{3}{4}\partial_{j_{\alpha}}\partial_{k_{\beta}}\log\left[\frac{(\det S)^{n-2}}{\big(s^{11}\big)^{l_{1}}\cdots \big(s^{r1}\big)^{l_{r}}}\right] + \frac{1}{4} (l_{\alpha}+l_{\beta}-2)\frac{s^{\alpha 1}s^{\beta 1}}{\det S}\partial_{j_{\alpha}}\partial_{k_{\beta}}\left(\frac{\det S}{s^{\alpha 1}s^{\beta 1}}\right)\\
\hphantom{R_{\jalpha \kbeta} =}{}+
\sum_{\gamma\neq \alpha,\beta=1}^{r}l_{\gamma}\left(\partial_{j_{\alpha}}\log\frac{s^{\gamma 1}}{\det S}\right)\left(\partial_{k_{\beta}}\log\frac{s^{\alpha 1}}{\det S}\right)
+\left(\partial_{j_{\alpha}}\log\frac{s^{\beta 1}}{\det S}\right)\left(\partial_{k_{\beta}}\log\frac{s^{\gamma 1}}{\det S}\right)\\
\hphantom{R_{\jalpha \kbeta} =}{}
-\frac{\det S}{s^{\gamma 1}}\partial_{\jalpha}\partial_{\kbeta}\left(\frac{s^{\gamma 1}}{\det S}\right).
\end{gather*}
This completes the proof of Theorem \ref{maintheorem1}.

\subsection{Proof of Theorem \ref{maintheorem2}}
Our proof follows the structure of the one given in \cite{BCR2-2002} for the special case of Theorem~\ref{maintheorem2} corresponding to St\"ackel metrics satisfying the classical Robertson conditions~(\ref{classicalRobertson}). We shall begin from the general expression for the commutator of two operators of the form
\[
\Delta_{K_{(\alpha)}}=\sum_{\gamma=1}^{r}\sum_{i_{\gamma}=1_{\gamma}}^{l_{\gamma}}\sum_{j_{\gamma}=1_{\gamma}}^{l_{\gamma}}A_{(\alpha)}^{\igamma \jgamma}\partial_{\igamma}\partial_{\jgamma}+\sum_{\gamma=1}^{r}\sum_{j_{\gamma}=1_{\gamma}}^{l_{\gamma}}B_{(\alpha)}^{\jgamma}\partial_{\jgamma},
\]
and analyze this expression for the case where $\Delta_{K_{(\alpha)}}$ is given by
\[
\Delta_{K_{(\alpha)}}=\nabla_{i}(K_{(\alpha)}^{ij}\nabla_{j})=\sum_{\gamma=1}^{r}\sum_{i_{\gamma}=1_{\gamma}}^{l_{\gamma}}\sum_{j_{\gamma} =1_{\gamma}}^{l_{\gamma}}\nabla_{i_{\gamma}}\big(K_{(\alpha)}^{i_{\gamma}j_{\gamma}}\nabla_{j_{\gamma}}\big),
\]
where the $\big(K_{(\alpha)}^{ij}\big)$ are the Killing tensors defined by (\ref{Killingtensor}). We shall then prove that the commutator is identically zero for all Painlev\'e metrics satisfying the generalized Robertson conditions. We shall see that the generalized Killing--Eisenhart equations (\ref{genKE}) established in Lemma \ref{KE} play a key role in the analysis of the commutator.

We have
\begin{gather}
\big[\Delta_{K_{(\alpha)}},\Delta_{K_{(\beta)}}\big] =\sum_{\gamma,\epsilon=1}^{r}\sum_{\igamma = 1_{\gamma}}^{\lgamma}\sum_{\jgamma = 1_{\gamma}}^{\lgamma}\sum_{\kepsilon =1_{\epsilon}}^{\lepsilon}\sum_{\pepsilon =1_{\epsilon}}^{\lepsilon}\Big\{2\big(A_{(\alpha)}^{\igamma \jgamma}\partial_{\igamma}A_{(\beta)}^{\kepsilon \pepsilon}-A_{(\beta)}^{\igamma \jgamma}\partial_{\igamma}A_{(\alpha)}^{\kepsilon \pepsilon}\big)\partial_{\jgamma}\partial_{\kepsilon}\partial_{\pepsilon}\nonumber\\
\hphantom{\big[\Delta_{K_{(\alpha)}},\Delta_{K_{(\beta)}}\big] =}{}
+\big(A_{(\alpha)}^{\igamma \jgamma}\partial_{\igamma}\partial_{\jgamma}A_{(\beta)}^{\kepsilon \pepsilon}-A_{(\beta)}^{\igamma \jgamma}\partial_{\igamma}\partial_{\jgamma}A_{(\alpha)}^{\kepsilon \pepsilon}+B_{(\alpha)}^{\jbeta}\partial_{\jgamma}A_{(\beta)}^{\kepsilon \pepsilon}\nonumber\\
\hphantom{\big[\Delta_{K_{(\alpha)}},\Delta_{K_{(\beta)}}\big] =}{}
-B_{(\beta)}^{\jbeta}\partial_{\jgamma}A_{(\alpha)}^{\kepsilon \pepsilon}\big)\partial_{\kepsilon}\partial_{\pepsilon}
+2\big(A_{(\alpha)}^{\igamma \jgamma}\partial_{\igamma}B_{\beta}^{\pepsilon}-A_{(\beta)}^{\igamma \jgamma}\partial_{\igamma}B_{\alpha}^{\pepsilon}\big)\partial_{\jgamma}\partial_{\pepsilon}\nonumber\\
\hphantom{\big[\Delta_{K_{(\alpha)}},\Delta_{K_{(\beta)}}\big] =}{}
+\big(A_{(\alpha)}^{\igamma \jgamma}\partial_{\igamma}\partial_{\jgamma}B_{(\beta)}^{\pepsilon}-A_{(\beta)}^{\igamma \jgamma}\partial_{\igamma}\partial_{\jgamma}B_{(\alpha)}^{\pepsilon}+B_{(\alpha)}^{\jgamma} \partial_{\jgamma}B_{(\beta)}^{\pepsilon}\nonumber\\
\hphantom{\big[\Delta_{K_{(\alpha)}},\Delta_{K_{(\beta)}}\big] =}{}
-B_{(\beta)}^{\jgamma}\partial_{\jgamma}B_{(\alpha)}^{\pepsilon}\big)\partial_{\pepsilon}\Big\}.
\label{rawcomm}
\end{gather}
We will now evaluate the coefficients of the third, second and first derivatives in the expres\-sion~(\ref{rawcomm}) of the commutator for Painlev\'e metrics satisfying the generalized Robertson conditions, and show that must vanish identically.

Over the course of the calculations, it will be useful to rewrite the expressions~(\ref{Killingtensor}) of the block components of the Killing tensors $\big(K_{(\alpha)}^{ij}\big)$ in the form
\begin{gather}\label{Killingtensormod}
K_{(\alpha)}^{\igamma \jgamma}=\rho_{\alpha \gamma}\frac{s^{\gamma 1}}{\det S}\big(G^{\gamma}\big)^{\igamma \jgamma}=\rho_{\alpha \gamma}\big(g^{\gamma}\big)^{\igamma \jgamma},
\end{gather}
where the quantities $\rho_{\alpha \gamma}$ are defined by
\begin{gather}\label{rhotwo}
\rho_{\alpha \gamma}=\frac{s^{\gamma \alpha}}{s^{\gamma 1}},
\end{gather}
in which case we obtain
\begin{gather}
A_{(\alpha)}^{\igamma \jgamma}=K_{(\alpha)}^{\igamma \jgamma}=\rho_{\alpha \gamma}\frac{s^{\gamma 1}}{\det S}\big(G^{\gamma}\big)^{\igamma \jgamma}=\rho_{\alpha \gamma}\big(g^{\gamma}\big)^{\igamma \jgamma},\nonumber\\
 B_{(\alpha)}^{\jgamma}=-\rho_{\alpha \gamma}\sum_{\igamma = 1_{\gamma}}^{\lgamma} \big(g^{\gamma}\big)^{\igamma \jgamma}\Gamma_{\igamma},\label{coeffmod}
\end{gather}
where
\[
\Gamma_{\kepsilon}=-\frac{1}{2}\partial_{\kepsilon}\log (|g|)-\sum_{\pepsilon=1_{\epsilon}}^{\lepsilon}\sum_{\hepsilon=1_{\epsilon}}^{\lepsilon}(g_{\epsilon})_{\kepsilon \pepsilon}\partial_{\hepsilon}(g^{\epsilon})^{\hepsilon \pepsilon}.
\]
Note that a standard calculation shows the important result:

\begin{Lemma} \label{RC-NewFormulation}The generalized Robertson conditions \eqref{GenRobertson} are equivalent to
\begin{gather*} 
 \partial_{\jalpha} \Gamma_{\kbeta} = 0, \qquad \forall\, 1 \leq \alpha \ne \beta \leq r.
\end{gather*}
\end{Lemma}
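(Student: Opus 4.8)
The plan is to show that $\Gamma_{k_{\beta}}$ differs from $\gamma_{k_{\beta}}$ only by a quantity built entirely out of the single block metric $G_{\beta}$, hence depending only on the group of variables ${\bf x}^{\beta}$. Once this is established, applying $\partial_{j_{\alpha}}$ with $\alpha \neq \beta$ annihilates the $G_{\beta}$-part and turns the condition $\partial_{j_{\alpha}}\Gamma_{k_{\beta}}=0$ into the generalized Robertson condition $\partial_{j_{\alpha}}\gamma_{k_{\beta}}=0$, which is exactly the claimed equivalence.

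First I would use the block-diagonal form $g_{i_{\gamma}j_{\gamma}}=\frac{\det S}{s^{\gamma 1}}(G_{\gamma})_{i_{\gamma}j_{\gamma}}$ to factor the determinant as $|g|=\prod_{\gamma=1}^{r}|g_{\gamma}|$ with $|g_{\gamma}|=(\det S/s^{\gamma 1})^{l_{\gamma}}|G_{\gamma}|$. Using $\sum_{\gamma}l_{\gamma}=n$ this gives
\[
\log|g|=n\log(\det S)-\sum_{\gamma=1}^{r}l_{\gamma}\log s^{\gamma 1}+\sum_{\gamma=1}^{r}\log|G_{\gamma}|.
\]
Differentiating by $\partial_{k_{\beta}}$ and recalling that $s^{\beta 1}$ is independent of ${\bf x}^{\beta}$ and that each $|G_{\gamma}|$ depends only on ${\bf x}^{\gamma}$, the first term of $\Gamma_{k_{\beta}}$ reduces to
\[
-\frac{1}{2}\partial_{k_{\beta}}\log|g|=-\frac{n}{2}\partial_{k_{\beta}}\log(\det S)+\frac{1}{2}\sum_{\gamma=1}^{r}l_{\gamma}\partial_{k_{\beta}}\log s^{\gamma 1}-\frac{1}{2}\partial_{k_{\beta}}\log|G_{\beta}|.
\]

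Next I would treat the contraction term. Writing $(g^{\beta})^{h_{\beta}p_{\beta}}=\frac{s^{\beta 1}}{\det S}(G^{\beta})^{h_{\beta}p_{\beta}}$ and using $\partial_{h_{\beta}}s^{\beta 1}=0$, the Leibniz rule splits $\partial_{h_{\beta}}(g^{\beta})^{h_{\beta}p_{\beta}}$ into a $\det S$-part and a pure $G_{\beta}$-part. Contracting the $\det S$-part against $(g_{\beta})_{k_{\beta}p_{\beta}}=\frac{\det S}{s^{\beta 1}}(G_{\beta})_{k_{\beta}p_{\beta}}$ and using $\sum_{p_{\beta}}(G_{\beta})_{k_{\beta}p_{\beta}}(G^{\beta})^{h_{\beta}p_{\beta}}=\delta^{h_{\beta}}_{k_{\beta}}$ produces exactly $-\partial_{k_{\beta}}\log(\det S)$, while the remaining part is $\sum_{p_{\beta},h_{\beta}}(G_{\beta})_{k_{\beta}p_{\beta}}\partial_{h_{\beta}}(G^{\beta})^{h_{\beta}p_{\beta}}$, a function of ${\bf x}^{\beta}$ alone.

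Assembling the two pieces, the $\det S$-contributions combine into $-\frac{n-2}{2}\partial_{k_{\beta}}\log(\det S)$, and expanding the definition of $\gamma_{k_{\beta}}$ (again using $\partial_{k_{\beta}}\log s^{\beta 1}=0$) I would recognize the identity
\[
\Gamma_{k_{\beta}}=\gamma_{k_{\beta}}-\frac{1}{2}\partial_{k_{\beta}}\log|G_{\beta}|-\sum_{p_{\beta},h_{\beta}}(G_{\beta})_{k_{\beta}p_{\beta}}\partial_{h_{\beta}}(G^{\beta})^{h_{\beta}p_{\beta}},
\]
in which the last two terms depend only on ${\bf x}^{\beta}$. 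Applying $\partial_{j_{\alpha}}$ for $\alpha\neq\beta$ then yields $\partial_{j_{\alpha}}\Gamma_{k_{\beta}}=\partial_{j_{\alpha}}\gamma_{k_{\beta}}$, which establishes the equivalence. I do not expect any genuine obstacle here: the computation is routine once the determinant is factored, and the only care required is the bookkeeping of which cofactors and block quantities are independent of which groups of variables, together with the inverse-metric contraction identity $\sum_{p_{\beta}}(G_{\beta})_{k_{\beta}p_{\beta}}(G^{\beta})^{h_{\beta}p_{\beta}}=\delta^{h_{\beta}}_{k_{\beta}}$.
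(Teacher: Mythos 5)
Your proof is correct, and it is essentially the paper's own argument: the paper states this lemma without proof, asserting only that it follows from "a standard calculation," and your computation is precisely that calculation. The key identity $\Gamma_{k_{\beta}} = \gamma_{k_{\beta}} - \tfrac{1}{2}\partial_{k_{\beta}}\log|G_{\beta}| - \sum_{p_{\beta},h_{\beta}}(G_{\beta})_{k_{\beta}p_{\beta}}\partial_{h_{\beta}}\big(G^{\beta}\big)^{h_{\beta}p_{\beta}}$ checks out exactly as you derive it (including the combination $-\tfrac{n}{2}+1=-\tfrac{n-2}{2}$ multiplying $\partial_{k_{\beta}}\log\det S$, which matches the exponent in the definition of $\gamma_{k_{\beta}}$), and since the correction terms depend only on ${\bf x}^{\beta}$, applying $\partial_{j_{\alpha}}$ with $\alpha\neq\beta$ gives $\partial_{j_{\alpha}}\Gamma_{k_{\beta}}=\partial_{j_{\alpha}}\gamma_{k_{\beta}}$ and hence the stated equivalence.
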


We will compute the expressions of the derivatives of the coefficients $A_{(\alpha)}^{\igamma \jgamma}$ and $B_{(\alpha)}^{\jgamma}$ when needed during the calculations, making use of the generalized Killing--Eisenhart equations~(\ref{genKE}).

We begin with the coefficients of the third derivatives in (\ref{rawcomm}), whose vanishing is equivalent to the condition
\begin{gather}\label{Thierry(a)}
\sum_{\igamma = 1_{\gamma}}^{\lgamma}\big(A_{(\alpha)}^{\igamma \jgamma}\partial_{\igamma}A_{(\beta)}^{\kepsilon \pepsilon}-A_{(\beta)}^{\igamma \jgamma}\partial_{\igamma}A_{(\alpha)}^{\kepsilon \pepsilon}\big)=0.
\end{gather}
We shall see shortly that in analogy with the St\"ackel case, the vanishing of these coefficients does not require the generalized Robertson conditions and holds true for all Painlev\'e metrics. When the expression~(\ref{Killingtensor}) of the Killing tensors $\big(K_{(\alpha)}^{ij}\big)$ is substituted into the condition~(\ref{Thierry(a)}), the latter reduces to
\begin{gather}
\sum_{\igamma = 1_{\gamma}}^{\lgamma}\rho_{\alpha \gamma}\frac{s^{\gamma 1}}{\det S}G^{\igamma \jgamma}\left(\partial_{\igamma}\left(\rho_{\beta \epsilon}\frac{s^{\epsilon 1}}{\det S}\right)\big(G^{\epsilon}\big)^{\kepsilon \pepsilon}+\rho_{\beta \epsilon}\frac{s^{\epsilon 1}}{\det S}\partial_{\igamma}\big(G^{\epsilon}\big)^{\kepsilon \pepsilon}\right)\nonumber\\
\qquad {} = \sum_{\igamma = 1_{\gamma}}^{\lgamma}\rho_{\beta \gamma}\frac{s^{\gamma 1}}{\det S}G^{\igamma \jgamma}\left(\partial_{\igamma}\left(\rho_{\alpha \epsilon}\frac{s^{\epsilon 1}}{\det S}\right)\big(G^{\epsilon}\big)^{\kepsilon \pepsilon}+\rho_{\alpha \epsilon}\frac{s^{\epsilon 1}}{\det S}\partial_{\igamma}\big(G^{\epsilon}\big)^{\kepsilon \pepsilon}\right).\label{3rdorder}
\end{gather}
We now distinguish between the cases $\gamma=\epsilon$ and $\gamma\neq \epsilon$ when analyzing (\ref{3rdorder}). If $\gamma=\epsilon$, then the derivatives of $G^{\epsilon}$ cancel out in (\ref{3rdorder}) and using (\ref{rhotwo}), the condition (\ref{3rdorder}) reduces to
\[
\sum_{\igamma = 1_{\gamma}}^{\lgamma} \big(G^{\gamma}\big)^{\igamma \jgamma}\big(G^{\gamma}\big)^{\kgamma \pgamma}\left( \frac{s^{\gamma \alpha}}{\det S}\partial_{\igamma}\left(\frac{s^{\gamma \beta}}{\det S}\right)-\frac{s^{\gamma \beta}}{\det S}\partial_{\igamma}\left(\frac{s^{\gamma \alpha}}{\det S}\right)\right)=0.
\]
But the latter is an identity on account of the generalized Killing--Eisenhart equations expressed in the form~(\ref{Stackelcomm}). Likewise, if $\gamma \neq \epsilon$, then using the fact that we have then $\partial_{\igamma}\big(G^{\epsilon}\big)^{\kepsilon \pepsilon}=0$, the condition~(\ref{3rdorder}) becomes
\[
\sum_{\igamma = 1_{\gamma}}^{\lgamma}\big(G^{\gamma}\big)^{\igamma \jgamma}\big(G^{\epsilon}\big)^{\kepsilon \pepsilon}\left( \frac{s^{\gamma \alpha}}{\det S}\partial_{\igamma}\left(\frac{s^{\epsilon \beta}}{\det S}\right)-\frac{s^{\gamma \beta}}{\det S}\partial_{\igamma}\left(\frac{s^{\epsilon \alpha}}{\det S}\right)\right)=0,
\]
which is again an identity by virtue of the generalized Killing--Eisenhart equations expressed in the form (\ref{Stackelcomm}).

We now proceed with the coefficients of the second derivatives in~(\ref{rawcomm}), whose vanishing is equivalent to the conditions
\begin{gather}
\sum_{\igamma=!_{\gamma}}^{\lgamma}\sum_{\jgamma=1_{\gamma}}^{\lgamma}\big(A_{(\alpha)}^{\igamma \jgamma}\partial_{\igamma}\partial_{\jgamma}\big(A_{(\beta)}^{\kepsilon \pepsilon}\big)-A_{(\beta)}^{\igamma \jgamma}\partial_{\igamma}\partial_{\jgamma}\big(A_{(\alpha)}^{\kepsilon \pepsilon}\big)+B_{(\alpha)}^{\jgamma}\partial_{\jgamma}\big(A_{(\beta)}^{\kepsilon \pepsilon}\big)-B_{(\beta)}^{\jgamma}\partial_{\jgamma}\big(A_{(\alpha)}^{\kepsilon \pepsilon}\big)\big)\nonumber\\
\qquad{} +2\sum_{\iepsilon=!_{\epsilon}}^{\lepsilon}\big(A_{(\alpha)}^{\iepsilon (\kepsilon}\partial_{\iepsilon}\big(B_{(\beta)}^{\pepsilon )}\big)-A_{(\beta)}^{\iepsilon (\kepsilon}\partial_{\iepsilon}\big(B_{(\alpha)}^{\pepsilon )}\big)\big)=0,\label{2ndorder}
\end{gather}
and
\begin{gather}\label{2ndordermore}
\sum_{\igamma=!_{\gamma}}^{\lgamma}\big(A_{(\alpha)}^{\igamma (\jgamma}\partial_{\igamma}\big(B_{(\beta)}^{\pepsilon )}\big)-A_{(\beta)}^{\igamma (\jgamma}\partial_{\igamma}\big(B_{(\alpha)}^{\pepsilon )}\big)\big)=0,
\end{gather}
where the round brackets denote symmetrization of indices.

We begin with the condition (\ref{2ndorder}), for which we will need expressions for the first and second derivatives of $A_{(\alpha)}^{\igamma \jgamma}$ and $B_{(\alpha)}^{\jgamma}$. Using (\ref{Killingtensormod}) and (\ref{coeffmod}), we obtain
\[
\partial_{\kepsilon}A_{(\alpha)}^{\igamma \jgamma}=\left(\partial_{\kepsilon}(\rho_{\alpha \gamma}) \frac{s^{\gamma 1}}{\det S}+\rho_{\alpha \gamma}\partial_{\kepsilon}\left(\frac{s^{\gamma 1}}{\det S}\right)\right)\big(G^{\gamma}\big)^{\igamma \jgamma}+\rho_{\alpha \gamma}\frac{s^{\gamma 1}}{\det S}\partial_{\kepsilon}\big(\big(G^{\gamma}\big)^{\igamma \jgamma}\big){\delta^{\gamma}}_{\epsilon}.
\]
Using the generalized Killing--Eisenhart equations (\ref{genKE}), the preceding equation reduces to
\begin{gather}\label{firstA}
\partial_{\kepsilon}A_{(\alpha)}^{\igamma \jgamma}=\rho_{\alpha \epsilon}\partial_{\kepsilon}\left(\frac{s^{\gamma 1}}{\det S}\right)\big(G^{\gamma}\big)^{\igamma \jgamma}\!+\rho_{\alpha \gamma}\frac{s^{\gamma 1}}{\det S}\partial_{\kepsilon}\big(\big(G^{\gamma}\big)^{\igamma \jgamma}\big){\delta^{\gamma}}_{\epsilon}
=\rho_{\alpha \epsilon}\partial_{\kepsilon}\big(\big(g^{\gamma}\big)^{\igamma \jgamma}\big).\!\!\!\!
\end{gather}
Likewise, we obtain the following expressions for the second derivatives of the coefficients $A_{(\alpha)}^{\igamma \jgamma}$ by using again the generalized Killing--Eisenhart equations~(\ref{genKE})
\begin{gather}\label{secondA}
\partial_{\kepsilon}\partial_{\pepsilon}A_{(\alpha)}^{\igamma \jgamma}=\rho_{\alpha \epsilon}\partial_{\kepsilon}\partial_{\pepsilon}\big(\big(g^{\gamma}\big)^{\igamma \jgamma}\big).
\end{gather}
This implies immediately that the first two terms in (\ref{2ndorder}) cancel each other out, that is
\begin{gather}
\sum_{\igamma=1_{\gamma}}^{\lgamma}\sum_{\jgamma=1_{\gamma}}^{\lgamma}\big(A_{(\alpha)}^{\igamma \jgamma}\partial_{\igamma}\partial_{\jgamma}\big(A_{(\beta)}^{\kepsilon \pepsilon}\big)-A_{(\beta)}^{\igamma \jgamma}\partial_{\igamma}\partial_{\jgamma}\big(A_{(\alpha)}^{\kepsilon \pepsilon}\big)\big)\nonumber\\
\qquad{} =\sum_{\igamma=!_{\gamma}}^{\lgamma}\sum_{\jgamma=1_{\gamma}}^{\lgamma}\big[\rho_{\alpha \gamma} \big(g^{\gamma}\big)^{\igamma \jgamma}\rho_{\beta \gamma}\partial_{\igamma}\partial_{\jgamma}(g^{\epsilon})^{\kepsilon \pepsilon}-\rho_{\beta \gamma} \big(g^{\gamma}\big)^{\igamma \jgamma}\rho_{\alpha \gamma}\partial_{\igamma}\partial_{\jgamma}(g^{\epsilon})^{\kepsilon \pepsilon}\big]=0.\!\!\!\!\label{cancel}
\end{gather}
By making use of the preceding remark, by observing that
\begin{gather}\label{BintermsofA}
B_{(\alpha)}^{\jgamma}= -\sum_{\igamma = 1_{\gamma}}^{\lgamma}A_{(\alpha)}^{\igamma \jgamma}\Gamma_{\igamma},
\end{gather}
and by using the expressions (\ref{firstA}) and (\ref{secondA}) for the first and second derivatives of the coefficients $A_{(\alpha)}^{\igamma \jgamma}$, the condition (\ref{secondorder}) becomes
\begin{gather*}
\sum_{\iepsilon = 1_{\epsilon}}^{\lepsilon}\sum_{\hepsilon = 1_{\epsilon}}^{\lepsilon}\big(A_{(\alpha)}^{\iepsilon \kepsilon}A_{(\beta)}^{\hepsilon \pepsilon}+A_{(\alpha)}^{\iepsilon \pepsilon}A_{(\beta)}^{\hepsilon \kepsilon}-A_{(\alpha)}^{\hepsilon \pepsilon}A_{(\beta)}^{\iepsilon \kepsilon}-A_{(\alpha)}^{\hepsilon \kepsilon}A_{(\beta)}^{\iepsilon \pepsilon}\big)\partial_{\iepsilon}\Gamma_{\hepsilon}\\
{} +\!\sum_{\iepsilon = 1_{\epsilon}}^{\lepsilon}\sum_{\hepsilon = 1_{\epsilon}}^{\lepsilon}\!\big(A_{(\alpha)}^{\iepsilon \kepsilon}\partial_{\iepsilon}(A_{(\beta)}^{\hepsilon \pepsilon})+A_{(\alpha)}^{\iepsilon \pepsilon}\partial_{\iepsilon}(A_{(\beta)}^{\hepsilon \kepsilon})-A_{(\beta)}^{\iepsilon \kepsilon}\partial_{\iepsilon}(A_{(\alpha)}^{\hepsilon \pepsilon})-A_{(\beta)}^{\iepsilon \pepsilon}\partial_{\iepsilon}(A_{(\alpha)}^{\hepsilon \kepsilon})\big)\Gamma_{\hepsilon}=0.
\end{gather*}
It is now easily verified that the second double sum in the preceding expression vanishes identically as a consequence of~(\ref{cancel}), and that the first double sum is identically zero upon substitution of the expressions~(\ref{coeffmod}) of the coefficients $A_{(\alpha)}^{\igamma \jgamma}$.

Next we turn our attention to the condition (\ref{2ndordermore}), which we may rewrite, using~(\ref{BintermsofA}), as
\[
\sum_{\igamma=1_{\gamma}}^{\lgamma}\sum_{\hepsilon = 1_{\epsilon}}^{\lepsilon}\big(A_{(\alpha)}^{\igamma \jgamma}A_{(\beta)}^{\hepsilon \pepsilon}-A_{(\alpha)}^{\hepsilon \pepsilon}A_{(\beta)}^{\igamma \jgamma}\big)\big(\partial_{\igamma}\Gamma_{\hepsilon}-\partial_{\hepsilon}\Gamma_{\igamma}\big)=0.
\]
We have
\begin{gather}\label{curlGamma}
\partial_{\igamma}\Gamma_{\hepsilon}-\partial_{\hepsilon}\Gamma_{\igamma}=-\partial_{\igamma}\partial_{\hepsilon}\left(\log \frac{s^{\epsilon 1}}{\det S}\right)+\partial_{\hepsilon}\partial_{\igamma}\left(\log \frac{s^{\gamma 1}}{\det S}\right)=0,
\end{gather}
since the factors of $\det S$ cancel out in the logarithmic derivatives, and since $s^{\epsilon 1}$ (resp.~$s^{\gamma 1}$) is independent of the groups of variables ${\bf x}^{\epsilon}$ (resp.~${\bf x}^{\gamma}$).
Finally, we must show that the coefficients of the first derivatives in~(\ref{rawcomm}) are identically zero. We shall see that the analysis of these coefficients is slightly more involved than that of the second and third derivatives, and that the argument needed to prove their vanishing makes use of the generalized Robertson conditions.

Using (\ref{BintermsofA}) the vanishing of the coefficients of the first derivatives in~(\ref{rawcomm}) is seen to be equivalent to
\begin{gather}
\sum_{\igamma = 1_{\gamma}}^{\lgamma}\sum_{\jgamma = 1_{\gamma}}^{\lgamma}\sum_{\kepsilon =1_{\epsilon}}^{\lepsilon}\big(A_{(\alpha)}^{\igamma \jgamma}A_{(\beta)}^{\kepsilon \pepsilon}-A_{(\alpha)}^{\kepsilon \pepsilon}A_{(\beta)}^{\igamma \jgamma}\big)\partial_{\igamma}\partial_{\jgamma}\Gamma_{\kepsilon}\nonumber\\
\qquad{} +\sum_{\igamma = 1_{\gamma}}^{\lgamma}\sum_{\jgamma = 1_{\gamma}}^{\lgamma}\sum_{\kepsilon =1_{\epsilon}}^{\lepsilon}\big(\big[A_{(\alpha)}^{\igamma \jgamma}\partial_{\igamma}\partial_{\jgamma}A_{(\beta)}^{\kepsilon \pepsilon}-A_{(\beta)}^{\igamma \jgamma}\partial_{\igamma}\partial_{\jgamma}A_{(\alpha)}^{\kepsilon \pepsilon}\big]\Gamma_{\kepsilon}\nonumber\\
\qquad{} +2\big[A_{(\alpha)}^{\igamma \jgamma}\partial_{\igamma}A_{(\beta)}^{\kepsilon \pepsilon}-A_{(\beta)}^{\igamma \jgamma}\partial_{\igamma}A_{(\alpha)}^{\kepsilon \pepsilon}\big]\big)\partial_{\jgamma}\Gamma_{\kepsilon}\nonumber\\
\qquad{} -\sum_{\igamma = 1_{\gamma}}^{\lgamma}\sum_{\jgamma = 1_{\gamma}}^{\lgamma}\sum_{\kepsilon =1_{\epsilon}}^{\lepsilon}\sum_{\mgamma =1_{\gamma}}^{\lgamma}\big(A_{(\alpha)}^{\mgamma \jgamma}\partial_{\jgamma}\big(A_{(\beta)}^{\kepsilon \pepsilon}\Gamma_{\kepsilon}\big)-A_{(\beta)}^{\mgamma \jgamma}\partial_{\jgamma}\big(A_{(\alpha)}^{\kepsilon \pepsilon}\Gamma_{\kepsilon}\big)\big)=0.\label{1storderterm}
\end{gather}
The second triple sum in (\ref{1storderterm}) vanishes identically on account of (\ref{Thierry(a)}) and (\ref{cancel}). Likewise, using again (\ref{Thierry(a)}) and the expressions (\ref{coeffmod}) for $A_{(\alpha)}^{\igamma \jgamma}$, the condition (\ref{1storderterm}) reduces to
\begin{gather}\label{1storderfinal}
\sum_{\igamma = 1_{\gamma}}^{\lgamma}\sum_{\jgamma = 1_{\gamma}}^{\lgamma}\sum_{\kepsilon =1_{\epsilon}}^{\lepsilon}(\rho_{\alpha \gamma}\rho_{\beta \epsilon}-\rho_{\alpha \epsilon}\rho_{\beta \gamma}) \big(g^{\gamma}\big)^{\igamma \jgamma}(g^{\epsilon})^{\kepsilon \pepsilon}\big(\partial_{\igamma}\partial_{\jgamma}\Gamma_{\kepsilon}-\Gamma_{\igamma}(\partial_{\jgamma}\Gamma_{\kepsilon})\big)=0.
\end{gather}
In analogy with the notation used in \cite{BCR2-2002}, we introduce the tensor $\big({C^{i}}_{j}\big)$ defined by
\begin{gather}\label{defC}
{C^{\igamma}}_{\kepsilon}=\sum_{\jgamma = 1_{\gamma}}^{\lgamma}(\rho_{\alpha \gamma}\rho_{\beta \epsilon}-\rho_{\alpha \epsilon}\rho_{\beta \gamma})\big(g^{\gamma}\big)^{\igamma \jgamma}\partial_{\jgamma}
\Gamma_{\kepsilon}.
\end{gather}
The final steps in the proof will be to show that (\ref{1storderfinal}) is satisfied if an only if the tensor $\big({C^{i}}_{j}\big)$ has zero divergence, that is
\begin{gather*}
\sum_{\gamma =1}^{r}\sum_{\igamma = 1_{\gamma}}^{\lgamma}\nabla_{\igamma}{C^{\igamma}}_{\kepsilon}=0,
\end{gather*}
and that the generalized Robertson conditions are equivalent to ${C^{\igamma}}_{\kepsilon}=0$. These are the analogues for Painlev\'e metrics of the steps followed in~\cite{BCR2-2002} for the proof of the corresponding result for the special case of St\"ackel metrics satisfying the classical Robertson conditions.

We have
\begin{gather}
\sum_{\gamma =1}^{r}\sum_{\igamma = 1_{\gamma}}^{\lgamma}\nabla_{\igamma}{C^{\igamma}}_{\kepsilon}=\sum_{\gamma =1}^{r}\sum_{\igamma = 1_{\gamma}}^{\lgamma}\left(\partial_{\igamma}{C^{\igamma}}_{\kepsilon}+\sum_{\delta=1}^{r}\sum_{\hdelta =1_{\delta}}^{\ldelta}{\Gamma^{\igamma}}_{\jgamma \hdelta}{C^{\hdelta}}_{\kepsilon}\right.\nonumber\\
\left.\hphantom{\sum_{\gamma =1}^{r}\sum_{\igamma = 1_{\gamma}}^{\lgamma}\nabla_{\igamma}{C^{\igamma}}_{\kepsilon}=}{}
-\sum_{\delta=1}^{r}\sum_{\pdelta =1_{\delta}}^{\ldelta}{\Gamma^{\pdelta}}_{\jgamma \kepsilon}{C^{\hdelta}}_{\kepsilon}\right).\label{nablaC}
\end{gather}
Substituting into the expression (\ref{nablaC}) of the divergence the definition~(\ref{defC}) of the tensor~$\big({C^{i}}_{j}\big)$ and the expressions (\ref{Christoffel}) and (\ref{Christoffelmixed}) that were computed above for the Christoffel symbols of Painlev\'e metrics, we obtain
\begin{gather}
\sum_{\gamma =1}^{r}\sum_{\igamma = 1_{\gamma}}^{\lgamma}\nabla_{\igamma}{C^{\igamma}}_{\kepsilon} =\sum_{\gamma =1}^{r}\sum_{\igamma = 1_{\gamma}}^{\lgamma}\sum_{\pgamma =1_{\gamma}}^{\lgamma}(\rho_{\alpha \gamma}\rho_{\beta \epsilon}-\rho_{\alpha \epsilon}\rho_{\beta \gamma})\Bigg[(\partial_{\igamma}\big(g^{\gamma}\big)^{\igamma \pgamma})\partial_{\pgamma}\Gamma_{\kepsilon}\nonumber\\
\hphantom{\sum_{\gamma =1}^{r}\sum_{\igamma = 1_{\gamma}}^{\lgamma}\nabla_{\igamma}{C^{\igamma}}_{\kepsilon}=}{}
+\big(g^{\gamma}\big)^{\igamma \pgamma}\partial_{\igamma}\partial_{\pgamma}\Gamma_{\kepsilon} -\big(g^{\gamma}\big)^{\igamma \pgamma}(\partial_{\pgamma}\Gamma_{\kepsilon})\left(\partial_{\igamma}\log \frac{s^{\epsilon 1}}{\det S}\right)\nonumber\\
\hphantom{\sum_{\gamma =1}^{r}\sum_{\igamma = 1_{\gamma}}^{\lgamma}\nabla_{\igamma}{C^{\igamma}}_{\kepsilon}=}{}
-(\partial_{\pgamma}\Gamma_{\kepsilon})\left(\sum_{\hgamma =1_{\gamma}}^{\lgamma}\big(g^{\gamma}\big)^{\hgamma \pgamma}\Gamma_{\hgamma}+\partial_{\jgamma}\big(g^{\gamma}\big)^{\jgamma \pgamma}\right)\nonumber\\
\hphantom{\sum_{\gamma =1}^{r}\sum_{\igamma = 1_{\gamma}}^{\lgamma}\nabla_{\igamma}{C^{\igamma}}_{\kepsilon}=}{}
+\frac{1}{2}\sum_{\hepsilon =1_{\epsilon}}^{\lepsilon}\sum_{\kepsilon=1_{\epsilon}}^{\lepsilon}(g^{\epsilon})^{\pepsilon \hepsilon}(\partial_{\igamma} g_{\kepsilon \hepsilon})\big(g^{\gamma}\big)^{\igamma \jgamma}\big( \partial_{\jgamma}\Gamma_{\pepsilon}+\partial_{\pepsilon}\Gamma_{\jgamma}\big)\Bigg].\label{nablaCint}
\end{gather}
Substituting into (\ref{nablaCint}) the expression (\ref{Painleveblock}) of the block components of the metric, we get
\begin{gather}
\sum_{\gamma =1}^{r}\sum_{\igamma = 1_{\gamma}}^{\lgamma}\nabla_{\igamma}{C^{\igamma}}_{\kepsilon} =\sum_{\gamma =1}^{r}\sum_{\igamma = 1_{\gamma}}^{\lgamma}\sum_{\pgamma =1_{\gamma}}^{\lgamma}(\rho_{\alpha \gamma}\rho_{\beta \epsilon}-\rho_{\alpha \epsilon}\rho_{\beta \gamma})\big(g^{\gamma}\big)^{\igamma \pgamma}\bigg[\big(\partial_{\igamma}\partial_{\pgamma}\Gamma_{\kepsilon}-\Gamma_{\igamma}\partial_{\pgamma}\Gamma_{\kepsilon}\big)\nonumber\\
\hphantom{\sum_{\gamma =1}^{r}\sum_{\igamma = 1_{\gamma}}^{\lgamma}\nabla_{\igamma}{C^{\igamma}}_{\kepsilon}}{} +\frac{1}{2}\left(\partial_{\igamma}\left(\log \frac{s^{\epsilon 1}}{\det S}\right)\right)\big(\partial_{\kepsilon} \Gamma_{\pgamma}- \partial_{\pgamma}\Gamma_{\kepsilon}\big)\bigg].\label{nablaCeff}
\end{gather}
We now remark that the first derivative terms $\partial_{\kepsilon} \Gamma_{\pgamma}- \partial_{\pgamma}\Gamma_{\kepsilon}$ in (\ref{nablaCeff}) vanish identically on account of the identities~(\ref{curlGamma}), so that we finally obtain
\begin{gather}
\sum_{\gamma =1}^{r}\sum_{\igamma = 1_{\gamma}}^{\lgamma}\nabla_{\igamma}{C^{\igamma}}_{\kepsilon}=\sum_{\gamma =1}^{r}\sum_{\igamma = 1_{\gamma}}^{\lgamma}\sum_{\pgamma =1_{\gamma}}^{\lgamma}(\rho_{\alpha \gamma}\rho_{\beta \epsilon}-\rho_{\alpha \epsilon}\rho_{\beta \gamma})\big(g^{\gamma}\big)^{\igamma \pgamma}\nonumber\\
\hphantom{\sum_{\gamma =1}^{r}\sum_{\igamma = 1_{\gamma}}^{\lgamma}\nabla_{\igamma}{C^{\igamma}}_{\kepsilon}=}{}\times \big[\big(\partial_{\igamma}\partial_{\pgamma}\Gamma_{\kepsilon}-\Gamma_{\igamma}\partial_{\pgamma}\Gamma_{\kepsilon}\big)\big].\label{nablaCfinal}
\end{gather}
It therefore follows from (\ref{nablaCfinal}) that (\ref{1storderfinal}) will hold if and only if the tensor ${C^{\igamma}}_{\kepsilon}$ is divergence-free. Recapitulating our steps, we have shown that the vanishing of the commutator $[\Delta_{K_{(\alpha)}}\!{,}\Delta_{K_{(\beta)}}]\!$ is equivalent to the vanishing of the Poisson bracket $ \{K_{(\alpha)}, K_{(\beta)}\}$ and that of the divergence of~${C^{\igamma}}_{\kepsilon}$, that is,
\[
\big[\Delta_{K_{(\alpha)}},\Delta_{K_{(\beta)}}\big]=0 \iff \big\{K_{(\alpha)}, K_{(\beta)}\big\}=0\qquad \text{and} \qquad \sum_{\gamma =1}^{r}\sum_{\igamma = 1_{\gamma}}^{\lgamma}\nabla_{\igamma}{C^{\igamma}}_{\kepsilon}=0.
\]
The proof of the commutation relations~(\ref{commrelKill}) is concluded by observing that the generalized Robertson conditions~(\ref{GenRobertson}) are equivalent to $\big({C^{i}}_{j}\big)=0$ thanks to Lemma~\ref{RC-NewFormulation}. Finally, it is easily shown that the operators $T_{\alpha}$, $1\leq \alpha \leq r$ defined by~(\ref{eigenvalueeq}) are identical to the operators~$\Delta_{K_{(\alpha)}}$ defined by (\ref{KLaplacian}) by observing that the Killing tensor $\big(K_{(\alpha)}^{ij}\big)$ is block-diagonal, with components given by
\[
K_{(\alpha)}^{i_{\beta}j_{\beta}}=\frac{s^{\beta \alpha}}{\det S}\big(G^{\beta}\big)^{i_{\beta}j_{\beta}},\qquad K_{(\alpha)}^{i_{\beta}j_{\gamma}}=0\qquad \text{for}\quad \beta \neq \gamma.
\]

\subsection{Proof of Theorem \ref{maintheorem3}}
We start from the expression (\ref{Yamabe3}) of the Laplace--Beltrami operator for the conformally rescaled Painlev\'e metrics. We now rescale $v$ defined by (\ref{rescaledu}) according to
\begin{gather}\label{rescalv}
v=R w.
\end{gather}
So that the Helmholtz equation (\ref{Yamabe3}) when expressed in terms of $w$ becomes
\begin{gather}
\sum_{\beta=1}^{r}\left(\frac{s^{\beta 1}}{\det S}\right) \left(-\Delta_{G_{\beta}}+\sum_{\jbeta =1_{\beta}}^{\lbeta}\gamma^{\jbeta}\partial_{\jbeta}-2\sum_{\ibeta = 1_{\beta}}^{\lbeta}\sum_{\jbeta = 1_{\beta}}^{\lbeta}\big(G^{\beta}\big)^{\ibeta \jbeta}(\partial_{\ibeta}\log R) \partial_{\jbeta}\right)w\nonumber\\
\qquad{}  +\left(q_{g,c}-\lambda c^{4}+\frac{\Delta_{g}R}{R}\right)w=0.\label{Helmw}
\end{gather}
The idea behind $R$-separability is that one choose the conformal factor $c$ appearing in the conformal rescaling~(\ref{confresc}) and the scaling factor~$R$ appearing in~\ref{rescalv}) in such a way that the Helmholtz equation, when expressed in terms of $w$, becomes separable in the groups of variables~${\bf x}^{\beta}$, $1\leq \beta \leq r$ under a certain condition on the conformal factor~$c$. This is achieved in two steps, the first one being to choose the scaling factor~$R$ so as to eliminate the first derivative terms~$\gamma^{\jbeta}\partial_{\jbeta}w$ in~(\ref{Helmw}). This is equivalent to $R$ solving the overdetermined system of PDEs given by
\begin{gather}\label{elim}
2\sum_{\ibeta = 1_{\beta}}^{\lbeta}\big(G^{\beta}\big)^{\ibeta \jbeta}(\partial_{\ibeta}\log R)=\gamma^{\jbeta},\qquad 1_{\beta}\leq \jbeta \leq \lbeta .
\end{gather}
Using the expression (\ref{gammaup}) of the coefficients $\gamma^{\jbeta}$, we see that the system~(\ref{elim}) admits a~so\-lu\-tion~$R$ given by
\begin{gather*}
R=\frac{\big(s^{11}\big)^\frac{l_1}{4} \cdots \big(s^{r1}\big)^{\frac{l_r}{4}}}{(\det S)^{\frac{n-2}{4}}},
\end{gather*}
which is precisely the scaling factor $R$ given by~(\ref{Rdef}), and which we shall work with from now on. We may now compute the expression of $R^{-1} \Delta_{g}R$ appearing in~(\ref{Helmw}), using the expression~(\ref{Laplacian}) of the Laplace--Beltrami operator for Painlev\'e metrics and the fact that $R$ solves (\ref{elim}). We obtain after some calculations
\begin{gather}
\frac{\Delta_{g}R}{R}=\sum_{\beta=1}^{r}\left(\frac{s^{\beta 1}}{\det S}\right)\left[\frac{\Delta_{G_{\beta}}R}{R}+\sum_{\jbeta =1_{\beta}}^{\lbeta}\gamma^{\jbeta}\frac{\partial_{\jbeta}R}{R}\right]\nonumber\\
\hphantom{\frac{\Delta_{g}R}{R}}{}=\sum_{\beta=1}^{r}\frac{s^{\beta 1}}{\det S}\left[\frac{1}{2}\sum_{\jbeta =1_{\beta}}^{\lbeta}\partial_{\jbeta}\gamma^{\jbeta}-\frac{1}{4}\sum_{\ibeta = 1_{\beta}}^{\lbeta}\sum_{\jbeta = 1_{\beta}}^{\lbeta}(G_{\beta})_{\ibeta \jbeta}\gamma^{\ibeta}\gamma^{\jbeta}\right.\nonumber\\
\left.\hphantom{\frac{\Delta_{g}R}{R}=}{}+\frac{1}{4}\sum_{\jbeta =1_{\beta}}^{\lbeta}\partial_{\jbeta}\big(\log|G_{\beta}|\gamma^{\jbeta}\big)\right].\label{loglap}
\end{gather}
When the expression (\ref{loglap}) of $R^{-1} \Delta_{g}R$ is substituted into the Helmholtz equation (\ref{Helmw}) satisfied by $w$, the equation becomes
\begin{gather}
\sum_{\beta=1}^{r}\left(\frac{s^{\beta 1}}{\det S}\right)\left[-\Delta_{G_{\beta}}+\frac{1}{2}\sum_{\jbeta = 1_{\beta}}\partial_{\jbeta}\gamma^{\jbeta} -\frac{1}{4}\sum_{\ibeta = 1_{\beta}}^{\lbeta}\sum_{\jbeta = 1_{\beta}}^{\lbeta}(G_{\beta})_{\ibeta \jbeta}\gamma^{\ibeta}\gamma^{\jbeta}\right.\nonumber\\
\left.\qquad{} +\frac{1}{4}\sum_{\jbeta =1_{\beta}}^{\lbeta}\partial_{\jbeta}\big(\log|G_{\beta}|\gamma^{\jbeta}\big)\right]w+\big(q_{g,c}-\lambda c^{4}\big)w=0.\label{Helmwmod}
\end{gather}
To second step towards $R$-separability is to choose the conformal factor $c$ in such a way that the equation (\ref{Helmwmod}) for $w$ becomes manifestly separable in the groups of variables ${\bf x}^{\alpha}$, $1\leq \alpha \leq r$, which is achieved by requiring that $c$ satisfy the scalar nonlinear PDE
\begin{gather}
q_{g,c}-\lambda c^{4} = -a_1 + \sum_{\beta=1}^{r}\left(\frac{s^{\beta 1}}{\det S}\right)\left[-\frac{1}{2}\sum_{\jbeta = 1_{\beta}}^{\lbeta}\partial_{\jbeta}\gamma^{\jbeta} +\frac{1}{4}\sum_{\ibeta = 1_{\beta}}^{\lbeta}\sum_{\jbeta = 1_{\beta}}^{\lbeta}(G_{\beta})_{\ibeta \jbeta}\gamma^{\ibeta}\gamma^{\jbeta}\right.\nonumber\\
\left. \hphantom{q_{g,c}-\lambda c^{4} =}{} -\frac{1}{4}\sum_{\jbeta =1_{\beta}}^{\lbeta}\partial_{\jbeta}\big(\log|G_{\beta}|\gamma^{\jbeta}\big)+\phi_{\beta}\right],\label{eqncor}
\end{gather}
where $a_1$ is a constant and the $\phi_{\beta}=\phi_{\beta}\big({\bf x}^{\beta}\big)$, $1\leq \beta \leq r$, are arbitrary smooth functions of the group of variables ${\bf x}^{\beta}$. Using the definition (\ref{Yamabe2}) of $q_{g,c}$, we see that the PDE~(\ref{eqncor}) may indeed be rewritten in the form~(\ref{eqnc}). Note that the equation (\ref{Helmwmod}) takes the form (\ref{sepeqw}). This concludes the proof of Theorem~\ref{maintheorem3}.

\section{Perspectives and open problems}\label{Perspectives}

While the main results of our paper provide a convenient starting point from which to initiate the study of the anisotropic Calder\'on problem in manifolds with boundary endowed with Painlev\'e metrics, there are a number of questions that are left open in the above analysis and that call for further investigation, not just from a separation of variables point of view, but also in a more general differential geometric context. In particular, we would like to mention the following:
\begin{itemize}\itemsep=0pt
\item It would be worthwhile to obtain more examples in closed form of Painlev\'e metrics which are not of St\"ackel type and for which the generalized Robertson conditions (\ref{GenRobertson}) are satisfied. Given that the notion of a Painlev\'e metric can readily be formulated in an arbitrary signature and in particular in Lorentzian signature, it would be of particular interest to construct examples that would be solutions of the Einstein vacuum equations in four or higher dimensions.
\item One should be able to obtain an intrinsic characterization of the separable conformal deformations and $R$-separability of the Painlev\'e metrics, considered in Theorem \ref{maintheorem3}. Conformal Killing tensors should be a key component of such a characterization \cite{CR2006, KaMi1984}.
\item While Painlev\'e metrics are a generalization of St\"ackel metrics, which admit orthogonal local coordinates by definition, the St\"ackel form admits an extension to non-orthogonal coordinates \cite{KaMi1981}, an important Lorentzian example of which is given by the Kerr metric in General Relativity. It would be of interest to similarly extend the notion of a Painlev\'e metric to a non-orthogonal setting, where the expression of the metric (\ref{Painleveform}) defining the Painlev\'e form would be generalized to allow for the presence of cross terms between pairs groups of coordinates ${\bf x}^{\alpha}$ and ${\bf x}^{\beta}$ for $\alpha\neq \beta$. Again, some partial results in this direction, which apply to the $4$-dimensional Lorentzian case, appear in \cite{HaMa1978}, and suggest that non-orthogonal separability in this generalized sense would imply the existence of commuting Killing vectors, as is the case with the St\"ackel form \cite{KaMi1981, Woo1975}.
\item The considerations of the present paper are all local, but there exist global classification results for manifolds admitting St\"ackel metrics (see \cite{Ki1997} and the references therein). For example, it is shown that in dimension two, a compact manifold which admits a sufficiently generic St\"ackel metric must be diffeomorphic to the $2$-sphere, the real projective plane, the $2$-torus or the Klein bottle. It would be of interest to obtain analogues of these results for Painlev\'e metrics which are not St\"ackel.
\item It would be of interest to characterize in analogy with the St\"ackel case the scalar or vector potentials which are compatible with the separation into groups of variables of the Helmholtz equation in the class of Painlev\'e metrics.
\end{itemize}

Some of the above questions appear to be challenging, but progress on them would help to improve our understanding of the geometries in which separation of variables can be achieved in a broader and less restrictive sense than complete separation into ordinary differential equations.

\subsection*{Acknowledgements}
We thank Willard Miller, Jr., Askold Perelomov and Petar Topalov for having kindly answered our queries on Painlev\'e metrics. We are also grateful to Claudia Chanu for her helpful replies to our questions pertaining to $R$-separability and the Robertson conditions for St\"ackel metrics. Finally we thank the referees for their detailed comments and constructive suggestions. Thierry Daud\'e's research is supported by the JCJC French National Research Projects Horizons, No.~ANR-16-CE40-0012-01, Niky Kamran's research is supported by NSERC grant RGPIN 105490-2018 and Francois Nicoleau's research is supported by the French National Research Project GDR Dynqua.

\pdfbookmark[1]{References}{ref}
\LastPageEnding

\end{document}